\tikzstyle{new style 0}=[fill=white, draw=black, shape=circle]
\tikzstyle{new style 1}=[fill=red, draw=black, shape=circle]
\tikzstyle{new style 2}=[fill=green, draw=black, shape=circle]
\tikzstyle{new style 3}=[fill=magenta, draw=black, shape=circle, {scale=0.3}]
\tikzstyle{new edge style 0}=[-, fill=none, draw=red]
\tikzstyle{new edge style 1}=[-, draw=blue]
\tikzstyle{new edge style 2}=[{|-|}]
\tikzstyle{new edge style 3}=[-, draw={rgb,255: red,100; green,0; blue,100}]
\tikzstyle{new edge style 4}=[<->, draw={rgb,255: red,167; green,167; blue,167}]
\tikzstyle{new edge style 5}=[-, dashed]
\tikzstyle{new edge style 6}=[-, dashed, draw=blue]
\tikzstyle{new edge style 7}=[draw=blue, ->]
\tikzstyle{new edge style 8}=[draw=blue, <-]
\tikzstyle{new edge style 9}=[draw=blue, dashed, ->]
\tikzstyle{new edge style 10}=[draw=blue, dashed, <-]
\tikzstyle{new edge style 11}=[dashed, ->]
\tikzstyle{new edge style 12}=[dashed, <-]
\tikzstyle{new edge style 13}=[-, dashed, draw={rgb,255: red,128; green,128; blue,128}]
\tikzstyle{new edge style 14}=[draw=red, <-]
\renewcommand\vol{\mathop{\mathrm{vol}}}
\renewcommand\bm{\mathbf{m}}
\newcommand{\thalf}{\tfrac{1}{2}}
\def\@fpheader{\ }
\renewcommand{\cO}{{\mathcal{O}}}
\def\({\left(} \def\){\right)}
\def\[{\left[} \def\]{\right]}
\newcommand{\nhb}{\hat{\nabla}^{\bot}}
\newcommand{\hga}{\hat\gamma}
\newcommand{\md}{\mathcal{D}}
\newcommand{\hp}{\hat{\Romanbar{II}}}
\newcommand{\ve}{\varepsilon}
\renewcommand{\S}{\Sigma}
\newcommand{\II}{\Romanbar{II}}
\title{
Effective theory for fusion of conformal defects
}
\author{Petr Kravchuk, Alex Radcliffe, Ritam Sinha}
\affiliation{
Department of Mathematics, King's College London, Strand, London, WC2R 2LS, UK
}
\date{}
\abstract{We construct an effective field theory for fusion of conformal defects of any codimension in $d\geq 3$ conformal field theories. We fully solve the constraints of Weyl invariance for defects of arbitrary shape on general curved bulk manifolds and discuss the simplifications that arise for spherical defects on the conformal sphere. As applications, we study the structure of cusp anomalous dimensions in the anti-parallel lines limit and derive high-energy spin-dependent asymptotics for the one-point functions of bulk operators. We point out the potential importance of defects that break transverse rotations and initiate a classification of their Weyl anomalies.
}
\begin{document}

\maketitle
\newpage
\section{Introduction}

In this paper we study the limit of CFT correlation functions in which two conformal defects are brought close together, i.e.\ the problem of ``fusion'' of conformal defects. While the analogous problem for topological defects has a long history, the conformal case has received much less attention. In $d=2$ dimensions, it has been studied in~\cite{Bachas:2007td, Bachas:2013ora, Konechny:2015qla}. In $d\geq 3$, fusion of conformal defects (or correlation functions of two defects) was analyzed in~\cite{Soderberg:2021kne, Rodriguez-Gomez:2022gbz, SoderbergRousu:2023zyj}.\footnote{Furthermore, the papers~\cite{Diatlyk:2024qpr,Diatlyk:2024zkk} appeared while this paper was in preparation, see the end of this section for further comments.} For the most part, these works focused on calculations in specific theories. The main goal of this work is to study the general properties of fusion of conformal defects and to explore some initial applications.

To be more concrete, we consider a correlation function
\be\label{eq:maincorrelator}
	\<\cD_1\cD_2\cdots\>
\ee
with insertions of two conformal defects $\cD_1$ and $\cD_2$. In this paper, we use $\<\cdots\>$ to denote partition functions normalised so that the partition function with no insertions is $1$. In particular, we do not normalise by defect partition functions in any way.

For simplicity, we assume that no defect local operators are present on $\cD_i$, although the forthcoming discussion can be easily adapted to the more general case. The dots ``$\cdots$'' represent any other insertions which stay $O(1)$ distance away from the defects $\cD_i$. We do not make any assumptions about the shape or the topology of the defects $\cD_i$, except that they do not intersect and are homotopic to each other, see figure~\ref{fig:fusionexamples}. Our goal is to determine the behaviour of the correlation function~\eqref{eq:maincorrelator} in the limit where $\cD_1$ and $\cD_2$ approach each other and eventually fully overlap. In other words, if $L$ is the typical separation between $\cD_1$ and $\cD_2$, we are interested in the limit $L\to 0$.

Our main assumption is that the appropriate language for this problem is that of a low-energy effective theory (EFT) on top of a new conformal defect $\cD_\Sigma$,
\be\label{eq:mainEFT}
	\<\cD_1\cD_2\cdots\> \sim \<\cD_\Sigma [e^{-S_\text{eff}}]\cdots\>,
\ee
where $S_\text{eff}$ describes the perturbation of the conformal defect $\cD_\Sigma$ by identity and the irrelevant operators. The effective action $S_\text{eff}$ depends on the relative configuration of $\cD_1$ and $\cD_2$, and the high-dimension operators are suppressed by powers of defect separation $L$. In practice, this EFT will express the correlation function~\eqref{eq:maincorrelator} as an asymptotic series in powers of $L/R$, where $R$ is the IR length scale set by the size of $\cD_\Sigma$ and the other insertions.

	\begin{figure}[t]
	\centering
	\begin{subfigure}[t]{0.45\textwidth}
		\centering
		\begin{tikzpicture}
			\node[anchor=south west,inner sep=0] (pdf) at (0,0) {\includegraphics[scale=0.75]{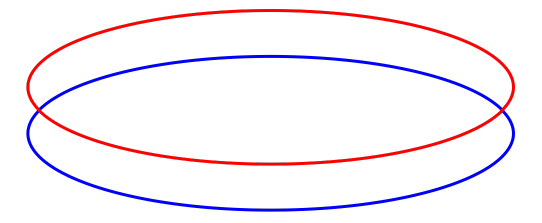}};
			
			\draw[<->, thick, black] (3.5,0.2) -- (3.5,0.7);  
			
			\node at (0,1) {$\cD_1$};
			\node at (0,1.7) {$\cD_2$};
			\node at (3.8, 0.5) {$L$};
			
			\node at (0,-1){};
			
		\end{tikzpicture}
	\end{subfigure}
	~
	\begin{subfigure}[t]{0.45\textwidth}
		\centering
		\scalebox{.8}{\includegraphics{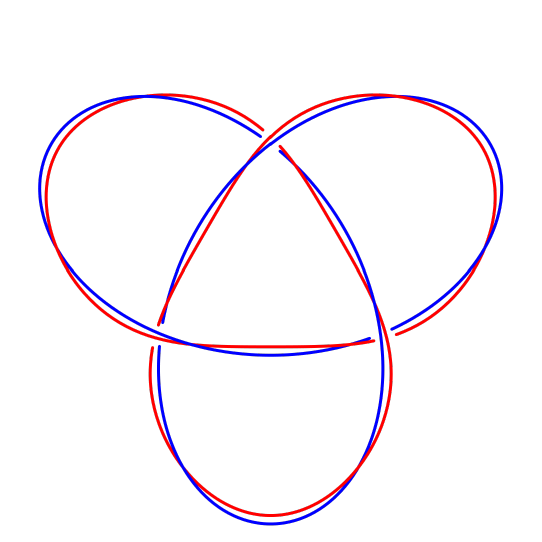}}
	\end{subfigure}
	\caption{Examples of fusion configurations. Left: a simple fusion configuration, the separation scale $L$ is taken to $0$. Right: a generic fusion configuration with knotted and interlinked defects.}
	\label{fig:fusionexamples}
\end{figure}

We defer the more precise discussion of~\eqref{eq:mainEFT} to section~\ref{sec:Seff}, and discuss here only the motivation behind and the status of this assumption. The key intuition is that for small $L$, we may view the pair of defects $\cD_1,\cD_2$ as giving a UV definition of a new non-conformal defect. In this picture $L$ plays the role of the UV scale analogous to, for example, a lattice scale. The process of taking the limit $L\to 0$ (equivalently, $R\to \oo$) can be then understood as an RG flow, which we expect to generically terminate at a conformal fixed point $\cD_\Sigma$. Small but finite $L\ll R$ can then be described in terms of an EFT around $\cD_\Sigma$.\footnote{The defect $\cD_\S$ need not be simple, although we expect that generically it will be, see section~\ref{sec:simplicity}.} Note that this point of view on fusion is not by any means new and was discussed already in~\cite{Bachas:2007td}.

Our main result is a general formalism for writing down the effective action $S_\text{eff}$ appearing in~\eqref{eq:mainEFT}, properly taking into account the constraints imposed by Weyl invariance on either side of the equation. We describe it in section~\ref{sec:Seff}. In the case of local operator product expansion, Weyl invariance fixes the form of the contribution of descendants in terms of that of the primary. In the defect case, it constrains the terms allowed in the effective action, and we classify all possible terms. In order to achieve this, we derive in section~\ref{sec:confgeom} a Weyl-invariant parameterisation of the relative position of the two defects and show that it is possible to construct from it a Weyl-invariant bulk metric $\hat g$, to which we refer as the fusion metric.

As an application of the general formalism, we study in section~\ref{sec:Wilson} the implications of the effective theory for the cusp anomalous dimension $\G_\text{cusp}$, which appears when two line defects meet with an opening angle $\a$. We show that a systematic small-$\a$ expansion of $\G_\text{cusp}$ and its excited versions can be computed from $S_\text{eff}$. We discuss some simple consequences of this for supersymmetric Wilson lines in $\cN=4$ SYM and compute a subleading Wilson coefficient from the known expressions for $\G_\text{cusp}$. We also make some comments regarding the connection of ``ultra-soft'' degrees of freedom~\cite{Pineda:2007kz,Correa:2012nk} and the simplicity of $\cD_\S$ in section~\ref{sec:simplicity}.

As another application,  in section~\ref{sec:operatorasymptotics} we consider the two-point function  $\<\cD\bar \cD\>$ of a defect and its conjugate from two points of view. On the one hand, the asymptotic expansion~\eqref{eq:mainEFT} applies. On the other hand, each defect can be expanded in terms of local operators~\cite{Gadde:2016fbj}. Consistency of these two expansions allows us to derive high-energy asymptotics of the one-point functions $\<\cO\cD\>$ of local operators $\cO$ in the presence of the defect $\cD$, as functions of both the scaling dimension and the spin of $\cO$.\footnote{A similar computation of the density of states in terms of thermal EFT was recently performed in~\cite{Benjamin:2023qsc}.}

An important contribution to the effective action $S_\text{eff}$ comes from terms which compensate for the difference of Weyl anomalies between the two sides of~\eqref{eq:mainEFT}. This is however complicated by the fact that $\cD_\Sigma$ does not have to preserve the same space-time symmetries as $\cD_1$ and $\cD_2$.  Indeed, if we view the pair $\cD_1\cD_2$ as the UV definition of an RG flow, this definition explicitly breaks transverse rotations around the defects. This transverse rotation symmetry may or may not be restored at the IR fixed point $\cD_\Sigma$. If it is not, we show in section~\ref{sec:anomalies} that the possible Weyl anomalies of $\cD_\Sigma$ are more general than usually considered in the literature (see~\cite{Chalabi:2021jud} for a review), and classify them in the case of line defects in general $d$ and surface defects in $d=4$.

While we believe that the argument in favor of~\eqref{eq:mainEFT} presented above is generically correct, we stress that it is not something that we can prove (with a reasonable degree of rigor) in general CFTs. This is contrast to the situation with the local operator product expansion (OPE), where quite formal arguments exist~\cite{Mack:1976pa}. Even though local OPE can be formally viewed as fusion of 0-dimensional defects, one has to be cautious about trying to take this analogy too far. We discuss the similarities and the key differences between fusion of conformal defects and local OPE in section~\ref{sec:fusionandOPE}. 

In section~\ref{sec:summary} we give a somewhat more detailed summary of our findings.

We conclude in section~\ref{sec:discussion}. Appendices contain some additional details, most notably the 2-derivative couplings for the identity operator are classified in appendix~\ref{app:two-derivatives}.

\paragraph{Note added} When this work was largely complete, the papers~\cite{Diatlyk:2024zkk,Diatlyk:2024qpr} appeared which partially overlap with our discussion. The paper~\cite{Diatlyk:2024zkk} discussed some general properties of defect fusion but did not study the effective action beyond the leading cosmological constant term, focusing instead on multiple explicit examples of fusion. We have added comparisons of some our results with these examples where applicable. The more recent paper~\cite{Diatlyk:2024qpr} studied fusion in the case when $\cD_1$ and $\cD_2$ are conformal boundaries with the bulk CFT in between them. They also focused only on the leading term and derived the high-energy asymptotics of one-point functions. This partially overlaps with the more general results of our section~\ref{sec:operatorasymptotics}. We have also learned from~\cite{Diatlyk:2024qpr} of an upcoming related work~\cite{zoharfuture}.

\subsection{Summary}
\label{sec:summary}
This section contains a condensed summary of our main results. For the more detailed discussion we refer the reader to the main body of the paper.

We work in a conformal field theory in $d> 2$ dimensions and consider a pair of $p$-dimensional conformal defects $\cD_1$ and $\cD_2$ which are approximately parallel to one another, see figure~\ref{fig:fusionexamples}. We denote the codimension by  $q=d-p$. We are interested in the limit in which the two defects fuse. That is, we consider a smooth family of configurations parameterised by an ``average distance'' $L>0$ such that in the limit $L\to 0$ the two defects coincide. We only use $L$ to denote a scale (except in some examples where it is given a precise meaning), and the precise dependence on the shape of the defects will be captured by other data.

At fixed $L$, a generic configuration of this sort breaks most conformal symmetries even if $\cD_1$ and $\cD_2$ are spherical defects in flat space. We work under the assumption that in the limit $L\to 0$ the conformal symmetry is restored, and a conformal defect $\cD_\S$ is obtained. Such a fusion process can be described in terms of an RG flow. In this picture, the product $\cD_1\cD_2$ can be viewed as the UV definition of a non-conformal defect which in the IR flows to $\cD_\S$. In the neighbourhood  of the IR fixed point $\cD_\S$, i.e.\ in the limit of small $L$, this RG flow can be described in terms of an effective action $S_\text{eff}$ on top of $\cD_\S$. Our goal is to analyze constraints on $\cD_\S$ and to describe the general form of the effective action $S_\text{eff}$.

\paragraph{Kinematics} To study these questions systematically, we consider the problem on a general curved bulk manifold $M$ and allow arbitrarily shaped $\cD_1$ and $\cD_2$. We assume that the bulk theory and all conformal defects are diffeomorphism- and Weyl-invariant (modulo Weyl anomalies). This allows us to use diffeomorphism and Weyl invariance to constrain $\cD_\S$ and the effective action $S_\text{eff}$.

The effective action $S_\text{eff}$ has to depend on the relative position of $\cD_1$ and $\cD_2$. To the leading order in $L$ we can parameterise the position of $\cD_2$ relative to $\cD_1$ by specifying a normal vector field $v^\mu=O(L)$ on $\cD_1$ such that, up to a reparameterisation of $\cD_2$,
\be\label{eq:vleading}
	X_2^\mu(z)=X_1^\mu(z)+v^\mu(z)+O(v^2),
\ee
where $z^a$ are defect coordinates and $X_i^\mu(z)$ are the functions which embed the defects $\cD_i$ into the bulk manifold $M$. Since we want to perform a systematic expansion of correlation functions in powers of $v\sim L$, it is necessary to extend~\eqref{eq:vleading} to higher orders in $v$. In section~\ref{sec:displacement} we provide such an extension which is both Weyl- and diffeomorphism-invariant. This is achieved by constructing a one-parameter family of deformations $X_\text{def}(z,t)$ such that, up to a reparameterisation of $\cD_2$,
\be
	X_\text{def}(z,0)&=X_1(z),\\
	X_\text{def}(z,1)&=X_2(z).
\ee
The function $X_\text{def}(z,t)$ is defined by a Weyl-invariant partial-differential equation~\eqref{eq:surface_evolution} with the initial condition
\be
	X^\mu_\text{def}(z,0)&=X^\mu_1(z),\quad \ptl_t X^\mu_\text{def}(z,0)=v^\mu(z).
\ee
In particular, equation~\eqref{eq:surface_evolution} can be solved order-by-order in $v$, providing a systematic Weyl-invariant extension of~\eqref{eq:vleading}.\footnote{Note that we do not claim that our definition of $v^\mu$ is unique. The non-trivial claim is that a Weyl- and diffeomorphism-invariant definition exists.} 

Note that this construction also gives a Weyl-covariant definition for the coupling to the displacement operator at non-linear orders. This in particular implies that there exist renormalisation schemes in which the displacement operator $D$ transforms as a Weyl tensor, without any anomalous terms proportional to $D$.

In section~\ref{sec:dilaton} we show that the vector field $v^\mu(z)$ on $\cD_1$ allows us to define a preferred Weyl frame in the neighbourhood of the two defects. It is clear that $v^\mu(z)$ defines a length scale on $\cD_1$ via $\ell^2=g_{\mu\nu} v^\mu v^\nu$ and 
\be
	\hat g_{\mu\nu}=\ell^{-2} g_{\mu\nu}
\ee
becomes a Weyl-invariant bulk metric, defined on $\cD_1$. The main result of section~\ref{sec:dilaton} that the length scale $\ell$ can be defined in a covariant way in the neighbourhood of $\cD_1$ and $\cD_2$. This allows us to extend also $\hat g_{\mu\nu}$ to a bulk neighbourhood. This dramatically simplifies the problem of constructing $S_\text{eff}$ as it can now be written in terms of the Weyl-invariant metric $\hat g_{\mu\nu}$ instead of the physical metric $g_{\mu\nu}$. We will refer to $\hat g_{\mu\nu}$ as the fusion metric. 

In section~\ref{sec:fusionmetricprops} we show that the fusion metric  $\hat g_{\mu\nu}$ is not completely generic and satisfies certain local constraints on $\cD_1$. Up to two-derivative order the constraints are exhausted by $\hat g_{\mu\nu} v^\mu v^\nu=1$, the vanishing of the trace of the second fundamental form $\hp^\mu=0$, and the vanishing of the purely normal components of the Schouten tensor. We also show that for spherical defects in flat space stronger constraints such as $\hp^\mu_{ab}=0$ follow.

Sections~\ref{sec:exampleI} and~\ref{sec:exampleII} contain various examples of the above constructions. In particular, in section~\ref{sec:exampleI} we describe in full generality the vector field $v^\mu$ for a pair of spherical defects (of any codimension) embedded in flat space. In section~\ref{sec:exampleII} we extend some of the examples of section~\ref{sec:exampleI} by computing the fusion metric for them.

\paragraph{Effective action} Using the fusion metric $\hat g_{\mu\nu}$ we can systematically construct the couplings in $S_\text{eff}$ in a derivative expansion suppressed by powers of $L$. The leading term comes from couplings to the identity operator, and in particular from the cosmological constant term
\be
	S_\text{eff}\ni -a_0\int\,d^p z\,\sqrt{\hga} = -a_0\int\,d^p z\,\sqrt{\g}\ell^{-p}\sim L^{-p}.
\ee
We discuss this term in detail in section~\ref{sec:cosmo} and prove theorems~\ref{thm:oppositesattract} and~\ref{thm:cauchyschwarz} for the Wilson coefficient $a_0$. The former states that $a_0$ is non-negative, $a_0\geq 0$, while the latter provides inequalities for $a_0$ between different fusion processes. 

In section~\ref{sec:subleading} we study one-derivative identity operator couplings, which only arise for line defects in $d=3$. We discuss their relevance in pure Chern-Simons theory. In section~\ref{sec:displacment_and_schemes} we discuss the contributions of irrelevant operators. Appendix~\ref{app:two-derivatives} contains a discussion of two-derivative identity couplings. For example, for line defects in $d=4$,
\be\label{eq:2-deriv-summary}
S^{(2)}_\text{eff} = &\int dz\sqrt{\hga}\p{
	a_{2,1}\hat\nabla_t^\perp v\. \hat\nabla_t^\perp v+a_{2,2}\hat R+ia_{2,3}\hat P_{tv}+a_{2,4}C_{vtvt}+ia_{2,5}\tl C_{vtvt}
},
\ee
where $\hat P_{\mu\nu}$ is the Schouten tensor,\footnote{In principle, the Schouten tensor can be expressed in terms of the Ricci tensor. However, as we discuss in section~\ref{sec:fusionmetricprops}, Schouten tensor generally has simpler form for fusion metrics than the Ricci tensor.} $C$ is the Weyl tensor and $t^\mu$ is the unit tangent to the line defect. Hats indicate that the quantities are computed in the fusion metric.

\paragraph{Constraints on $\cD_\S$} The position of the IR conformal defect $\cD_\S$ can be taken by convention to exactly coincide with $\cD_1$. This choice is possible because the effective action $S_\text{eff}$ will involve a coupling to the displacement operator, and we can always compensate for our choice of the position of $\cD_\S$ by modifying this coupling.\footnote{An alternative choice is to set the displacement coupling to $0$, in which case we would need to write down a derivative expansion for the position of $\cD_\S$.} We discuss these aspects further in section~\ref{sec:displacment_and_schemes}.

Consider the case when $\cD_1$ and $\cD_2$ are flat parallel defects in flat space. From the RG perspective, we treat $\cD_1\cD_2$ as the UV starting point of an RG flow. This configuration only preserves $\SO(q-1)$ transverse rotations, and there is no guarantee that the symmetry will be enhanced to $\SO(q)$ at the IR fixed point. Therefore, it is possible that $\cD_\S$ breaks $\SO(q)$ down to $\SO(q-1)$. In fact, curvature couplings allow for breaking to even smaller symmetry groups. Whenever some transverse rotations are preserved at the IR fixed point, the requirement that $\cD_\S$ be a stable fixed point implies that it should not have relevant operators with various transverse spins. Whenever $\cD_\S$ breaks transverse rotations, it must support tilt operators and the corresponding conformal manifold, analogously to the breaking of global symmetries. We discuss these questions in detail in section~\ref{sec:transverse}.

In section~\ref{sec:simplicity} we argue that generically we expect the IR defect $\cD_\S$ to be simple. We also highlight some important exceptions, including in particular the fusion of Wilson lines at weak coupling. 

\paragraph{Cusps and Wilson lines in $\cN=4$ SYM} In section~\ref{sec:Wilson} we apply the general effective theory formalism in the context of cusp anomalous dimension. When $\cD_1$ and $\cD_2$ meet at an opening angle $\a$, the resulting cusp carries a non-trivial scaling dimension $\G_\text{cusp}$. Via the exponential map, this configuration can be mapped to the cylinder $\R\x S^{d-1}$ where the defects $\cD_1$ and $\cD_2$ are inserted along $\R$, separated by angular distance $\a$ on $S^{d-1}$. In this picture $\G_\text{cusp}$ becomes the ground state energy of the Hamiltonian $H$ that generates translations along $\R$. In the limit $\a\to 0$ the defects fuse and the effective theory can be applied. We explain how the spectrum of $H$ can be obtained from Rayleigh-Schr\"odinger perturbation theory on $\cD_\S$. 

We then briefly discuss the implications of this for supersymmetric Wilson lines in planar $\cN=4$ SYM. We argue that the IR defect $\cD_\S$ is trivial and we show that the identity operator contribution to $S_\text{eff}$ is responsible for the value of $\G_\text{cusp}$ with error at most $O(\a^2)$, and possibly to all orders in $\a$. More concretely,
\be
	\G_\text{cusp}(\theta,\f=\pi-\a)=-\frac{a_0(\theta)}{\a}-3a_{2,2}(\theta)\a+O(\a^2),
\ee
where the two-derivative Wilson coefficient $a_{2,2}$ defined in~\eqref{eq:2-deriv-summary}. Matching the known expressions for $\G_\text{cusp}$ in the ladder and in the strong-coupling limits, we compute the value of $a_{2,2}$.

\paragraph{One-point functions of bulk operators} In section~\ref{sec:operatorasymptotics} we apply fusion EFT to the two-point function $\<\cD\bar \cD\>$. Comparing it with the expansion of this two-point function in terms of bulk operators~\cite{Gadde:2016fbj}, we derive the large-$\De$ asymptotic of one-point functions  $\<\cO\cD\>$ of local bulk operators $\cO$ in the presence of $\cD$. This is done for defects of any dimension $p$. We also derive the spin-dependent asymptotics for line defects in $d=3$ and $d=4$. As an example, for line defects in $d=3$ we find
\be
&\log \r(\De,J)\sim \sqrt{8\pi a_0\De}\p{1-\thalf j^2-\tfrac{1}{4}j^6+O(j^8)}+O(\log \De),
\ee
where $j=J/\De\leq 1$ and $\r(\De,J)$ is the one-point function density for local operators (both descendants and primaries) with scaling dimension $\De$ and spin projection $J$. The expansion in small $j$ is given for illustration purposes, the full $j$-dependence is determined by the transcendental equation~\eqref{eq:O1eqn}.

\paragraph{Anomalies} In general, the conformal defects $\cD_1,\cD_2$ and $\cD_\S$ are not exactly Weyl-invariant but can carry non-trivial Weyl anomalies. In this case, the effective action $S_\text{eff}$ also needs to transform non-trivially under Weyl transformations so that both sides of~\eqref{eq:mainEFT} transform in the same way. To achieve this, Weyl-anomaly matching terms need to be added to $S_\text{eff}$, and we derive their general form in section~\ref{sec:anomalymatching}. The resulting contribution is
\be
\cA_{\cD_1}(g,-\log \ell)+\cA_{\cD_2}(g,-\log \ell)-\cA_{\cD_\S}(g,-\log \ell),
\ee
where $\cA(g,\w)$ is the defect Weyl anomaly action for a finite Weyl transformation $\w$, and $\ell$ is evaluated on the appropriate submanifolds. As we discuss in section~\ref{sec:fusionandOPE}, the existence of the scale $\ell$ is crucial for this construction, and the analogous approach doesn't work in the case of local operator OPE.

Generally speaking, the Weyl anomaly $\cA_{\cD_\S}$ belongs to a class that is slightly more general than usually considered in the literature (see~\cite{Chalabi:2021jud} for a review) since the defect $\cD_\S$ might break some transverse rotations. In sections~\ref{sec:anomalyline} and~\ref{sec:anomalysurface} we begin the classification of Weyl anomalies for conformal defects that break the transverse rotations down to $\SO(q-1)$. We find that such line defects admit a new B-type Weyl anomaly in $d=3$ and no Weyl anomalies in $d>3$. For surface defects, we find 5 new B-type terms in $d=4$.

\section{Conformal geometry of a pair of defects}
\label{sec:confgeom}
In subsection~\ref{sec:displacement} we derive a way of parameterizing the relative position of two conformal defects $\cD_1$ and $\cD_2$ in a Weyl-invariant way. In subsection~\ref{sec:dilaton} we show that similar ideas lead to a definition of a canonical Weyl frame in the neighbourhood of $\cD_1,\cD_2$. In section~\ref{sec:fusionmetricprops} we discuss the special relations satisfied in this canonical Weyl frame. We give explicit examples of these constructions in subsections~\ref{sec:exampleI} and~\ref{sec:exampleII}.

The defects we consider have dimension $p>0$ and co-dimension $q$, with $p+q=d$ the dimension of the bulk. We assume that $d>2$. We work in Euclidean signature and on a bulk manifold $M$ with a general metric $g$. We assume that the defect manifold is $N$ for both $\cD_1$ and $\cD_2$, embedded into $M$ via $X_1:N\to M$ and $X_2:N\to M$ respectively. We use Greek letters for bulk indices ($x^\mu$) and Latin letters for defect indices ($z^a$). We will use $x$ to denote points in $M$ and $z$ to denote points in $N$, and we will often use $\cD_i$ to refer to the images $X_i(N)$. To declutter the notation, and when there is no fear of confusion, we may use $z$ to refer to the point $X_i(z)$ on $\cD_i$. Some additional comments regarding the geometry are given in appendix~\ref{app:geometry}.

\subsection{A Weyl-invariant displacement}
\label{sec:displacement}

We focus on describing the position of $\cD_2$ relative to $\cD_1$ when the defects are close to one another. We will do this by describing the possible deformations of the embedding function $X_1$ and then determining the deformation that yields $X_2$ (modulo defect diffeomorphisms).

To the leading order, a deformation of $X_1$ can be parameterised by a bulk vector field $v^\mu(z)$ defined on $\cD_1$,\footnote{Formally, one would describe $v$ as a section of the vector bundle over $N$ that one obtains from the tangent bundle $TM$ by pulling back along $X_1$. To simplify the discussion, we will appeal to more informal descriptions of various quantities throughout the paper.} so that the deformed embedding function $X_{1,v}$ is defined as
\be\label{eq:leadingdeformation}
	X_{1,v}^\mu(z) = X_1^\mu(z)+v^\mu(z)+O(v^2).
\ee
Components of $v^\mu(z)$ tangent to $\cD_1=X_1(N)$ do not affect the deformed defect $X_{1,v}(N)$ at this order because they can be absorbed into diffeomorphisms of $N$. We can fix this redundancy by requiring that $v^\mu$ be normal to $\cD_1$.

This parameterisation is explicitly Weyl-invariant. Indeed, Weyl transformations only affect the metric which enters here only through the condition that $v$ is orthogonal to $\cD_1$, and angles are unchanged by Weyl rescalings. However, so far we have only defined the deformation $X_{1,v}$ to the leading order in $v$.  Our goal, on the other hand, is to define the deformation to all non-linear orders in the deformation parameter $v$.

A standard way to extend the definition~\eqref{eq:leadingdeformation} to non-linear orders is by using affine geodesics with initial velocity $v^\mu$. For this, we first introduce a $t$-dependent deformation $X_\text{def}(z,t)$, so that the final deformation is given by
\be
	X_{1,v}^\mu(z) \equiv  X_\text{def}(z,t=1).
\ee
We could then require that for a fixed $z$, $X_\text{def}(z,t)$ satisfies the geodesic equation
\be\label{eq:geodesic}
	\nabla_t^2 X^\mu_\text{def} = \ptl_t^2 X_\text{def}^\mu+\G^\mu_{\a\b}\ptl_t X^\a_\text{def}\ptl_t X^\b_\text{def}=0,\quad{\color{gray}\text{(not our definition)}}
\ee
subject to the initial conditions
\be\label{eq:initialconditions}
	X^\mu_\text{def}(z,0)=X^\mu_1(z),\quad \ptl_t X^\mu_\text{def}(z,0)=v^\mu(z).
\ee
This is essentially a version of Riemann normal coordinates in a neighbourhood of a submanifold. While this definition is diff-invariant,\footnote{Something like $X_{1,v}^\mu(z)\equiv X_1^\mu(z)+v^\mu(z)$ is, on the other hand, not even diff-invariant.} it fails to be Weyl-invariant since the Christoffel symbols $\G$ transform inhomogeneously as
\be
	\delta_\w \Gamma^\rho{}_{\mu\nu}&=\delta^\rho_\mu\partial_\nu\w+\delta^\rho_\nu\partial_\mu\w-g_{\mu\nu}\partial^\rho\w.
\ee
It is easy to check that the problem arises already in the $O(v^2)$ term in $X_{1,v}^\mu$ resulting from this definition,
\be\label{eq:problematic_displacement}
	X^\mu_{1,v}=X_1^\mu+v^\mu-\half \G^\mu_{\a\b} v^\a v^\b+O(v^3).
\ee

To overcome this problem we use a Weyl-invariant equation in place of~\eqref{eq:geodesic},
\be
	\label{eq:surface_evolution}
	\nabla_t^2X^\mu_\text{def}+\frac{2}{p}\ptl_t X^\mu_\text{def} \ptl_t X^\s_\text{def} \II_\s -\frac{1}{p}(\ptl_t X_\text{def})^2 \II^\mu+\half \ptl_a X^\mu_\text{def} \ptl^a(\ptl_t X_\text{def})^2=0,
\ee
which is now a partial-differential equation in $z,t$.\footnote{The intuition behind this equation is as follows. The equation $\nabla_t^2 X_\text{def} = 0$ transforms by first derivatives of the Weyl factor on the defect $X_\text{def}(\cdot,t)$. So, if we can fix the first derivatives of a Weyl scale on this defect, we can turn $\nabla_t^2 X_\text{def} = 0$ into a conformally-invariant equation. We first restrict to the so-called minimal scales in which $\II^\mu=0$, which fixes the normal derivatives of the scale. We then fix the tangential derivatives of the scale by requiring that $(\ptl_t X_\text{def})^2=1$. It is then easy to see that in this scale~\eqref{eq:surface_evolution} reduces to $\nabla_t^2 X_\text{def} = 0$.} Here, $\II^\mu=\II^\mu_{ab}\hat g^{ab}$ is the trace of the second fundamental form calculated for the embedding $X_\text{def}(\.,t)$. This equation is to be solved with initial conditions~\eqref{eq:initialconditions}. Using the transformation rules given in appendix~\ref{app:geometry}, it is easy to check that the equation is invariant under Weyl transformations  as long as long as
\be
	\ptl_t X_\text{def}\.\ptl_a X_\text{def}=0.
\ee
This condition is in turn true because it is satisfied by the initial condition~\eqref{eq:initialconditions} and equation~\eqref{eq:surface_evolution} implies that
\be
	\ptl_t\p{\ptl_t X_\text{def}\.\ptl_a X_\text{def}} = -\frac{2}{p} \ptl_t X_\text{def}^\nu \II_\nu \p{\ptl_t X_\text{def}\.\ptl_a X_\text{def}}.
\ee

The solution to~\eqref{eq:surface_evolution} can be systematically constructed order by order in $v$. For example, the first two orders yield
\be\label{eq:good_expansion}
	X^\mu_{1,v}=&X_1^\mu+v^\mu-\half \G^\mu_{\a\b} v^\a v^\b-\frac{1}{p}v^\mu v^\s \II_\s +\frac{1}{2p}v^2 \II^\mu-\frac{1}{4} \ptl_a X^\mu_1 \ptl^a (v^2)+O(v^3).
\ee
Higher orders can be easily obtained by taking further $t$-derivatives of~\eqref{eq:surface_evolution} at $t=0$. Unlike~\eqref{eq:problematic_displacement}, this expansion is Weyl-invariant. This has two important consequences.

Firstly, this construction allows us to parameterise the relative position of the defects $\cD_1$ and $\cD_2$ using a vector field $v^\mu$. For this, we search for $v^\mu$ such that
\be
	X_{1,v}(z) = X_2(\vf(z))
\ee
for some diffeomorphism $\vf:N\to N$. Note that by construction, $v^\mu$ does not transform under Weyl transformations. On the other hand, its length
\be
	\ell(z) \equiv \sqrt{g_{\mu\nu}v^\mu v^\nu}
\ee
transforms with Weyl weight $1$, as expected from an infinitesimal distance. The function $\ell(z)$ formalises the idea of ``coordinate-dependent'' distance between the two defects and will feature prominently (alongside $v^\mu$) in the effective actions that we study in section~\ref{sec:Seff}.

Secondly, this proves the existence of Weyl-covariant schemes for the coupling of the displacement operator. Normally, the displacement operator $D_\mu$ is defined so that
\be
	\<\cD_1[e^{-\int d^p z\sqrt{\g} v^\mu D_\mu}]\cdots \>,
\ee
where $\g$ is the defect metric, computes a correlator with $\cD_1$ displaced to $X_{1}'=X_1+v+O(v^2)$ (see, for example~\cite{Billo:2016cpy}). Specifying this displacement to higher orders in $v$ is equivalent to defining a renormalisation scheme for correlation functions of $D_\mu$. The fact that we are able to make the Weyl-invariant definition $X_1' \equiv X_{1,v}$ implies\footnote{For this it is important that $X_{1,v}$ has a local series expansion in terms of $v$ and its derivatives, see~\eqref{eq:good_expansion}. If this were violated, we wouldn't expect to be able to use $v$ as the expansion parameter in conformal perturbation theory.} that the corresponding renormalisation scheme contains no scale anomalies in contact terms of $D_\mu$ with itself. Put differently, the action $\int d^p z\sqrt{\hat g} v^\mu D_\mu$ can be made Weyl-invariant at quantum level.

We should note that other Weyl-invariant definitions of $X_{1,v}$ might be possible. We will not rely on any special properties of the above definition beyond those already mentioned.

\subsection{Example: spherical defects in flat space I}
\label{sec:exampleI}
The discussion in section~\ref{sec:displacement} applies in general curved backgrounds and for arbitrarily shaped defects. If the spacetime manifold is flat $M=\R^d$ (more accurately, its conformal compactification $S^d$) and the defects $\cD_1$ and $\cD_2$ are flat or spherical, the definitions can be made much more explicit, which is the goal of this section. In what follows we set $g_{\mu\nu}=\de_{\mu\nu}$, and we use ``spherical'' to refer to both spherical and flat defects (the latter understood as spheres of infinite radius) of any dimension. 

To begin, we first fix our conventions for conformal Killing vectors (CKVs). Let $\xi^\mu(x)$ be a CKV on $M=\R^d$, so that it satisfies the equation
\be\label{eq:CKV}
\ptl_\mu \xi_\nu +\ptl_\nu \xi_\mu = \frac{2}{d}\de_{\mu\nu}(\ptl\.\xi).
\ee
We define the following basis of Killing vectors
\be
	\bd^\mu(x) &= x^\mu,\\
	(\bp_\nu)^\mu(x) &= \de^\mu_\nu,\\
	(\bk_\nu)^\mu(x) &= 2x^\mu x_\nu - \de^\mu_\nu x^2,\\
	(\bm_{\r\s})^\mu(x) &= \de^\mu_\r x_\s-\de^\mu_\s x_\r.
\ee
The general solution to~\eqref{eq:CKV} is a linear combination of these basis elements,
\be
	\xi=\l \bd+a^\mu \bp_\mu+b^\mu \bk_\mu +\half w^{\mu\nu}\bm_{\mu\nu}.
\ee
The CKVs form a Lie algebra under the Lie bracket of vector fields -- the Lie algebra of conformal transformations of $\R^d$, and the non-zero values of the corresponding Killing form are
\be\label{eq:killing}
	\<\bd,\bd\> = 2d,\quad \<\bp_\mu,\bk_\nu\>=-4d\de_{\mu\nu},\quad \<\bm_{\mu\nu},\bm_{\r\s}\> = -2d\p{\de_{\mu\r}\de_{\nu\s}-\de_{\mu\s}\de_{\nu\r}}.
\ee

Recall that $\vf= e^\xi$ is a conformal transformation that is constructed as follows. To compute $\vf(x_0)$ we solve the differential equation
\be\label{eq:expmap}
\dot x^\mu(t) = \xi^\mu(x(t))
\ee
with the initial condition $x(0)=x_0$ and then set $\vf(x_0) = x(1)$.

We now fix a CKV $\xi$ and define
\be\label{eq:flatsoln}
X_\text{def}(z,t) = e^{t\xi}(X_1(z)),
\ee
and suppose that $X_\text{def}(z,t)$ solves~\eqref{eq:surface_evolution} at $t=0$. We will discuss this condition later. We further assume that $\xi^\mu$, when restricted to $\cD_1$, is orthogonal to $\cD_1$ so that $v=\xi\vert_{\cD_1}$ provides an initial condition~\eqref{eq:initialconditions} of the kind discussed in section~\ref{sec:displacement}. We will see later that there is a healthy supply of such $\xi$ for spherical $\cD_1$. We claim that~\eqref{eq:flatsoln} then solves~\eqref{eq:surface_evolution} for all times~$t$.

To prove our claim, we first show that $\xi^\mu$ is orthogonal to the image of $X_\text{def}(\.,t)$ for any~$t$. Indeed, notice that $e^{t\xi}$ is a conformal transformation, and therefore $\xi$ being orthogonal to $e^{t\xi}(X_1(\.))$ is equivalent to $d(e^{-t\xi})(\xi)$ being orthogonal to $X_1(\.)$. Here, $d(e^{-t\xi})(\xi)$ is the pushforward of $\xi$ by the diffeomorphism $e^{-t\xi}$ and can be written as $e^{-t\cL_\xi}\xi$, where $\cL_\xi$ is the Lie derivative. Since $\cL_\xi \xi = [\xi,\xi]=0$, it follows that $d(e^{-t\xi})(\xi)=\xi$ and therefore $d(e^{-t\xi})(\xi)$ is indeed orthogonal to $X_1(\.)$ by our assumptions.

Equations~\eqref{eq:expmap} and~\eqref{eq:flatsoln} show that $\xi(X_\text{def}(z,t))$ coincides with $\ptl_t X_1(z,t)$. Therefore, the above orthogonality statement implies, following the discussion in section~\ref{sec:displacement}, that the left-hand side of~\eqref{eq:surface_evolution} is conformally-invariant when evaluated on $X_\text{def}(z,t)$. By the definition~\eqref{eq:flatsoln}, time evolution of $X_\text{def}(z,t)$ is precisely the conformal transformation $e^{t\xi}$. Therefore, similarly to the above argument for orthogonality, the fact that~\eqref{eq:surface_evolution} is satisfied for $t=0$ implies that it is satisfied for all $t$.

Let us consider a concrete example where $\cD_1$ is a flat defect passing through $x=0$, and spanning the coordinate directions $1,\cdots, p$. In this case,~\eqref{eq:surface_evolution} becomes at $t=0$
\be
\ptl^2_t X_\text{def}^\mu+\half \ptl_a X_\text{def}^\mu \ptl^a(\ptl_t X_\text{def})^2=0.
\ee
In terms of $\xi$ and $X_1$ this becomes
\be\label{eq:flat_surface_evolution_in_xi}
\xi^\nu \ptl_\nu \xi^\mu+\half \ptl_a X_1^\mu \ptl^a \xi^2=0.
\ee
The condition that $\xi$ is orthogonal to $\cD_1$ allows for the following choices for $\xi$:
\begin{itemize}
	\item translations $\bp_\mu$ normal to $\cD_1$, i.e.\ $\mu = p+1,\cdots, d$,
	\item special conformal transformations $\bk_\mu$  normal to $\cD_1$, i.e.\ $\mu = p+1,\cdots, d$,
	\item rotations $\bm_{\mu\nu}$ with at least one index normal to the defect, i.e.\ $\mu = p+1,\cdots, d$ or $\nu = p+1,\cdots, d$, or both.
\end{itemize}
By an explicit calculation, one can check that if the rotations $\bm_{\mu\nu}$ with both indices normal to the defect are excluded, then any linear combination of the remaining CKVs satisfies~\eqref{eq:flat_surface_evolution_in_xi}. We will refer to such $\xi$ as standard deformations of $\cD_1$. Therefore, the standard deformations are spanned by $\bp_\mu, \bk_\mu, \bm_{\mu\nu}$ with $\mu = p+1,\cdots, d$ and $\nu = 1,\cdots, p$.

Let $\mathfrak{h}\subseteq \mathfrak{so}(d+1,1)$ be the subalgebra of conformal transformations which leave $\cD_1$ invariant and $H$ be the corresponding subgroup of $\SO(d+1,1)$. That is, 
\be
	\mathfrak{h}\simeq \mathfrak{so}(p+1,1)\oplus \mathfrak{so}(d-p)
\ee
and $\mathfrak{h}$ is spanned by $\bd,\bp_\mu,\bk_\mu, \bm_{\mu\nu}$ with $\mu,\nu\in 1,\cdots,p$ and by $\bm_{\mu\nu}$ with $\mu,\nu=p+1,\cdots, d$. It is then easy to check that the standard deformations form precisely the subspace $\mathfrak{h}^\perp$ orthogonal to $\mathfrak{h}$ with respect to the Killing form~\eqref{eq:killing}. This subspace can be identified with the quotient $\mathfrak{so}(d+1,1)/\mathfrak{h}$. In other words, at least infinitesimally, the standard deformations generate $\SO(d+1,1)/H$ which is the space of all spherical defects of dimension $p$.

This implies that the solutions~\eqref{eq:flatsoln} constructed above describe all possible deformations of $\cD_1$ which preserve the property of it being spherical. In other words, the position of a spherical defect $\cD_2$ relative to a spherical defect $\cD_1$ can always be described by $v=\xi|_{\cD_1}$ where $\xi$ is a standard deformation of  $\cD_1$. Furthermore, using the invariance properties of the Killing form on $\mathfrak{so}(d+1,1)$  we can see that the space of standard deformations of a spherical defect $\cD_1$ is always\footnote{I.e.\ not just in the above example of a flat defect.} $\mathfrak{h}^\perp$, where $\mathfrak{h}$ is the subalgebra that stabilises $\cD_1$.

\begin{figure}[t]
	\centering
	\includegraphics[scale=0.8]{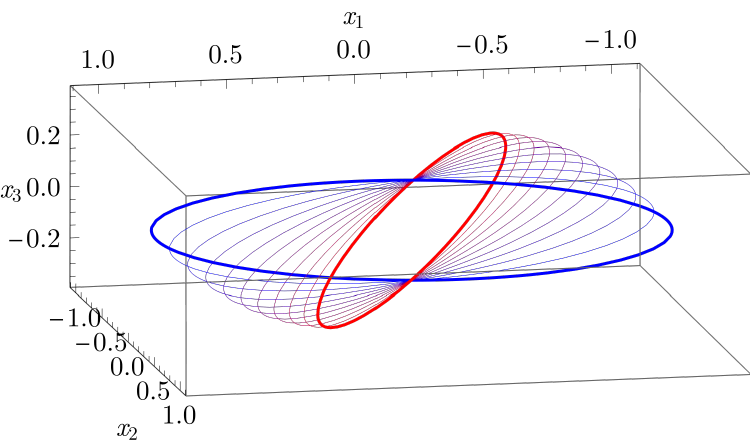}
	\caption{An example configuration of line defects in $d=3$. The thick blue circle is $\cD_1$, the thick red circle is $\cD_2$. The thin lines in between represent different values of $t$ in $X_\text{def}(z,t)$ defined in section~\ref{sec:displacement}, ranging from $0$ to $1$ with step $0.1$.}
	\label{fig:standard_line_3d_config}
\end{figure}

\paragraph{Example: line defects in $d=3$} A pair of line defects in $d=3$ can always be mapped to the configuration where $\cD_1$ is a circle of radius $1$ in the plane 1-2 with its centre at $0$, while $\cD_2$ is smaller circle obtained by applying a dilatation and a rotation in the plane $1-3$, see figure~\ref{fig:standard_line_3d_config}. Specifically, $\cD_2$ is obtained from $\cD_1$ by action with (note that $[\bd,\bm_{13}]=0$)
\be
	e^{\log r \bd-\theta \bm_{13}}=e^{\log r \bd}e^{-\theta \bm_{13}},
\ee
where $r<1$ is the radius of $\cD_2$ and $\theta$ is an angle. The intersections of $\cD_1$ and $\cD_2$ with the $x^2=0$ plane mark four points which kinematically behave similarly to operator insertions in a four-point function.\footnote{The plane $x^2=0$ is the unique plane or sphere which meets both defects at right angles.} 

If we introduce a complex coordinate $x^1+i x^3$ in this plane, then $\cD_1$ intersects it at points $\pm 1$, while $\cD_2$ intersects it at $\pm \r$, where 
\be	
	\r=re^{i\theta}.
\ee
In these kinematics $\r$ is a cross-ratio, and it is completely analogous to the radial $\r$ cross-ratio in four-point functions~\cite{Hogervorst:2013sma}.

The subalgebra $\mathfrak{h}$ that preserves $\cD_1$ is given by
\be
	\mathfrak{h}=\Span\{\bm_{13},\bp_1-\bk_1,\bp_2-\bk_2,\bp_3+\bk_3\}.
\ee
We can easily see from~\eqref{eq:killing} that both $\bd$ and $\bm_{13}$ belong to $\mathfrak{h}^\perp$. Therefore, the CKV
\be
	\xi = \log r \bd-\theta \bm_{13}
\ee
is a standard deformation and we already know that $e^\xi$ maps $\cD_1$ to $\cD_2$. 

Restricting $\xi$ to $\cD_1$ we find
\be
	v=\xi|_{\cD_1} = \log r \hat e_r + \theta\cos \phi \hat e_3,
\ee
where $\hat e_r$ and $\hat e_3$ are unit vectors in the radial and the second directions, respectively, while $\phi$ is an angular coordinate on $\cD_1$ such that $x_1 = \cos \phi$ and $x_2=\sin\phi$. The function $\ell$ takes on the defect the values
\be
	\ell(\phi) = \sqrt{(\log r)^2 + \theta^2 \cos^2\phi}.
\ee

\paragraph{Example: dimension-$p$ defects}  The previous example can be generalised to the situation when $\cD_1$ and $\cD_2$ are $p$-dimensional and the bulk is $d$-dimensional. Specifically, we take $\cD_1$ to be a $p$-dimensional spherical defect of radius $1$ centred at $0$ and lying in the subspace spanned by the coordinate directions $1,\cdots,p+1$. The defect $\cD_2$ is then obtained by the action of
\be\label{eq:general_rotation}
	e^{\log r \bd-\sum_{i} \theta_i\bm_{i,p+i+1}}
\ee
where the sum is over $i=1,\cdots,m=\min\{p+1, d-p-1\}$. Note that by construction, all the generators in the exponential commute with each other. Furthermore, the total number of parameters $r,\theta_1,\cdots, \theta_m$ is $m+1=\min\{d+2-q,q\}$. This is the same as the number of cross-ratios for a pair of $p$-dimensional defects~\cite{Gadde:2016fbj}.

Generalizing the previous example, we have $\mathfrak{h}=\mathfrak{so}(1,p+1)\oplus \mathfrak{so}(q)$. The subalgebra $\mathfrak{so}(p+1)\subseteq\mathfrak{so}(1,p+1)$ is given by rotations in the first $p+1$ directions, while the remaining $p+1$ generators of $\mathfrak{so}(1,p+1)$ are given by $\bp_\mu-\bk_\mu$ with $\mu\in\{1,\cdots,p+1\}$. In other words,
\be
	\mathfrak{so}(1,p+1)=\Span\{\bp_\mu-\bk_\mu|\mu\in \{1,\cdots,p+1\}\}+\Span\{\bm_{\mu\nu}|\mu,\nu\in \{1,\cdots,p+1\}\}.
\ee
Similarly $\mathfrak{so}(q-1)\subset \mathfrak{so}(q)$ is given by rotations in the last $q-1$ directions, while the remaining $q-1$ generators of $\mathfrak{so}(q)$ are given by $\bp_\mu+\bk_\mu$ with $\mu\in\{p+2,\cdots,d\}$, i.e.
\be
	\mathfrak{so}(q)=\Span\{\bp_\mu+\bk_\mu|\mu\in \{p+2,\cdots,d\}\}+\Span\{\bm_{\mu\nu}|\mu,\nu\in \{p+2,\cdots,d\}\}.
\ee
This leaves
\be\label{eq:hperpgeneral}
	\mathfrak{h}^\perp=&\;\Span\{\bd\}+\Span\{\bp_\mu+\bk_\mu|\mu\in \{1,\cdots,p+1\}\}+\Span\{\bp_\mu-\bk_\mu|\mu\in \{p+2,\cdots,d\}\}\nn\\
	&+\Span\{\bm_{\mu\nu}|\mu\in \{1,\cdots,p+1\},\nu\in \{p+2,\cdots,d\}\}.
\ee
In particular, the CKV appearing in~\eqref{eq:general_rotation} is in $\mathfrak{h}^\perp$, and is therefore a standard deformation. So, we can take
\be
	\xi = \log r \bd-\sum_{i} \theta_i\bm_{i,p+i+1},
\ee
and therefore
\be
	v=\xi|_{\cD_1} = \log r\hat e_r + \sum_i \theta_i x^i\hat e_{p+i+1}.
\ee
Here, $x^i$ are the bulk coordinates of a point on $\cD_1$. The function $\ell$ is then given by (on $\cD_1$)
\be\label{eq:length-scale}
	\ell(z) = \sqrt{(\log r)^2 + \textstyle\sum_i \theta_i^2 (x^i)^2}.
\ee

\begin{figure}[t]
	\centering
	\includegraphics[scale=0.8]{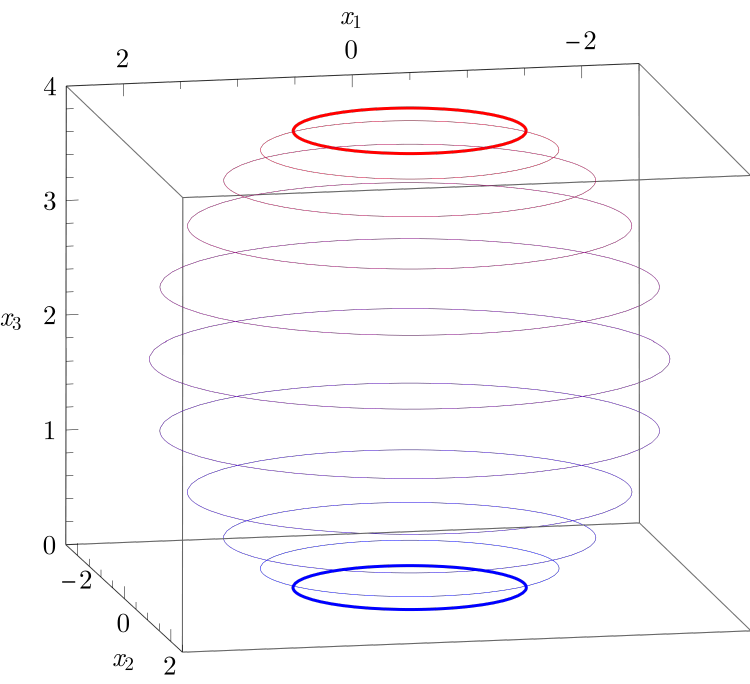}
	\caption{An example configuration of line defects in $d=3$. The thick blue circle is $\cD_1$, the thick red circle is $\cD_2$. The thin lines in between represent different values of $t$ in $X_\text{def}(z,t)$ defined in section~\ref{sec:displacement}, ranging from $0$ to $1$ with step $0.1$. In this picture, $L=4$ is not small to show the path of the deformation more clearly.}
	\label{fig:parallel-p}
\end{figure}

\paragraph{Example: parallel dimension-$p$ spherical defects}
We now consider the case when $\cD_1$ is as above, but $\cD_2$ is obtained from it by a translation by $L$ in the coordinate $x^d$. This is conformally equivalent to a configuration in the previous example with $\theta_i=0$ and
\be
r=\exp\p{-2\,\mathrm{arcsinh}\frac{L}{2}}=\p{\sqrt{1+\tfrac{L^2}{4}}+\tfrac{L}{2}}^{-2}.
\ee
However, it is instructive to consider this configuration directly.

Note that while $\cD_2$ can be obtained by applying $e^{L \bp_d}$ to $\cD_1$, $L\bp_d$ is not a standard deformation. This follows from the characterisation of $\mathfrak{h}^\perp$ in~\eqref{eq:hperpgeneral}. Instead, we have to take
\be
	\xi = \a \tanh \tfrac{\a}{2}\bd + \frac{\a}{2\cosh \frac{\a}{2}}(\bp_d-\bk_d)
\ee
where $\a$ is such that
\be
	L = 2\sinh\tfrac{\a}{2}.
\ee
That this works can be verified by an explicit calculation using embedding space methods. One can also check that as $t$ goes from $0$ to $1$, the transformation $e^{t\xi}$ moves $\cD_1$ into $\cD_2$ along the unique $(p+1)$-dimensional sphere on which both $\cD_1$ and $\cD_2$ lie, see figure~\ref{fig:parallel-p}. 

This choice of $\xi$ gives
\be
v=\xi|_{\cD_1} = \a \tanh \tfrac{\a}{2}\hat e_r + \frac{\a}{\cosh \frac{\a}{2}}\hat e_d
\ee
and 
\be\label{eq:parallelLinesCylinderEll}
	\ell(z) = \a.
\ee

\paragraph{Example: parallel dimension-$p$ flat defects}
Next, we consider the case when $\cD_1$ is a flat defect passing through $x=0$, and spanning the coordinate directions $1,\cdots, p$, but $\cD_2$ is again obtained from it by a translation by $L$ in the coordinate $x^d$.

We represent this translation by setting $\xi=L\bp_d$, which we already know is a standard deformation. This then clearly gives that
\be
	v^\mu=L \, \hat{e}_d
\ee
and 
\be
	\ell(z) = L.\label{eq:ellParallel}
\ee

\paragraph{Example: parallel flat line defects in $\mathbb{R}\x S^{d-1}$}

Finally, we consider the case when $\cD_1$ and $\cD_2$ are flat parallel line defects (so $p=1$) in $\mathbb{R}\x S^{d-1}$, placed along the time direction $\R$. This setup is conformally equivalent to a setup in $\R^d\setminus\{0\}$ via the exponential map $\exp: \mathbb{R}\x S^{d-1}\to \R^{d}\setminus\{0\}$, defined as
\be
	\exp(\tau, \Omega)\equiv e^\tau \Omega\label{eq:expMapCyl}
\ee
for $\tau\in\mathbb{R}$ and $\Omega\in{S}^{d-1}$ (parameterised as a unit vector in $\mathbb{R}^d$). In $\R^d\setminus\{0\}$, the defects $\cD_1$ and $\cD_2$ become rays starting at $0$, and pointing in two different directions $\Omega_1$ and $\Omega_2$, respectively.

Without loss of generality, we can assume that
\begin{align}
	\Omega_1=(1, 0, 0, \dots, 0),\qquad \Omega_2=(\cos\alpha, \sin\alpha, 0, 0, \dots 0),
\end{align}
where $\a$ is the angle between $\Omega_1$ and $\Omega_2$. From the preceding discussion, we know that in the $\R^{d}\setminus\{0\}$ picture, $\xi=\alpha\bm_{21}$ is a standard deformation for $\cD_1$. On the other hand, $e^{\xi}$ is a rotation by angle $\a$ which maps $\Omega_1$ to $\Omega_2$, and therefore $\cD_1$ to $\cD_2$. Thus, in $\R^{d}\setminus\{0\}$,
\be
	v^\mu=\xi^\mu|_{\cD_1}=\a x^1\de^\mu_2.
\ee
Under $\exp^{-1}$, this maps to a time-independent vector of length $\a$, which lies along the spatial directions on $\R\x S^{d-1}$ and points along the great circle from $\cD_1$ to $\cD_2$. Therefore, in the Weyl frame of $\R\x S^{d-1}$, 
\begin{align}
	\ell(z)=\alpha.\label{eq:ellCylinder}
\end{align}

\subsection{Fusion metric}
\label{sec:dilaton}

In the previous subsection we have defined a vector field $v^\mu(z)$ and the corresponding length scale $\ell(z) = \sqrt{v^2(z)}$ which parameterise the relative position of the defects $\cD_1$ and $\cD_2$. An immediate consequence of this is that the combination
\be
	\ell(z)^{-2}g_{\mu\nu}(X_1(z))
\ee
is a Weyl-invariant metric. However, it is only defined on the defect $\cD_1$. If we could extend it away from $\cD_1$ and into the bulk $M$, it would become a Weyl-invariant metric on $M$ and greatly simplify the problem of constructing Weyl-invariant effective actions. In this subsection we show that such an extension is indeed possible.

We first construct a canonical system of coordinates around $\cD_1$, similar to Riemann normal coordinates. For this, we employ a Weyl-invariant equation for curves, analogous to the geodesic equation~\eqref{eq:geodesic}. No such second-order equation exists,\footnote{Equation~\eqref{eq:surface_evolution} is second-order, but contains derivatives of $X_1(z,t)$ with respect to $z$. For reasons which will soon become apparent, here we want to construct a curve for an isolated value of $z$.} but infinitely-many can be found at third order.\footnote{There are infinitely-many in a general curved space~\cite{BaileyCircles}. It is likely that it is unique in a conformally-flat space.} We will use a particular version, as defined in~\cite{BaileyCircles} and called there the conformal circle equation,\footnote{The conformal circle equation is not invariant under general changes of the parameter $t$. The same is true for the geodesic equation~\eqref{eq:geodesic}: its solutions are affinely-parameterised geodesics. Similarly, the solutions to~\eqref{eq:ccequation} are conformal circles with a parameterisation of a special type --- a ``projective'' parameterisation~\cite{BaileyCircles}. In this context, an important property of~\eqref{eq:ccequation} is that it is invariant under projective (fractional-linear) transformations of the parameter $t$.}
\be\label{eq:ccequation}
	\nabla_t a^\mu = \frac{3(u\.a)}{u^2}a^\mu-\frac{3a^2}{2u^2}u^\mu+u^2 u^\nu P^\mu_\nu-2u^\a u^\b u^\mu P_{\a\b}.
\ee
This is a third-order equation for a curve $\g^\mu(t)$, where we used the notation $u^\mu=\nabla_t\g^\mu$, $a^\mu=\nabla_t^2 \g^\mu$, while $P_{\mu\nu}$ is the Schouten tensor defined in~\eqref{eq:schoutendef}.

When $M$ is flat, the geodesics are straight lines, while the solutions to the conformal circle equation are straight lines and circles (see section~\ref{sec:exampleII}). The appearance of the additional parameters --- the radius of the circle and the plane in which it lies --- corresponds to the fact that~\eqref{eq:ccequation} is a third-order equation. In order to solve for $\g$, we need to specify not only the initial position $\g(0)$ and the initial velocity $u(0)$, but also the initial covariant acceleration $a(0)$. The latter transforms inhomogeneously under Weyl transformations as
\begin{align}
	\delta_\w a^\mu&= 2u^\mu u^\nu\ptl_\nu\omega-u^2 \ptl^\mu\omega.
\end{align}
This in particular means that  simply setting $a(0)=0$ breaks Weyl invariance.

In general, there is no canonical way of specifying the initial condition for $a(0)$. This makes it impossible to define canonical coordinates on a general conformal manifold by analogy to Riemann normal coordinates, which in turn makes the classification of general Weyl invariants a famously hard problem~\cite{Fefferman:2007rka}. Fortunately, it turns out that it is possible to choose $a(0)$ canonically in the case at hand, where the conformal circle originates on one the two conformal defects $\cD_1,\cD_2$. For now, we will assume that such a choice has been made; we describe it explicitly at the end of this section.

We can now construct coordinates in a neighbourhood of $\cD_1$ in the following way. Given a point $x\in M$ close to $\cD_1$ we search for $z\in \cD_1$ and a vector $y^\mu$ normal to $\cD_1$ at $z$, such that the conformal circle with the initial condition $\g(0)=X_1(z)$, $u^\mu(0)=y^\mu$ (and the canonically chosen value for $a^\mu(0)$) passes through $x$ at $t=1$, i.e.\ $\g(1)=x$. We will show below that when $x$ is close enough to $\cD_1$, the mapping $x\mapsto (z,y)$ is well-defined, smooth, and one-to-one. We define the coordinates of $x$ to be the pair $(z,y)$, and denote the corresponding conformal circle by $\g_x(t)$.

This construction allows us to extend $\ell(z)$ to a function $\ell(x)$ of Weyl weight 1, defined on a neighbourhood of $\cD_1$. Specifically, we define
\be\label{eq:bulklengthscale}
	\ell^2(x)\equiv \dot\g_x(0)^{-2}\dot\g_x^2(1)\ell^2(z)=y^{-2}\dot\g_x^2(1)\ell^2(z),
\ee
where $(z,y)$ are the coordinates of $x$, defined above. Since $\dot \g_x(t)^2$ is computed using the metric at $\g_x(t)$, this ensures that $\ell(x)$ transforms with Weyl weight $1$ at $x$, given that $\ell(z)$ transforms with Weyl weight $1$ at $X_1(z)=\g_x(0)$. We will show below that $\ell(x)$ thus defined is in fact smooth and restricts to $\ell(z)$ on $\cD_1$. 

Using $\ell(x)$ we can now define the Weyl-invariant metric
\be\label{eq:Weyl_inv_metric}
	\hat g_{\mu\nu}(x) = \ell^{-2}(x) g_{\mu\nu}(x),
\ee
which we refer to as the fusion metric. Any Weyl invariant that can be built out of $g_{\mu\nu}$ and $v^\mu$ can now be written as simply a diffeomorphism invariant built out of $\hat g_{\mu\nu}$ and $v^\mu$. Indeed, since $\hat g_{\mu\nu}$ and $g_{\mu\nu}$ differ by a Weyl transformation, $g_{\mu\nu}$ can be replaced by $\hat g_{\mu\nu}$ in any Weyl invariant. Conversely, any diffeomorphism invariant of $\hat g_{\mu\nu}$ and $v^\mu$  is automatically a Weyl invariant due to the Weyl invariance of $\hat g_{\mu\nu}$ and $v^\mu$.

We now give the details of the construction that were omitted above. To define the initial condition $a^\mu(0)$ we simply observe that the vector
\be\label{eq:qdefn}
	q^\mu = a^\mu(0)+\frac{2}{p}u^\mu(0) u^\s(0) \II_\s -\frac{1}{p}u^2(0) \II^\mu+\half \ptl_a X_1^\mu u^2(0) v^{-2}\ptl^a(v^2),
\ee
where $v^\mu$ parameterises the position of $\cD_2$ relative to $\cD_1$, is a Weyl invariant.  This is essentially the same calculation as verifying Weyl invariance of~\eqref{eq:surface_evolution}. Setting $q^\mu=0$ defines a value for $a^\mu(0)$ in a Weyl-invariant way, providing the necessary initial condition.

We now prove that $\ell(x)$ is smooth. The key subtlety here is that the conformal circle equation~\eqref{eq:ccequation} contains $u^2$ in the denominator. Because of this, it is not entirely obvious whether $\g(1)$ depends smoothly on $u(0)$, and consequently whether $x$ depends smoothly on its coordinates $(z,y)$.

To solve this problem we first rewrite the conformal circle equation in terms of functions $f,h$ appearing in the potentially singular pieces,
\be
	f\equiv \frac{(u\.a)}{u^2},\qquad
	h\equiv \frac{a^2}{u^2}.
\ee
For example, the conformal circle equation becomes
\be
	\nabla_t a^\mu = 3fa^\mu-\tfrac{3}{2}hu^\mu+u^2u^\nu P^\mu_\nu-2P_{\a\b}u^\a u^\b u^\mu.
\ee
It is easy to show (see appendix~\ref{app:CCequation}) that $\ptl_t f$ and $\ptl_t h$ can be expressed in a polynomial way in terms of $u,a,f,h$. Overall, the system of equations for $\g,u,a,f,h$ therefore has smooth functions in the right-hand side and thus the only possible singularities are in the initial conditions. Our initial conditions are given by
$\g(0) = X_1(z)$, $u(0) = y$,  and $q^\mu=0$ (see~\eqref{eq:qdefn}). This gives for $f,h$
\be\label{eq:fhinitial_conditions}
	f(0) = -\frac{1}{p}y\.\II,\quad h(0) = y^2\p{\frac{1}{p^2} \II^2+\frac{1}{4} v^{-4}\ptl_a (v^2)\ptl^a (v^2)}.
\ee
We therefore find that the initial conditions for $\g,u,a,f,h$ depend smoothly on $z$ and $y$, and thus the solution $\g(1)$ depends smoothly on $z$ and $y$ as well. It is easy to verify that the differential of the map $(z,y)\mapsto \g(1)$ is non-degenerate at $y=0$, which shows that it is locally one-to-one and has a smooth inverse.

Finally, we note that 
\be
	f(t) = \thalf \ptl_t \log u^2(t).
\ee
Therefore,
\be
	\dot\g^2(1)\dot\g^{-2}(0)=e^{2\int_0^1 dt f(t)}.
\ee
Since, as discussed above, $f(t)$ depends on $z$ and $y$ in a smooth way, it follows that $\dot\g^2(1)\dot\g^{-2}(0)$ is also a smooth function of $z,y$, and thus of $x$. This shows that~\eqref{eq:bulklengthscale} depends smoothly on $x$. Note, however, that due to the $v^{-4}$ factor in~\eqref{eq:fhinitial_conditions}, $\ell(x)$ is not analytic in $v^\mu$. It is, however, analytic in the derivatives of $v^\mu$, which is what will be important for our applications.

\subsection{Properties of the fusion metric}
\label{sec:fusionmetricprops}

In the previous subsection we described an procedure which constructs the fusion metric $\hat g$ given as initial data the physical metric $g$ in the bulk and the vector field $v$ on $\cD_1$.\footnote{The result only depends on $\sqrt{g_{\mu\nu}v^\mu v^\nu}$, but talking about $v$ is easier since it is Weyl-invariant which saves us from discussing extraneous details.} In this section show that $\hat g$ is not arbitrary and satisfies non-trivial local identities such as~\eqref{eq:II=0} and~\eqref{eq:P=0} regardless of the choice of $g$. These identities will be important when we write down the independent local invariants of $\hat g$ in the following sections. Finally, we make some comments about isometries of $\hat g$.

We will treat $v$ as fixed (together with the bulk manifold $M$ and other data) and view $\hat g$ as a function of $g$. We will express this as $\hat g=\cG_F(g)$, where
\be\label{eq:GFdef}
\cG_F(g)=\ell^{-2}g,
\ee
and $\ell(x)$ depends on $g$ and is defined in section~\ref{sec:dilaton}. Our goal is to determine which metrics are in the image of $\cG_F$.

It is easy to see that $\cG_F$ is not surjective -- there are many metrics which cannot be represented as $\cG_F(g)$ for any $g$. Indeed, by construction $\cG_F$ is Weyl-invariant,
\be
	\cG_F(e^{2\w}g)=\cG_F(g).
\ee
Furthermore, by~\eqref{eq:GFdef} $\cG_F(g)$ is Weyl-equivalent to $g$.
This implies that the image of $\cG_F$ contains precisely one metric from each conformal class. Such metrics are easy to characterise formally; the above identities imply that if $\hat g=\cG_F(g)$ for some $g$, then
\be\label{eq:general_metric_constraint}
	\cG_F(\hat g)=\hat g.
\ee
An important property of $\cG_F(g)$ is that its Taylor coefficients around $\cD_1$ depend locally on $g$. Therefore, the constraint~\eqref{eq:general_metric_constraint} can be rewritten as a set of constraints on the Taylor coefficients of $\hat g$. In the rest of this section we describe a simple way to derive these constraints.

Let $\hat \ell(x)$ be the function $\ell$ computed for a metric $\hat g$ which satisfies~\eqref{eq:general_metric_constraint}. Equation~\eqref{eq:general_metric_constraint} together with~\eqref{eq:GFdef} imply that $\hat \ell(x)=1$. It then follows from the definition~\eqref{eq:bulklengthscale} that the curves $\hat \g_x(t)$ constructed for the fusion metric satisfy 
\be\label{eq:uconst1}
	\dot {\hat \g}_x^2(1)=\dot {\hat \g}_x^2(0).
\ee
It is easy to check that $\hat \g_{\hat \g_x(s)}(t)=\hat \g_x(su)$ from the definition of $\hat \g_x$, and therefore
\be
	\dot {\hat \g}_x^2(st)=s^{-2}\dot{\hat \g}_{\hat \g_x(s)}^2(t).
\ee
Together with~\eqref{eq:uconst1}, this implies
\be\label{eq:uconst2}
	\dot {\hat \g}_x^2(t)=\dot {\hat \g}_x^2(0).
\ee
Conversely, if some metric $\hat g$ satisfies~\eqref{eq:uconst2} for all curves $\hat \g_x$ and $\hat\ell(z)=\sqrt{v\.v}=1$, then~\eqref{eq:bulklengthscale} implies $\hat \ell(x)=1$ and thus $\hat g$ satisfies~\eqref{eq:general_metric_constraint}.

The constraint~\eqref{eq:uconst2} can be equivalently replaced by local constraints, which together with $\hat\ell(z)=\sqrt{v\.v}=1$ read
\be
	\begin{cases}
		\ptl_t^n \dot {\hat \g}_x^2(t)\vert_{t=0}=0,& \forall n> 0\\
		v^2=1.
	\end{cases}
\ee
The above discussion implies that these conditions are equivalent to~\eqref{eq:general_metric_constraint}.
The first few of these constraints are
\be\label{eq:constraints}
	v^2=1, \quad a(0)\. u(0)=0,\quad \hat\nabla_t a(0)\. u(0)+a(0)\.a(0)=0,\quad \cdots,
\ee
where as usual $u=\hat \nabla_t \hat\g_x, \,a=\hat \nabla^2_t \hat\g_x$. In this formulation we can choose $u(0)$ and $\g_x(0)$ at will, since these quantities are the initial conditions for the curves $\hat \g_x$.

The constraint $v^2=\hat g_{\mu\nu}v^\mu v^\nu=1$ is expressed directly in terms of $\hat g$. On the other hand, the subsequent constraints such as $a(0)\. u(0)=0$ are more implicit. Using the initial condition $q^\mu=0$ and~\eqref{eq:qdefn}, we find that $a(0)\. u(0)=0$ is equivalent to
\be
	u^2(0) u^\s(0) \hp_\s =0.
\ee
This has to be satisfied for all $u(0)$ and for all points on $\cD_1$. This implies
\be\label{eq:II=0}
	\hp^\mu=0.
\ee
It then also follows from $q=0$, $v^2=1$ and~\eqref{eq:qdefn} that $a(0)=0$. Note that in the case $p=1$, $\hp^\mu=0$ is equivalent to $\cD_1$ being a geodesic for $\hat g$.

Consider now the condition $\hat\nabla_t a(0)\. u(0)+a(0)\.a(0)=0$. Using $a(0)=0$ and the conformal circle equation~\eqref{eq:ccequation} we find
\be
	0 = u^2(0) u^\mu(0) u^\nu(0) \hat P_{\mu\nu}.
\ee
This has to be satisfied for all $u(0)$, which implies that all normal components of the Schouten tensor $\hat P_{\mu\nu}$ vanish,
\be\label{eq:P=0}
	\hat P_{\mu\nu}=0,\quad \mu,\nu\text{ normal}.
\ee

Note that the number of constraints $v^2=1$,~\eqref{eq:II=0}, and~\eqref{eq:P=0} is precisely the same as the number of the coefficients in the Taylor expansion (in normal directions) of a scalar function up to the second order. This is expected since we are effectively writing out the equation $\hat\ell(x)=1$ order-by-order. Higher-order constraints can be derived in the same way by taking $\nabla_t$ derivatives of the conformal circle equation~\eqref{eq:ccequation}, and using the resulting equation in~\eqref{eq:constraints}. They will involve higher-derivative curvature invariants.

This discussion allows us to give an alternative characterisation of the fusion metric $\hat g$ for a given physical metric $g$: to the quadratic order in the distance from $\cD_1$, the fusion metric $\hat g$ is the unique metric in the conformal class of $g$ which satisfies $\hat g_{\mu\nu}v^\mu v^\nu=0$,~\eqref{eq:II=0}, and~\eqref{eq:P=0}. The scale factor $\ell(x)$ is then defined by $\hat g = \ell^{-2}g$. Existence and uniqueness are easy to check from the Weyl transformation laws of $\hp^\mu$ and $\hat P_{\mu\nu}$. If we classify the analogous constraints at all orders, they will provide an all-order characterisation of $\hat g$.

This characterisation is clearly not unique starting at the second order. Specifically, we could require that the purely normal components of $\hat R_{\mu\nu}$ vanish, or we could use any other purely normal symmetric rank-2 2-derivative tensor. Similar ambiguities exist at higher orders. Of course, one would have to verify that a metric solving such an alternative condition exists. The advantage of the construction in section~\ref{sec:dilaton} is that it is guaranteed to work at all orders. The minor drawback is that the constraints such as~\eqref{eq:P=0} have to be derived order-by-order rather than chosen at will.

\paragraph{Isometries of the fusion metric} Finally, we note that any conformal symmetry of the physical metric $g$ which preserves $\cD_1$ and does not modify\footnote{Note that here we refer to the value of $\ell$ no the defect.} $\ell(z)$ will become an isometry of the fusion metric $\hat g$. This is because $\hat g$ depends only on $\ell(z)$ and the conformal class of $g$. For example, transverse rotations around $\cD_1$ always preserve $\ell(z)$ (regardless of its value), so whenever they are available, they become isometries of $\hat g$. For example, this always applies for spherical defects in flat space.

\subsection{Example: spherical defects in flat space II}
\label{sec:exampleII}

Let us now consider how the construction of section~\ref{sec:dilaton} works in the case of flat space and spherical defects.

First of all, the flat space version of the conformal circle equation~\eqref{eq:ccequation} becomes
\be
		\ptl_t a^\mu = \frac{3(u\.a)}{u^2}a^\mu-\frac{3a^2}{2u^2}u^\mu.
\ee
Let us study its solutions. Using translations, we can assume that $\g^\mu(0)=0$. The remaining initial conditions are given by the vectors $a^\mu(0)$ and $u^\mu(0)$. Using rotations, we can make sure that these vectors lie in the 1-2 plane. Then the whole curve $\g^\mu(t)$ will be contained in this plane, and thus we can fully characterise the solution by the function
\be
	s(t) = \g^1(t)+i\g^2(t).
\ee
It is easy to check that in terms of $s(t)$ the conformal circle equation becomes simply
\be
	\{s(t),t\}\equiv \frac{\dddot{s}(t)}{\dot s(t)}-\frac{3}{2}\p{\frac{\ddot s(t)}{\dot s(t)}}^2=0,
\ee
where $\{s(t),t\}$ is the Schwarzian derivative of $s(t)$. The only functions with vanishing Schwarzian derivative are the fractional-linear transformations, i.e.\ functions of the form
\be
	s(t) = \frac{c_{11}t+c_{12}}{c_{21}t+c_{22}},
\ee
for some $c_{ij}\in\C$. Since $t$ takes values in $\R$, and the image of $\R$ under a fractional-linear map is a circle, it follows that the curve $\g$ follows a circle (possibly of infinite radius, i.e.\ a line). 

To express the solution in terms of initial data, define
\be
	\a = \frac{\ddot s (0)}{\dot s(0)^2},
\ee
and recall $s(0)=\g^1(0)+i\g^2(0)=0$. The solution takes the form
\be\label{eq:ssolution}
	s(t)=\frac{\dot s(0)t}{1-\frac{\a\dot s(0)t}{2}}.
\ee
This explicit form implies that 
\be\label{eq:circleinfinity}
	s(\oo)=-\frac{2}{\a},\quad \dot s(t)=\frac{4}{\a^2\dot s(0) t^2}+O(t^{-3}).
\ee 
These identities will be useful later.

Let us apply the above discussion to the curves $\g_x$ defined in section~\ref{sec:dilaton}. Firstly, the invariant $q$ defined by~\eqref{eq:qdefn} becomes, in flat space and for a flat defect $\cD_1$,
\be
	q^\mu = a^\mu(0)+\frac{1}{2}\ptl_a X_1^\mu u^2(0) v^{-2}\ptl^a v^2.
\ee
Setting $q=0$ then fixes
\be\label{eq:flat_a_initial}
	\frac{a^\mu(0)}{u^2(0)}=-\frac{1}{2}\ptl_a X_1^\mu  v^{-2}\ptl^a v^2=-\frac{1}{2}\ptl_a X_1^\mu  \ell^{-2}\ptl^a \ell^2.
\ee
Note that this in particular says that $a^\mu(0)$ is tangent to the defect. On the other hand, the initial velocity $u^\mu(0)$ is, by construction, orthogonal to the defect. 

Since $a(0)\.u(0)=0$, by choosing the coordinates appropriately, $\dot s(0)$ can be taken to real and positive, while $\ddot s(0)$ can be taken to be imaginary with positive imaginary part, so that $\a$ is imaginary. In this case the real axis is normal to the defect, while the imaginary axis is along the defect. Using~\eqref{eq:circleinfinity} we then see that $s(\oo)$ is imaginary and thus the conformal circle $\g(t)$ meets the flat defect at $t=\oo$. The conformal circle is normal to the defect at that point, since $\dot s(t)$ is real to the leading order according to~\eqref{eq:circleinfinity}.

Furthermore, note that the position of $\g(\oo)$ depends only on $\a$, i.e.\ the vector $a^\mu(0)/u(0)^2$. This vector is fixed by~\eqref{eq:flat_a_initial} and is independent of $u^\mu(0)$. This shows that given $\ell(z)=\sqrt{v^2(z)}$, for every point $z$ on $\cD_1$ there is a corresponding point $r_\ell(z)$ defined by $\g(0)=z$ and $r_\ell(z)=\g(\oo)$, and depending only on $z$ and $\ell$. For different choices of $u^\mu(0)$, the corresponding conformal circles all pass through $z$ at $t=0$ and $r_\ell(z)$ at $t=\oo$, being normal to $\cD_1$ at these intersections. 

While the above statements have been derived for a flat defect, the are conformally-invariant and therefore apply to any spherical defect. To determine the bulk length scale $\ell(x)$ at a given point $x$ in the bulk, one has to first determine on which conformal circle $x$ lies, parameterise the circle appropriately, and then apply~\eqref{eq:bulklengthscale}. These steps can in principle be performed without solving any differential equations by using the above statements.

In practice, it appears to be hard to carry out this procedure for a general defect length scale $\ell(z)$ or even for the $\ell(z)$ which arise from the displacements described in section~\ref{sec:exampleI}. However, some simple observations can be made in full generality. 

Spherical defects in flat space always admit transverse rotations. As discussed in section~\ref{sec:fusionmetricprops}, in the fusion metric transverse rotations become isometries. This places strong constrains on tensors with normal indices -- only their singlet components can be non-zero. For example, for spherical defects we always have
\be\label{eq:IIFull=0}
	\hp_{ab}^\mu = 0,
\ee
which is a strengthening of~\eqref{eq:II=0}. Similarly, condition~\eqref{eq:P=0} gets strengthened to
\be\label{eq:Pmore=0}
	\hat P_{\mu\nu}=0,\quad (\text{if $\mu$ or $\nu$ is normal}).
\ee
Equation~\eqref{eq:IIFull=0} and the Gauss equation~\eqref{eq:gauss} imply that the intrinsic Riemann tensor coincides with the restriction of the bulk Riemann tensor for the fusion metric. These and other constraints serve to reduce the number of independent couplings in the fusion effective action for spherical defects in flat space.

In the rest of this section we will explicitly work out the fusion metric in several examples where $\ell(z)$ is constant.

\paragraph{Example: line defects in $d=3$} 
We first consider the case of circular line defects in $d=3$, in situations when $\ell(z)$ is constant. Without loss of generality, we can assume that $\cD_1$ is a circle of radius $1$ centred around $0$ and lying in the 1-2 plane, $x^3=0$. Since $\ell(z)=\ell$ is constant, the map $z\to r_\ell(z)$ has to respect the Euclidean symmetries of this circle, which include reflections in planes normal to a diameter of $\cD_1$. The only option is
\be
	r_\ell(z) = -z,
\ee
where $-z$ denotes the point on $\cD_1$ diametrically opposite to $z$.

Consider now a bulk point $x_0$ and suppose that it is in the 1-3 plane, $x^2_0=0$. We will be able to recover the general case using rotations around the axis of $\cD_1$. We need to work in a neighbourhood of $\cD_1$, so we assume that $(x^1_0,x^3_0)$ is close to $(1,0)$. 

The conformal circle which passes through $x_0$ has to intersect $\cD_1$ at $x^2=0$ and $x^1=\pm 1$. Using the complex coordinate $s = x^1+ix^3$, the conformal circles which pass through $s=1$ at $t=0$ and through $s=-1$ at $t=\oo$ are given by
\be
	s(t) = \frac{1+\dot s(0)t/2}{1-\dot s(0)t/2}.
\ee
Requiring that $s(1)=s_0=x_0^1+ix_0^3$ we find
\be
	\dot s(0) = 2\frac{s_0-1}{s_0+1},\quad \dot s(1) = \frac{s_0^2-1}{2}.
\ee
This gives
\be
	\ell(x_0) = \frac{|\dot s(1)|}{|\dot s(0)|}\ell=\frac{|s_0+1|^2}{4}\ell.
\ee

If we now return to the case of general $x_0$ and parameterise it using the cylindrical coordinates $(r,\f,x^3)$ with $x^1=r\cos\f$, $x^2=r\sin\f$, we get
\be
	\ell(r,\f,x^3) = \frac{(x^3)^2+(r+1)^2}{4}\ell,
\ee
which is smooth in the neighbourhood of $\cD_1$ given by $r\approx 1, x_3\approx 0$ but has a singularity on the axis $r=0$. 

This metric best interpreted in the toroidal coordinates $\tau,\s,\f$ defined by $\tau\geq 0$, $\s\in[-\pi,\pi]$ and
\be
	r&=\frac{\sinh \tau}{\cosh\tau-\cos\s},\quad
	x^3=\frac{\sin \s}{\cosh\tau-\cos\s}.
\ee
In these coordinates we have
\be
	\ell(\tau,\s,\f)=\frac{e^\tau \ell}{2(\cosh\tau-\cos\s)}.
\ee
The physical flat-space metric is
\be
	ds^2=(\cosh\tau-\cos\s)^{-2}\p{d\tau^2+d\s^2+\sinh^2\tau d\f^2},
\ee
and the fusion metric is
\be
	\hat ds^2=\ell^{-2}(\tau,\s,\f)ds^2=4\ell^{-2}e^{-2\tau}\p{d\tau^2+d\s^2+\sinh^2\tau d\f^2}.
\ee
This metric is invariant under translations in $\s$ and $\f$. Translations in $\s$ are new as compared to the physical metric. They coincide with transverse rotations around $\cD_1$, in agreement with our general expectations. In $r,\f,x^3$ coordinates the only non-zero component of the Schouten tensor is
\be
	\hat P_{\f\f}=\frac{2r}{(1+r^2)+(x^3)^2}.
\ee
This obviously satisfies the condition~\eqref{eq:Pmore=0}.

\paragraph{Example: parallel dimension-$p$ flat defects} This is a continuation of an example in section~\ref{sec:exampleI}. As in section~\ref{sec:exampleI}, we shall place $\cD_1$ so that it passes through $x=0$, and spans the coordinate directions $1, 2, \cdots, p$. By translation and rotational symmetry of the defect, we can see that $r_\ell(z)=\infty$.

If we consider a point $x_0$ in the bulk, then without loss of generality, we can assume that $x_0=c \hat{e}_d$, as any other point can be recovered by transverse rotations and translations along the defect. In terms of $s(t)$, $r_\ell(z)=s(\oo)=\oo$ together with~\eqref{eq:circleinfinity} implies $\a=0$. Equation~\eqref{eq:ssolution} then implies that
\be
	s(t)=\dot{s}(0)t,
\ee
and thus $\g^\mu$ has constant velocity. Equation~\eqref{eq:bulklengthscale} then gives us that $\ell(x_0)=\ell(0)$. By symmetry under transverse rotations, and translations along the defect, this must hold for any $x_0$, and so $\ell(x)$ is a constant everywhere in the bulk.

We saw in \eqref{eq:ellParallel} that if we have a second parallel defect at a distance $L$ away, then $\ell(z)=L$ on the defect, and so we must also have that
\be
	\ell(x)=L
\ee
in the bulk as well. The fusion metric is the flat $d\hat s^2 = L^{-2}ds^2$.

\paragraph{Example: parallel flat line defects in $\R\x S^{d-1}$}
In order to study a pair of parallel flat line defects $\cD_1$ and $\cD_2$ in $\R\x S^{d-1}$, we shall again use the exponential map \eqref{eq:expMapCyl} to map the defects to rays in $\R\setminus\{0\}$ starting at zero and pointing in directions $\Omega_1$ and $\Omega_2$ respectively. We then take a bulk point, which in the $\R\setminus\{0\}$ picture is located at a point $x=e^{\tau_0}\Omega_0$ for some $\tau_0\in\R$, $\Omega_0\in S^{d-1}$.

Without loss of generality, we take
\be
\Omega_1=(1, 0, 0, \dots, 0),\qquad \Omega_0=(\cos\theta, \sin\theta, 0, 0, \dots 0).
\ee

By reflection symmetries $\gamma$ must lie in the $1-2$ plane, and by inversion symmetry under $\tau\to2\tau_0-\tau$ it must lie along the circle going through $x$ and $z=e^{\tau_0}\Omega_1$. The $\SO(d-1)$ rotations which preserve $\Omega_1$ imply that $r_\ell(z)=-z$.\footnote{Though this point is not on the defect, it is on the straight line obtained by extending $\cD_1$ through the origin.}  Letting $s(t)=\gamma^1(t)+i\gamma^2(t)$, $\tl{z}=e^{\tau_0}$, $\tl{x}=e^{\tau_0+i\theta}$ this implies
\be
s(t)=\frac{-\tl{z}t(\tl{z}-\tl{x})+\tl{z}(\tl{z}+\tl{x})}{t(\tl{z}-\tl{x})+(\tl{z}+\tl{x})}.
\ee
We can now use this expression in the conformal frame of $\R\x S^{d-1}$ to find
\be\label{eq:ell(x)forCylinder}
\ell(\tau, \theta) = \a\cos^2\frac{\theta}{2}.
\ee
This result holds in generality where $\alpha$ is the angle on the cylinder between $\cD_1$ and $\cD_2$, and $\theta$ is the angle between $\cD_1$ and $x$.

A useful set of coordinates on $S^{d-1}$ are the stereographic coordinates $u^i$ for $i=2\dots d$
\be
u^i=\frac{2\Omega^i}{1+\Omega^1}.
\ee
Note that the defect $\cD_1$ is at $u=0$. The fusion metric can be written as
\be\label{eq:cuspfusionmetric}
d\hat{s}^{2}&={\alpha ^{-2}}\sum_{i=2}^{d}du_i^2+\a^{-2}{(1+\tfrac{u^2}{4})^2}d\tau^2.
\ee
The only non-zero component of the corresponding Schouten tensor is 
\be
\hat{P}_{\tau\tau}&=-\frac{1}{2}\p{1+\frac{u^2}{4}},
\ee
and the Ricci scalar is given by
\be
\hat{R}&=2(d-1)\hat g^{\tau\tau}\hat{P}_{\tau\tau}=-\frac{(d-1) \alpha ^2}{1+\frac{u^2}{4}}\label{eq:RicciCylinder}.
\ee
Note for future reference that only the even-order derivatives of the fusion metric~\eqref{eq:cuspfusionmetric} are non-zero on the defect at $u=0$.

\section{Effective action}
\label{sec:Seff}

In this section we study the effective action for the fusion of two defects. We use the same conventions as in section~\ref{sec:confgeom}. In this section, $\g$ always refers to the defect metric.

\subsection{General comments}
\label{sec:generalcomments}

Recall that our working assumption is that the fusion of two conformal defects $\cD_1$ and $\cD_2$ can be viewed as an RG flow terminating at an IR conformal defect $\cD_\S$. Explicitly, in terms of correlation functions we expect
\be\label{eq:fusion_section}
	\<\cD_1\cD_2\cdots \>\approx \<\cD_\S[e^{-S_\text{eff}}]\cdots\>,
\ee
where $\cdots$ represent other insertions. Here, the characteristic distance $L$ between the two defects goes to zero, $L\to 0$, while all the other distances remain finite on the order of the IR scale $R$. The effective action $S_\text{eff}$ describes the end tail of the RG flow and has the general form
\be\label{eq:Seff_general_form}
	S_{\text{eff}} = \int d^p z\sqrt{\g} \p{\l_1\mathbf{1}+\sum_{\cO} \l_\cO\cO},
\ee
where $\l_\cO$ are various couplings. Here, we include the identity operator and a sum over (marginally) irrelevant operators primary $\cO$ that exist on $\cD_\S$. Descendant couplings can be omitted because they can be reduced to primary couplings by integration by parts. No (marginally) relevant operators are included because their existence would contradict the assumption that we reach the IR fixed point $\cD_\S$, see section~\ref{sec:transverse}. Finally, we do not include $O(1)$ 0-derivative couplings for the exactly marginal operators. Instead, we use the convention where these are absorbed by choosing for $\cD_\S$ an appropriate point in the conformal manifold. Thus, marginal couplings start at 1-derivative order or $O(L)$.

That the effective action~\eqref{eq:Seff_general_form} has to be an integral of a local Lagrangian density follows from the locality of the defects $\cD_1,\cD_2$, and $\cD_\S$. Correspondingly, the couplings $\l_\cO$ have to be local functionals of the various parameters present in the correlation function. The latter are the usual geometric quantities such as the bulk and defect metrics as well as the extrinsic defect geometry for $\cD_1$ and $\cD_2$ and their relative position.

In practice, it is useful to replace all the geometric information about $\cD_2$ by the displacement vector field $v^\mu(z)$ on $\cD_1$, defined in section~\ref{sec:displacement}, that describes the position of $\cD_2$ relative to $\cD_1$. Furthermore, we can assume that $\cD_\S$ is inserted along the same submanifold as $\cD_1$ --- as we discuss below, any changes to this can be absorbed into the coupling $\l_D$ for the displacement operator. Therefore, the couplings $\l_\cO$ have to be constructed from the bulk geometry, the geometry of $\cD_1$, and the vector field $v^\mu(z)$. 

Dimensional analysis implies that $\l_\cO\sim L^{\De_\cO-p}f(L/R)$, where $R$ is the IR scale. On the other hand, the dependence on $R$ can only appear as the powers of $1/R$ coming from derivatives of the metric and other parameters. Therefore, as is usual, the derivative expansion of the couplings $\l_\cO$ becomes an expansion in powers of $L/R$. If we are only interested in approximating the correlation function in the left-hand side of~\eqref{eq:fusion_section} to order $O(L^a)$, we can truncate the effective action~\eqref{eq:Seff_general_form} to only include terms with $\De_\cO+n-p\leq a$, where $n$ is number of derivatives in $\l_\cO$, and evaluate the conformal perturbation theory to $O(L^a)$. Notice that the only coupling that does not go to zero in the limit $L\to 0$ is the identity operator coupling $\l_1$.

Before proceeding with the more detailed analysis of the effective action, let us comment on the ``normalisation'' of the conformal defects $\cD_1, \cD_2$ and $\cD_\S$. Throughout the paper, we assume that these defects are normalised so that they are both local and Weyl-invariant up to Weyl anomalies. Note that changing the defect normalisation $\cD \to \l\cD$ for $\l\in \C$ is not a local modification of the defect action and thus breaks locality. In other words, the only normalisation changes that we allow are the scheme changes of the form
\be\label{eq:DnormalisationChange}
	\cD\to e^{\int d^pz \sqrt{\g}N}\cD,
\ee
where $N$ is a mass dimension $p$ diffeomorphism-invariant local quantity. In particular, it is impossible in general to set $\<\cD\>=1$, which is a normalisation sometimes used in the literature.
Note that unless $\sqrt{\g}N$ is Weyl-invariant, this change of scheme will change the form of the Weyl anomaly. We do not generally commit to a particular form of the Weyl anomaly.

If the action of $\cD_1$, $\cD_2$, or $\cD_\S$ is modified using~\eqref{eq:DnormalisationChange}, this may change the Wilson coefficients in the effective action $S_\text{eff}$. We do not consider such modifications of the effective action and treat the actions of the defects $\cD_1,\cD_2$, and $\cD_\S$ as fixed.

\paragraph{Conformal boundaries}

A special case of our discussion is when $\cD_1$ and $\cD_2$ are conformal boundaries, and the bulk CFT is sandwiched in between them. In this case, no bulk is left in the $L\to 0$ limit. The IR defect $\cD_\S$ becomes a $(d-1)$-dimensional theory. Generically, one expects $\cD_\S$ to be gapped, in which case only the identity operator appears in $S_\text{eff}$. If $\cD_\S$ is a CFT, our general discussion still applies.

\subsection{Parity and reality conditions}
\label{sec:parity}

Both the bulk CFT and the defects $\cD_1,\cD_2$ can preserve or break space parity. In this section we briefly discuss the possible breaking patterns on general manifolds $M$ and the implications for reality conditions on the effective couplings $\l_\cO$.

If the bulk CFT preserves parity, this means that its partition function on a general manifold $M$ can be computed without specifying an orientation of $M$.\footnote{A parity-preserving theory may have parity-odd local operators, which require specification of an orientation at the point of their insertion.} In particular, it can be studied on non-orientable manifolds. If the bulk CFT breaks parity, we need to specify an orientation of $M$ in order to compute the partition function. In any case, we will assume that the bulk theory satisfies the following Hermiticity axiom \cite{Atiyah:1989vu, Kontsevich:2021dmb},
\be
	\bar{\cZ(M)}=\cZ(\bar M),
\ee
where $\cZ(M)$ denotes the partition function on a manifold $M$ and $\bar M$ denotes $M$ with its orientation reversed. In particular, parity-preserving theories have real partition functions. This identity allows $M$ to have a boundary, in which case $\cZ(M)$ is valued in an appropriate vector space and $\cZ(\bar M)$ in the complex conjugate vector space. This Hermiticity axiom (or a somewhat restricted version thereof) is necessary to be able to formulate reflection positivity.

A conformal defect can have a richer parity breaking structure. If the bulk is parity-preserving, the defect can break reflections along the defect, transverse to the defect, or both. For example, Wilson lines in complex representations break parity along the defect, while monodromy defects generally break (at least the naive) transverse reflections. On a general manifold $M$ this means that in order to compute partition functions involving a defect $\cD$, we may need to specify an orientation on $\cD$, as well as an orientation in the normal bundle of $\cD$.

If the bulk CFT breaks parity, these two orientations are not independent since they can be related by the bulk orientation. By convention, we can agree that, if any orientation needs to be specified for the defect, it is only the defect orientation. Thus, if the bulk CFT breaks parity, the defect still may break or preserve parity along the defect. Preserving defect parity while breaking bulk parity is possible because a defect reflection can be achieved by a bulk rotation. 

In the presence of a defect the Hermiticity axiom needs to be modified as follows,
\be
	\bar{\cZ(M,\cD)}=\cZ(\bar M,\bar \cD),\label{eq:Hermiticity}
\ee
where $\bar\cD$ denotes $\cD$ with the defect orientation reversed, but the normal bundle orientation unchanged.\footnote{To see that the normal orientation has to be the same in $\cD$ and $\bar \cD$, consider the statement of reflection positivity when $\cD$ is normal to the $t=0$ time slice.} We will assume that this axiom is satisfied.

The Hermiticity axiom leads to reality conditions on the couplings $\l_\cO$ in the effective action $S_\text{eff}$. In the simplest scenario, whatever parities are broken by $\cD_\S$, the respective orientations are determined in some way from the orientations of $\cD_1$ and $\cD_2$. This means that if 
\be
	\cD_1\cD_2\approx \cD_\S[e^{-S_\text{eff}}]
\ee
then
\be\label{eq:ccfusion}
	\bar\cD_1\bar\cD_2\approx \bar\cD_\S[e^{-\bar{S_\text{eff}}}],
\ee
where $\bar{S_\text{eff}}$ applies complex conjugation and reverses all orientations\footnote{The bulk and the defect orientations, but not the normal bundle orientation.} that were used to construct the action. Using this in the Hermiticity axiom for $\cZ(M,\cD_1,\cD_2)$ we deduce that 
\be
	S_\text{eff}=\bar{S_\text{eff}}
\ee
on any manifold $M$. This implies that parity-even terms in $S_\text{eff}$ have to be real, while parity-odd terms, if at all allowed, have to be imaginary. Here, parity-even and parity-odd refers to the behaviour under the reversal of bulk and defect orientations, but not the normal bundle orientation.

In principle it might be possible that the orientation of $\cD_\S$ is obtained from some curvature invariants rather than from the orientations of $\cD_i$, in which case~\eqref{eq:ccfusion} might require a more careful treatment.\footnote{In other words, the defects $\cD_1$ and $\cD_2$ might be fully-parity preserving, but the particular fusion configuration might be chiral. E.g.\ already the fusion of two unoriented distinct line defects in $d=3$ breaks the bulk parity by the quantity $v^{[\mu} (\nabla^\perp)^\nu v^{\l]}$.} However, this requires a rather exotic behaviour where the theory $\cD_\S$ is different for flat and curved settings, which can only happen under certain conditions. See section~\ref{sec:transverse} for a discussion of this in the context of transverse symmetry breaking.

\subsection{Weyl invariance}
\label{sec:weyl_inv}

The action~\eqref{eq:Seff_general_form} is constrained by diffeomorphism and Weyl invariance. If we ignore the possible anomalies (we return to them in section~\ref{sec:anomalies}), $S_\text{eff}$ has to be Weyl- and diffeomorphism-invariant.

While it is easy to build diffeomorphism invariants by properly contracting indices, the construction of Weyl invariants is, in general, a much more subtle task (see e.g.~\cite{Fefferman:2007rka}). Fortunately, as we have shown in section~\ref{sec:dilaton}, the geometry of a pair of defects allows us to construct a Weyl-weight 1 scalar function $\ell(x)$ in a neighbourhood of $\cD_1$. On $\cD_1$ this function restricts to a measure of the local distance between $\cD_1$ and $\cD_2$ and in particular $\ell(x)\sim L$. The non-trivial statement of section~\ref{sec:dilaton} is that $\ell(x)$ is also defined in the bulk close to $\cD_1$.

Under a Weyl transformation $g_{\mu\nu}(x)\to e^{2\w(x)}g_{\mu\nu}(x)$ this function transforms as
\be
	\ell(x)\to e^{\w(x)}\ell(x).
\ee
In other words, $\log \ell(x)$ plays the role of a ``dilaton''. It allows us to define a new metric
\be
	\hat g_{\mu\nu}(x)=\ell(x)^{-2}g_{\mu\nu}(x)
\ee
which is invariant under the Weyl transformations of $g$. As explained in section~\ref{sec:dilaton}, any Weyl invariant constructed from $g$ can be written as a diffeomorphism invariant of $\hat g$. We will use hat to denote various geometric quantities in $\hat g$, e.g.\ $\hp^\mu_{ab}$ is the second fundamental form. Note that in this metric $v^\mu$ has unit length,
\be\label{eq:v_is_unit}
	\hat g_{\mu\nu} v^\mu v^\nu = 1.
\ee
We will often raise, lower and contract indices using $\hat g_{\mu\nu}$ instead of $g_{\mu\nu}$. It should always be clear from the context which metric is used.

In order for~\eqref{eq:Seff_general_form} to be Weyl-invariant, the couplings $\l_\cO$ will have to transform non-trivially under Weyl transformations to compensate for the transformations of the operators $\cO$ and the measure $\sqrt{\g}$. It is therefore convenient to introduce new Weyl-invariant couplings using the following procedure. Firstly, we recall that there is no need to include descendants in~\eqref{eq:Seff_general_form}. Secondly, for primary operators we define\footnote{The shifts by the number of indices follow from the relation~\eqref{eq:dimweylrelation} between the scaling dimension and the Weyl weight. The indices on $\cO$ have to be raised and lowered by $g_{\mu\nu}$ while the indices on $\hat\cO$ are raised and lowered by $\hat g_{\mu\nu}$.}
\be
	\hat \cO^{\mu_1\cdots \mu_m}_{\nu_1\cdots \nu_n} \equiv \ell^{\De_\cO-n+m}\cO^{\mu_1\cdots \mu_m}_{\nu_1\cdots \nu_n},
\ee
where the indices can be normal or defect indices. The operators $\hat \cO$ are invariant under Weyl transformations of $g_{\mu\nu}$.\footnote{Anomalous behaviour under Weyl transformations is possible; this is a scheme-dependent question that we discuss in section~\ref{sec:displacment_and_schemes}.}

The effective action~\eqref{eq:Seff_general_form} can now be written as (note that $\hat{\mathbf{1}}=\mathbf{1}$)
\be
	\label{eq:Seff_general_form_Weyl}
	S_\text{eff} = \int d^p z\sqrt{\hat \g} \p{\l_{\hat 1}(z)\mathbf{1}+\sum_{\hat \cO} \l_{\hat \cO}(z)\hat \cO(z)}.
\ee
Weyl-invariance of this action is now equivalent to Weyl-invariance of $\l_{\hat \cO}$. Here we keep the indices implicit; if $\hat \cO$ has spin indices, they have to be contracted with dual indices in $\l_{\hat \cO}$. The couplings $\l_{\hat \cO}$ can be constructed as general diff invariants of $\hat g_{\mu\nu}, v^\mu$, and the shape of~$\cD_1$. The constraints on~$\hat g$ discussed in section~\ref{sec:fusionmetricprops} have to be taken into account.

\subsection{Cosmological constant term, defect spectrum, and unitarity}
\label{sec:cosmo}
The leading allowed term in the effective action is the no-derivatives part of $\l_{\hat 1}$, i.e.\ the leading coupling to the identity operator. The only possible term is
\be
	S_{\text{eff}}^{(0)}= -a_0\int\,d^p z\,\sqrt{\hga},
	\label{cosmo-const}
\ee
where $a_0\in \R$ is a Wilson coefficient, and the sign is introduced for future convenience. Here and in what follows, we often do not write out the identity operator $\mathbf{1}$ explicitly. In terms of the original metric this becomes
\be\label{cosmo-const2}
	S_{\text{eff}}^{(0)}= -a_0\int\,d^p z\,\sqrt{\g}\ell^{-p}\sim L^{-p}.
\ee

This term has a simple interpretation in terms of the Hilbert space. For example, assume that $p=1$ and consider the theory on the cylinder $M=\R\x S^{d-1}$, where $\cD_1$ and $\cD_2$ are running along $\R$, at fixed positions on $S^{d-1}$. Let $H$ be the Hamiltonian generating the time translations along $\R$. We can then ask how the spectrum of $H$ behaves as the distance between $\cD_1$ and $\cD_2$ goes to $0$.

Translated to this language, the relation~\eqref{eq:fusion_section} states that the spectrum of $H$ can be approximated by
\be\label{eq:Happrox}
	H\overset{\text{spectrum}}{\approx}-a_0 \ell^{-1}\mathbf{1}+O(\ell^0)\mathbf{1}+H_\S+\text{small},
\ee
where the ``small'' terms contain irrelevant (or marginal with $O(L)$ couplings) operators and tend to 0 as $L\to 0$, while $H_\S$ is the Hamiltonian for the setup in which $\cD_1$ and $\cD_2$ are replaced by $\cD_\S$. The right-hand side above acts on the Hilbert space of $\cD_\S$.
This shows that $-a_0\ell^{-1}$ gives the leading contribution to the ground state energy in the limit where $\cD_1$ and $\cD_2$ get close to each other. If we treat $\cD_1$ and $\cD_2$ as representing external particles added into the system, $-a_0\ell^{-1}$ becomes the leading term in the effective interaction potential between these particles, as mediated by the CFT. This picture obviously generalises to $p>1$ in which case external point particles are replaced by some extended objects. In $p=1$ case the ground state energy $-a_0\ell^{-1}+O(1)$ of $H$ is related to cusp anomalous dimension in the limit of deflection angle $\f\to \pi$,
\be
	\G_\text{cusp}(\f)=-\frac{a_0}{\pi-\f}+O(1).
\ee
We explore this connection further in section~\ref{sec:Wilson}.

\begin{figure}[t]
	\centering
	\scalebox{0.8}{\begin{tikzpicture}
	\begin{pgfonlayer}{nodelayer}
		\node [style=none] (8) at (-4, 1) {};
		\node [style=none] (9) at (5.25, 1) {};
		\node [style=none] (10) at (-5.25, -1) {};
		\node [style=none] (11) at (4, -1) {};
		\node [style=none] (24) at (0, 1.75) {};
		\node [style=none] (25) at (0, 3.25) {};
		\node [style=none] (32) at (4.075, 1.5) {$\frac{L}2$};
		\node [style=none] (33) at (4.075, -1.5) {$\frac{L}2$};
		\node [style=none] (34) at (5.75, 0.5) {$x^1=0$};
		\node [style=none] (36) at (-4.25, 2.5) {$\mathcal{D}_2$};
		\node [style=none] (37) at (-4.25, -2.5) {$\mathcal{D}_1$};
		\node [style=none] (38) at (0, -0.75) {};
		\node [style=none] (39) at (0, 0.75) {};
		\node [style=none] (40) at (0, -3.25) {};
		\node [style=none] (41) at (0, -1.75) {};
		\node [style=none] (42) at (3.75, 2.25) {};
		\node [style=none] (43) at (3.75, 0.25) {};
		\node [style=none] (45) at (3.75, -2.25) {};
		\node [style=none] (47) at (0, 3.75) {};
		\node [style=none] (48) at (0, -3.75) {};
		\node [style=none] (49) at (3.75, -0.25) {};
	\end{pgfonlayer}
	\begin{pgfonlayer}{edgelayer}
		\draw [style=new edge style 8, line width=1.2pt, in=-180, out=180, looseness=8.50] (24.center) to (25.center);
		\draw [style=new edge style 7, line width=1.2pt, in=0, out=0, looseness=8.50] (24.center) to (25.center);
		\draw [style=new edge style 13, line width=1.2pt, bend left=90, looseness=8.50] (38.center) to (39.center);
		\draw [style=new edge style 13, line width=1.2pt, in=0, out=0, looseness=8.50] (38.center) to (39.center);
		\draw [style=new edge style 14, line width=1.2pt, in=180, out=-180, looseness=8.50] (41.center) to (40.center);
		\draw [style=new edge style 14, line width=1.2pt, in=0, out=0, looseness=8.50] (40.center) to (41.center);
		\draw (8.center) to (9.center);
		\draw (8.center) to (10.center);
		\draw (10.center) to (11.center);
		\draw (11.center) to (9.center);
		\draw [style=new edge style 4] (49.center) to (45.center);
		\draw [style=new edge style 4] (43.center) to (42.center);
	\end{pgfonlayer}
\end{tikzpicture}}
	\caption{The reflection positive set-up for theorem~\ref{thm:oppositesattract}.}
	\label{fig:ref_pos}
\end{figure}
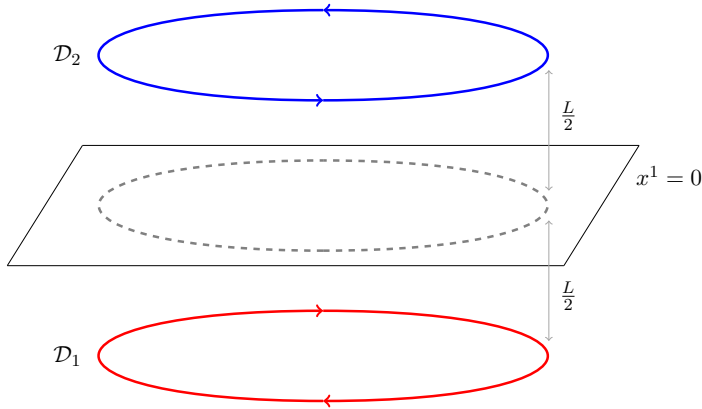

It is easy to show that in a unitary CFT the effective potential is attractive if $\cD_1\simeq \bar{\cD_2}$ (see section~\ref{sec:parity} for the meaning of the conjugation).
\begin{theorem}\label{thm:oppositesattract}
	If $\cD_2$ is conjugate to $\cD_1$ then $a_0\geq 0$.
\end{theorem}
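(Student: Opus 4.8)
The plan is to obtain the bound from reflection positivity of the bulk CFT, applied to the reflection-symmetric fusion configuration of figure~\ref{fig:ref_pos}. Realise $\cD_1$ and $\cD_2=\bar\cD_1$ as two parallel flat $p$-planes, placed symmetrically at distance $L/2$ on either side of a reflection hyperplane $\Pi$, so that the reflection $\Theta$ through $\Pi$, combined with complex conjugation, exchanges $\cD_1\leftrightarrow\cD_2$ and is a symmetry of the configuration. Because $\cD_2$ is conjugate to $\cD_1$, the Hermiticity axiom~\eqref{eq:Hermiticity} then guarantees that the partition function $Z(L)\equiv\<\cD_1\cD_2\>$ is real. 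To keep $Z(L)$ finite one should regulate (wrap the directions along the defects on compact cycles, or equivalently pass to parallel line defects on $\R\times S^{d-1}$ as in section~\ref{sec:Wilson}); since the Wilson coefficient $a_0$ is universal it may be read off from any such family.

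Next I would cut the Euclidean path integral along $\Pi$. The half on the $\cD_1$ side, with $\cD_1$ sitting a distance $s$ from $\Pi$, prepares a state $|\cD_1;s\rangle$ in the (possibly defect) Hilbert space on $\Pi$, and reflection positivity of the unitary CFT makes the natural pairing on this Hilbert space positive semi-definite, with $\langle\cD_1;s_1|\cD_1;s_2\rangle=Z(s_1+s_2)$ by translation invariance transverse to $\Pi$. Cauchy--Schwarz, $|\langle\cD_1;s_1|\cD_1;s_2\rangle|^2\le\langle\cD_1;s_1|\cD_1;s_1\rangle\,\langle\cD_1;s_2|\cD_1;s_2\rangle$, then reads $Z\big(\tfrac{L_1+L_2}{2}\big)^2\le Z(L_1)Z(L_2)$. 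Since $Z(L)>0$ for small $L$ (by the EFT, next paragraph), taking logarithms shows that $\log Z(L)$ is a midpoint-convex, hence convex, function of the separation $L$ on $(0,\infty)$.

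Finally I would feed in the prediction of the cosmological-constant term~\eqref{cosmo-const}. In these kinematics $\ell\propto L$ is constant along the defect (cf.~\eqref{eq:ellParallel},~\eqref{eq:ellCylinder}), so $S^{(0)}_{\text{eff}}=-a_0\int d^pz\sqrt{\hga}=-a_0\,c\,L^{-p}$ with $c>0$ the (regulated) Weyl-invariant volume of $\cD_\S$, while all higher couplings vanish by flatness of the fusion metric; hence~\eqref{eq:fusion_section} gives $\log Z(L)=a_0\,c\,L^{-p}+\big(\text{terms subleading as }L\to0\big)$. Since $\tfrac{d^2}{dL^2}\big(a_0 c\,L^{-p}\big)=a_0\,c\,p(p+1)\,L^{-p-2}$ dominates $\tfrac{d^2}{dL^2}\log Z$ as $L\to0$, convexity forces $a_0\,c\ge0$, i.e.\ $a_0\ge0$.

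The step requiring the most care is the second one: making precise that reflection positivity yields a genuine positive semi-definite inner product on the defect Hilbert space associated with the half-space-with-$\cD_1$ path integral, and that gluing the reflected half really reproduces the $\cD_1$--$\cD_2$ correlator at the doubled separation. This is where both the unitarity of the bulk and the hypothesis $\cD_2=\bar\cD_1$ (so that the configuration is genuinely $\Theta$-invariant, with the correct normal-bundle orientation as in footnote to~\eqref{eq:Hermiticity}) are used. A secondary, purely technical point is choosing a regularisation for which $Z(L)$ is finite with a clean asymptotic expansion as $L\to 0$; the explicit computations of section~\ref{sec:exampleII} show this can be arranged.
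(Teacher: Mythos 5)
Your proposal is correct, but it reaches the conclusion by a somewhat different route than the paper. The paper works directly with a compact configuration --- $\cD_1$ a unit $p$-sphere at $x^1=-L/2$ and $\cD_2=\bar\cD_1$ at $x^1=+L/2$ --- so that $f(L)=\<\cD_1\cD_2\>=\<\Psi(L)|\Psi(L)\>$ is finite with no regulator, and then uses positivity of the Hamiltonian, $\|e^{-\de L H}\|\le 1$, to conclude that $f(L)$ is \emph{monotonically decreasing} in $L$; comparing with $\log f(L)=a_0\vol S^p L^{-p}+\cdots$ immediately forces $a_0\ge 0$. You instead extract \emph{log-convexity} of $Z(L)$ from the Cauchy--Schwarz inequality applied to the half-space states at different separations, and then argue that a convex $\log Z$ is incompatible with the asymptotics $a_0\,c\,L^{-p}$, $a_0<0$ (your phrasing in terms of second derivatives is slightly loose since a convex function need not be twice differentiable, but the conclusion is sound: a convex function on an open interval cannot tend to $-\infty$ at an endpoint). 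Both inequalities are consequences of reflection positivity; interestingly, your Cauchy--Schwarz step is essentially the mechanism the paper reserves for its theorem~\ref{thm:cauchyschwarz}, applied there to pairs of \emph{distinct} defects rather than to a single defect at two separations. Your choice of flat parallel planes costs you an IR regulator (which the paper's spherical geometry avoids), and like the paper you implicitly need $\<\cD_\S\>\neq 0$ so that $Z(L)>0$ for small $L$ --- an assumption the paper states explicitly at the end of its proof and which you should likewise flag.
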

\begin{proof}
	Consider the configuration where $\cD_1$ is a $p$-sphere of radius 1 lying in the hyperplane $x^1=-L/2$, and $\cD_2$ is obtained from it by applying the Hermitian conjugation in the quantisation where Euclidean time $t=x^1$. Then $\cD_2$ is at $x^1=+L/2$, see figure~\ref{fig:ref_pos}. Furthermore,
	\be
		f(L)\equiv \<\cD_2\cD_1\> = \<\Psi(L)|\Psi(L)\>\geq 0,
	\ee
	where $|\Psi(L)\>$ is the state on $x^1=0$ hyperplane created by the path integral over the half-space containing $\cD_1$. We claim that $f(L)$ is a monotonically decreasing function of $L$. Indeed, 
	\be
		f(L+\de L)=\<\Psi(L)|e^{-\de L H}|\Psi(L)\>,
	\ee
	where $H$ is the Hamiltonian appropriate for our quantisation. Since $H\geq0$, it follows that $\|e^{-\de L H}\|\leq 1$ for $\de L>0$ and thus 
	\be
		f(L+\de L)=\<\Psi(L)|e^{-\de L H}|\Psi(L)\>\leq \<\Psi(L)|\Psi(L)\> =f(L).
	\ee
	On the other hand, for small $L$ we have
	\be
		\log f(L) = a_0 \vol S^{p} L^{-p}+\cdots
	\ee
	which can only be decreasing for $a_0\geq 0$.

This proof as well as the proof of theorem~\ref{thm:cauchyschwarz} below relies on the assumption that  $\<\cD_\S\>\neq 0$ for a spherical $\cD_\S$ in flat space. If $\<\cD_\S\>=0$ for a unitary defect, then it is easy to show that all correlation functions of a spherical $\cD_\S$ in flat space with arbitrary insertions on and off $\cD_\S$ have to vanish.
\end{proof}

As an example, consider the free Maxwell theory in 4d, with $\cD_1$ and $\cD_2$ being Wilson lines with charges $q_1$ and $q_2$. Reversing the orientation of a Wilson line is equivalent to changing the sign of the charge, so we partially fix this ambiguity by requiring that the orientations of $\cD_1$ and $\cD_2$ are such that they agree in the limit $L\to0$. Theorem~\ref{thm:oppositesattract} then says that if $q_1=-q_2$ then the potential between these Wilson lines is attractive. In fact, we have in this case
\be
	a_0 = -q_1q_2=q_1^2>0.
\ee

Let us denote the defect conjugate to $\cD_i$ by $\cD_{\bar i}=\bar{\cD_i}$, and the constant $a_0$ appearing in the fusion of $\cD_i$ with $\cD_j$ by $a_0^{ij}$. In the Maxwell theory example we then have $q_{\bar i}=-q_i$ and $a_0^{ij}=-q_iq_j$. In this case it follows from the arithmetic-geometric mean inequality that
\be
	a_0^{ij}=-q_iq_j\leq \frac{q_i^2+q_j^2}{2}= \frac{a^{i\bar i}_0+a^{j\bar j}_0}{2}.
\ee
In other words, the charges $i$ and $j$ cannot attract more strongly than the average attraction in pairs $i\bar i$ and $j\bar j$. This result holds more generally:
\begin{theorem}\label{thm:cauchyschwarz}
For any pair of defects $\cD_i$ and $\cD_j$ the following inequality holds,
\be
	a^{ij}_0\leq \frac{a^{i\bar i}_0+a^{j\bar j}_0}{2}.
\ee
Note that the right-hand side is always non-negative by theorem~\ref{thm:oppositesattract}.
\end{theorem}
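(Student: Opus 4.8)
The plan is to mimic the reflection-positivity argument behind theorem~\ref{thm:oppositesattract}, but with the elementary bound $\langle\Psi|\Psi\rangle\geq 0$ replaced by Cauchy--Schwarz for two \emph{different} states. I would work in flat space and use the same configuration as in theorem~\ref{thm:oppositesattract}: a unit $p$-sphere lying entirely in the hyperplane $x^1=-L/2$, so that it is disjoint from the $x^1=0$ slice; let $\mathcal H$ be the defect-free Hilbert space on $x^1=0$. Define $|\Psi_i(L)\rangle\in\mathcal H$ to be the state produced by the path integral over $x^1<0$ with $\cD_i$ inserted on this sphere, and $|\Phi_j(L)\rangle\in\mathcal H$ the state produced the same way but with $\bar\cD_j$ in place of $\cD_i$.

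Reflection across $x^1=0$ combined with complex conjugation sends a defect at $x^1=-L/2$ to its conjugate at $x^1=+L/2$, exactly as in the proof of theorem~\ref{thm:oppositesattract}. Hence the three relevant inner products are genuine flat-space partition functions,
\be
	\langle\Psi_i(L)|\Psi_i(L)\rangle=\<\cD_i\bar\cD_i\>,\qquad
	\langle\Phi_j(L)|\Phi_j(L)\rangle=\<\cD_j\bar\cD_j\>,\qquad
	\langle\Phi_j(L)|\Psi_i(L)\rangle=\<\cD_i\cD_j\>,
\ee
where in each case the two defects sit on unit $p$-spheres in the parallel hyperplanes $x^1=\mp L/2$. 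Each of these is the fusion configuration of the indicated pair, so by the EFT ansatz~\eqref{eq:fusion_section} and the cosmological-constant term~\eqref{cosmo-const2} --- and using that the leading identity coupling $a_0$ is real, being a parity-even, $0$-derivative coupling --- I would record the small-$L$ asymptotics
\be
	\log\<\cD_i\bar\cD_i\>&=a_0^{i\bar i}\,\vol S^p\,L^{-p}+o(L^{-p}),\\
	\log\<\cD_j\bar\cD_j\>&=a_0^{j\bar j}\,\vol S^p\,L^{-p}+o(L^{-p}),\\
	\log\bigl|\<\cD_i\cD_j\>\bigr|&=a_0^{ij}\,\vol S^p\,L^{-p}+o(L^{-p}),
\ee
with the \emph{same} geometric factor $\vol S^p$ in all three lines, since $\int d^pz\sqrt\g\,\ell^{-p}$ for a pair of unit $p$-spheres at separation $L$ is independent of which defect species are placed on them.

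The last step is to apply Cauchy--Schwarz in $\mathcal H$ at fixed $L>0$,
\be
	\bigl|\langle\Phi_j(L)|\Psi_i(L)\rangle\bigr|^2\leq\langle\Psi_i(L)|\Psi_i(L)\rangle\,\langle\Phi_j(L)|\Phi_j(L)\rangle,
\ee
take logarithms, insert the asymptotics, divide by $2\,\vol S^p\,L^{-p}>0$, and send $L\to 0$; this gives precisely $a_0^{ij}\leq\tfrac12(a_0^{i\bar i}+a_0^{j\bar j})$, and non-negativity of the right-hand side is theorem~\ref{thm:oppositesattract}. As there, I would note that the argument assumes $\<\cD_\S\>\neq 0$ for the three IR defects that appear, with the degenerate case handled separately exactly as in the proof of theorem~\ref{thm:oppositesattract}. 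I expect the only real subtlety to be the bookkeeping in the middle step: correctly matching $\langle\Phi_j(L)|\Psi_i(L)\rangle$ to the $\cD_i\cD_j$ fusion two-point function with the right orientations of all defects and normal bundles, and checking that the geometric prefactor is literally the same number $\vol S^p$ in all three correlators so that it cancels cleanly --- the Cauchy--Schwarz input itself is immediate.
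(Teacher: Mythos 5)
Your proposal is correct and is essentially the paper's own argument: the paper likewise defines the states $|\Psi_i(L)\>$ created by the path integral with $\cD_i$ inserted, applies Cauchy--Schwarz to $|\<\Psi_{\bar i}(L)|\Psi_j(L)\>|^2\leq \<\Psi_{\bar i}(L)|\Psi_{\bar i}(L)\>\<\Psi_j(L)|\Psi_j(L)\>$, and takes the $L\to 0$ limit using the common leading asymptotics $a_0\vol S^p L^{-p}$. Your write-up merely spells out the orientation/conjugation bookkeeping and the cancellation of the geometric prefactor, which the paper leaves implicit.
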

\begin{proof}
This follows similarly to theorem~\ref{thm:oppositesattract} from the $L\to 0$ limit of the Cauchy-Schwarz inequality
\be
|\<\Psi_{\bar i}(L)|\Psi_j(L)\>|^2\leq |\<\Psi_{\bar i}(L)|\Psi_{\bar i}(L)\>| |\<\Psi_j(L)|\Psi_j(L)\>|,
\ee
where $|\Psi_i(L)\>$ is created by $\cD_i$.
\end{proof}

A common scenario is when $\cD_1$ and $\cD_2$ (or $\bar\cD_2$) belong to the same sufficiently symmetric conformal manifold. In this case $a_0$ becomes a function $a_0(s)$ of the distance $s$ between $\cD_1$ and ${\cD_2}$ on this conformal manifold. This is the case for magnetic line defects in $O(N)$ models and for Maldacena-Wilson lines in $\cN=4$ SYM, where in both cases $s$ becomes the angle between the scalar couplings. Theorem~\ref{thm:cauchyschwarz} then implies that $a_0(0)$ should be the absolute maximum of the function $a_0(s)$. This is indeed satisfied by explicit results for the Wilson lines at weak and strong coupling~\cite{Maldacena:1998im, Drukker:1999zq}, and for $O(N)$ models in $\e$-expansion~\cite{Diatlyk:2024zkk}.

\subsection{$\cD_\S$ is generically simple}
\label{sec:simplicity}

One objection that can be raised against~\eqref{eq:fusion_section} is that the right-hand side involves a single defect $\cD_\S$ rather than a sum over contributions from many defects. After all, this is what happens in the local operator product expansion and also in the fusion of topological defects.

The resolution to this is twofold. Firstly, sums of defects of the form
\be
	\a_1\cD_{\S_1}+\a_2\cD_{\S_2}+\cdots
\ee
with arbitrary coefficients $\a_i$ are not possible, since considering $\a_1\cD_{\S_1}$ amounts to adding $-\log \a_1$ to the action of $\cD_{S_1}$. But $-\log \a_1$ is not an integral of a local quantity over the defect and thus will break locality of the defect in the sense of the usual cutting and gluing rules for the path integral. It is however possible to consider direct sums of defects 
\be
	\cD_{\S_1}+\cD_{\S_2}+\cdots.
\ee
This can formally be expressed by saying that we allow $\cD_\S$ to be non-simple. Fusion of topological defects generally leads non-simple defects which decompose into direct sums of simple defects.

Secondly, allowing non-simple $\cD_\S$ generically requires going beyond perturbative effective field theory description in the sense that we describe below. If we assume
\be
	\cD_{\S}=\sum_{i}\cD_{\S_i},
\ee
where $\cD_{\S_i}$ are simple, then the effective action~\eqref{eq:Seff_general_form_Weyl} can include local operators from any of the $\cD_{\S_i}$, including a separate identity operator for each of them.\footnote{There is a subtlety in the case of line defects which we discuss below.} In particular, this means that each $\cD_{\S_i}$ comes with its own cosmological constant term $a_{0,i}$, and we have no general reason to expect any coincidences between the values of the $a_{0,i}$. In other words, generically we expect that $a_{0,i}$ are all distinct numbers. As will become clear momentarily, even if the number of defects is infinite, we still expect $a_{0,i}$ to be bounded from above.  Without loss of generality, we can then assume that the defects are labelled so that $a_{0,1}>a_{0,2}>\cdots$.

In this situation, the defect $\cD_{\S_1}$ will give the leading contribution to any partition function, and the defects $\cD_{\S_i}$ with $i>0$ will be suppressed by exponential amounts $e^{-c(a_{0,1}-a_{0,i})L^{-p}}$ for some $c>0$. Note that existence of an upper bound on $a_{0,i}$ is therefore required to have a meaningful leading term. The conformal perturbation theory on $\cD_{\S_1}$ is at any order only power-law suppressed in $L$ relative to the leading term in $\cD_{\S_1}$, and therefore the contributions of $\cD_{\S_i}$ with $i\geq 2$ have to be viewed as non-perturbative effects.

We conclude that if we stay within the framework of perturbative effective field theory, we can generically assume that $\cD_\S$ is a simple defect. This can only be violated if there is some mechanism which makes the leading couplings\footnote{The $n$-derivative couplings to identity operator with $n<p$, scaling as negative powers of $L$.} coincide for two defects. The latter happens, for example, for topological defects, where their topological nature sets all these couplings to zero. 

Another example is the fusion of conformal defects in 2d Ising model, studied in~\cite{Bachas:2013ora}. These conformal defects are denoted $\cD_{\cO,\L}$, labelled by a Virasoro primary $\cO\in \{1,\e,\s\}$ and a matrix $\L\in \mathrm{O}(1,1)/\Z_2$. For $\L=1$ the defects are topological, with $\cD_{1,1}$ being the trivial defect, $\cD_{\e,1}$ the $\Z_2$ symmetry defect, and $\cD_{\s,1}$ the Kramers-Wannier duality defect. For $\L\neq 1$, the conformal defects can be viewed as deformations of these topological defects. In~\cite{Bachas:2013ora} it was shown that the fusion rules for the defects are
\be
	\cD_{\cO_1,\L_1}\x\cD_{\cO_2,\L_2} \to \cD_{\cO_1\x \cO_2,\L_1\L_2},
\ee
where $1\x \cO=\cO,\s\x\e =\s,\e\x\e = 1, \s\x\s = 1+\e$ and the right-hand side is understood as the IR defect $\cD_\S$ that is obtained from the fusion in the left-hand side. In particular,
\be\label{eq:counterexample}
	\cD_{\s,\L_1}\x\cD_{\s,\L_2}\to \cD_{1,\L_1\L_2}+\cD_{\e,\L_1\L_2}.
\ee
Therefore, in this case the fusion of conformal defects yields a non-simple defect. The equality between the coefficients $a_0$ for $\cD_{1,\L_1\L_2}$ and $\cD_{\e,\L_1\L_2}$ can be explained by considering the fusion with the topological $\Z_2$ symmetry defect $\cD_{\e,1}$. It leaves the left-hand side in~\eqref{eq:counterexample} invariant, while exchanging the defects $\cD_{1,\L_1\L_2}$ and $\cD_{\e,\L_1\L_2}$  in the right-hand side.

An interesting subtlety appears in the case of line defects. The local operators on a direct sum defect also include off-diagonal operators that come from ``defect-changing'' local operators that connect $\cD_{\S_i}$ with $\cD_{\S_j}$ (see~\cite{Nagar:2024mjz} for a recent explicit appearance). For concreteness, consider the case when
\be
	\cD_\S=\cD_{\S_1}+\cD_{\S_2},
\ee
and let $\de a_0=a_{0,1}-a_{0,2}\geq 0$. Suppose that we have an operator $\cO_{12}$ that appears in the junction $\cD_{\S_1}\cD_{\S_2}$. Its conjugate $\bar \cO_{12}$ then appears in the junction $\cD_{\S_2}\cD_{\S_1}$. In conformal perturbation theory we will get integrated correlators of products of $\cO_{12}$ and $\bar\cO_{12}$ inserted on $\cD_{\S}$ in an alternating pattern. In between $\bar\cO_{12}$ and $\cO_{12}$ we will have fragments of $\cD_{\S_1}$ and in between $\cO_{12}$ and $\bar\cO_{12}$ we will have fragments of $\cD_{\S_2}$.

A fragment of $\cD_{\S_2}$ of length $s$ will add relative suppression by a factor $e^{-\de a_0 s/L}$, and thus the typical contributions will have fragments of $\cD_2$ of length $L/\de a_0$. In a generic situation $\de a_0=O(1)$ and this is a UV effect. The short fragments of $\cD_{\S_2}$ can be replaced using OPE by local operators on $\cD_{\S_1}$, and the effective theory at the IR scale $R$ can be expressed solely in terms of operators on $\cD_{\S_1}$.

Suppose now that there is a tunable parameter $\l$ and $\de a_0=c_0\l+O(\l^2)$ with $c=O(1)$. Then the typical size of $\cD_{\S_2}$ fragments is $L/\l$ and for $\l\ll 1$ the effective theory in terms of $\cD_{\S_1}$ is valid only at scales $R\gg L/\l$. On the other hand, the effective theory in terms of $\cD_{\S_1}+\cD_{\S_2}$ is valid for $R\gg L$. In connection to the cusp anomalous dimension in $\R\x S^{d-1}$ picture (see section~\ref{sec:GcuspNP}), $L=\a$ and $R=1$ is the radius of the spatial sphere. Thus for $\l\ll 1$ the ``simple'' $\cD_{\S_1}$ effective theory should be used to describe $\a \ll \l $, while if $\a \ll 1$ only then $\cD_{\S_1}+\cD_{\S_2}$ should be used.

This discussion applies, for example, to Wilson lines in gauge theories. In this case $\l$ is the gauge coupling, $\cD_1$ and $\cD_2$ are the fundamental and anti-fundamental lines, while $\cD_{\S_1}$ and $\cD_{\S_2}$ are the trivial line and the adjoint line, see e.g.~\cite{Pineda:2007kz,Correa:2012nk} in the context of $\cN=4$ SYM.\footnote{Confusingly from our point of view, these references call $L$ the soft scale, and $L/\l$ the ultra-soft.} The discussion above appears to be a more abstract version of the pNRQCD effective theory discussed in~\cite{Pineda:2007kz}, although we have not tried to make this connection precise.

If $\cD_1$ and $\cD_2$ are higher-dimensional defects, the contribution of $\cD_2$ is suppressed by $e^{-\de a_0 R^p/L^{p}}$, and so if $\de a_0 = c_0\l+O(\l)$ we cannot ignore $\cD_2$ unless $R\gg L/\l^{1/p}$. However, in this case there is no non-trivial mixing of the two defects.

An important caveat to the above discussion is that we do not know if keeping $\cD_2$ makes sense for $\de a_0 \neq 0$. In general there could be other non-perturbative in $L$ effects that are equally or more important, effectively destroying $\cD_2$ as a well-defined defect.

\subsection{Subleading terms in the identity operator coupling}
\label{sec:subleading}

We now consider the subleading terms in the coupling $\l_{\hat 1}$ of the identity operator. In general, their form depends on the dimension $p$ of the defects and on the dimension $d$ of the bulk CFT. Here we will consider only the one-derivative terms. The two-derivative terms for line defects and for codimension $q=1$ and $q=2$ defects are classified in appendix~\ref{app:two-derivatives}. 

First of all, it is easy to see that for $p>1$ no one-derivative terms exist. Indeed, the only geometric one-derivative tensor at our disposal is $\hp^\mu_{ab}$ and we can also form the covariant derivative $\hat \nabla_a^\perp v^\mu$.\footnote{Note that we do not count the derivative in $\ptl_a X_1^\mu$. This is because $X_1^\mu$ can only appear in this combination, and $X_1$ itself carries a factor of the IR length scale $R$.}  However, there is nothing to contract their defect indices with --- recall that $\hp^\mu_{ab}$ is traceless as required by~\eqref{eq:II=0}.

On line defects, however, it is possible to introduce the unit tangent vector field $t^a$ on $\cD_1$. It defines an orientation of $\cD_\S$; if enough parity is preserved then the effective action has to be an even function of $t$. If $d>3$, this vector doesn't change the situation since $\hp^\mu_{ab}t^at^b=\hp^\mu=0$ and $v\. \nhb_t v=0$. Therefore,
\be\label{eq:SeffLineGenD}
S_\text{eff} = 	\int\,d^pz\,\sqrt{\hga}\p{-a_0+\text{2-derivative terms}}+\cdots\quad \p{p>1 \atop p=1, d>3},
\ee
where $\cdots$ represents non-identity operators. For $p=1$ all the omitted terms in~\eqref{eq:SeffLineGenD} vanish in the limit $L\to 0$.

If $d=3$ then  we can define a unit vector $u^\mu$ that is orthogonal to both $v^\mu$ and $t^\mu$, i.e.\ $u^\mu = \hat \ve^\mu{}_{\r\s}v^\mu t^\s$. This allows us to construct a non-trivial term at one-derivative order,
\be\label{eq:line3doneder}
	S_\text{eff} = 	\int\,dz\,\sqrt{\hga}\p{-a_0+ia_{1}u\.\hat\nabla^\perp_t v+\text{2-derivative terms}}+\cdots\quad (p=1,\,d=3),
\ee
where $a_1\in\R$ (see sec~\ref{sec:parity}). This term breaks parity and requires either the bulk orientation or both the defect and the normal bundle orientations.

\paragraph{Example: pure Chern-Simons theory}
The term multiplying $2\pi i\, a_{1}$,
\be
		I(v)&=\frac{1}{2\pi}\int\,dz\,\sqrt{\hga} u\.\hat\nabla^\perp_t v=\frac{1}{2\pi}\int\,dz\,\sqrt{\hga} \hat \e_{\mu\nu\l} v^\nu t^\l \hat\nabla^\perp_t v^\mu\nn\\
		&=\frac{1}{2\pi}\int\,dz\,\sqrt{\g}  \e_{\mu\nu\l} v^\nu \tau^\l \nabla^\perp_\tau v^\mu,
\ee
where $\tau$ is a tangent unit vector  for the physical metric, contains information about the topological linking of $\cD_1$ and $\cD_2$.\footnote{Note that $I(v)$ is defined on any orientable manifold (provided $\cD_1$ and $\cD_2$ are close enough to each other), while the linking number is only defined if $\cD_1$ and $\cD_2$ are null homologous.} However,  $I(v)$ itself isn't topologically-invariant. A useful description of it is the following. Fix a normal vector field $v_0$ and the corresponding $u_0^\mu = \hat \e^\mu{}_{\nu\l}v_0^\nu t^\l$. Any unit normal vector field $v$ can be written as
\be
	v= \cos \theta v_0 + \sin \theta u_0,
\ee
for some function $\theta(z)$. In terms of $\theta$ we have
\be
	I(v) = \frac{1}{2\pi}\int dz \frac{d\theta}{dz}+I(v_0)\in \Z+I(v_0).
\ee
Therefore, $I(v)$ is a homotopy invariant of $v$ and $I(v)-I(v_0)$ is an integer which describes the difference between the linking numbers of $v$ and $v_0$ with $\cD_1$. However, $I(v)\!\mod 1$ is $v$-independent and, as can be seen by setting $v=\II/|\II|$, is proportional to the total torsion of the curve in the physical metric, which is not a topological invariant and can be equal to any real number even in flat space.

Despite $I(v)$ not being exactly topological, it is relevant for fusion of Wilson lines in 3d pure Chern-Simons theory, which we now consider. Let $\cD_1$ and $\cD_2$ be Wilson loops with framing represented by normal vector fields $f_1$ and $f_2$. To keep track of the framing, we will write $\cD_i[f_i]$. Recall that the framing anomaly implies that $\cD_i[f_i]$ is a homotopy invariant of $f_i$, and whenever $\tl f_i$ differs from $f_i$ by $n$ $2\pi$-twists,
\be
	\cD_i[\tl f_i] = \cD_i[f_i]e^{2\pi i h_i n},
\ee
where $h_i$ is the conformal weight of the corresponding 2d primary~\cite{Witten:1988hf}.

Due to the topological nature of Chern-Simons theory, the only coupling that has a chance of surviving in $S_\text{eff}$ is $a_{1}$ (0- or higher-derivative couplings explicitly depends on the scale $L$). However, two modifications to $S_\text{eff}$ are in order. Firstly, we now can use $f_i$ in addition to $v$. Secondly, since $a_0$ vanishes, the defect $\cD_\S$ may not be simple and is in general a direct sum of Wilson lines $\cD_{\S_k}$ (see section~\ref{sec:simplicity}). Correspondingly, there is a separate identity operator for each $\cD_{\S_k}$, so that the effective action takes the form
\be
	S_\text{eff} = \sum_k \p{2\pi i a_{1,k}I(v)+ib_{1,k}I(f_1)+ib_{2,k}I(f_2)+ib_{\S,k}I(f_{\S,k})}\mathbf{1}_k,
\ee 
where $f_{\S,k}$ is the framing vector field for $\cD_{\S_k}$ and $a,b$ are undetermined coefficients. 

Matching of the framing anomaly requires setting $b_{1,k}=2\pi h_1$, $b_{2,k}=2\pi h_2$ and $b_{\S,k}=-2\pi  h_{\S,k}$. Indeed,
\be
	e^{2\pi i h_1 I(f_1)+2\pi i h_2 I(f_2)-2\pi i h_{\S,k} I(f_{\S,k})}\cD_{\S,k}[f_{\S,k}]
\ee
reproduces the behaviour of $\cD_1[f_1]\cD_2[f_2]$ under the change of framing: it transforms as required under changes of $f_1$ and $f_2$ and doesn't depend on $f_{\S,k}$. While matching the framing anomaly, these couplings introduce dependence on the metric through $I(\.)$.\footnote{Essentially the same term cancels the framing anomaly in the non-topological regularisation of~\cite{Polyakov:1988md}.} This dependence can only be cancelled by the $I(v)$ term, which requires
\be
	a_{1,k}= h_{\S,k}-h_1-h_2.
\ee

Therefore, the effective action for fusion of Wilson lines is
\be
	S_\text{eff} &= 2\pi i\sum_k \p{h_1 (I(f_1)-I(v))+h_2 (I(f_2)-I(v))-h_{\S,k} (I(f_{\S,k})-I(v))}\mathbf{1}_k,
\ee
which is indeed topological. The $v$-dependence predicted by this action is easy to derive directly from Chern-Simons theory by considering the state created by $\cD_1[f_1]\cD_2[f_2]$ on the surface of a thin solid cylinder that contains them, and by studying how this picture transforms under Dehn twists.

\subsection{Displacement operator and other contributions}
\label{sec:displacment_and_schemes}

So far we have focused mostly on the couplings to the identity operator. There is nothing qualitatively new when we consider couplings to other operators, except for two comments. Firstly, the contribution of any other operator to the effective action vanishes in the limit $L\to 0$. This is immediate in the case of the irrelevant operators, and for marginal operators this can be achieved by adjusting the definition of $\cD_\S$ as described in section~\ref{sec:generalcomments}.

Secondly, the couplings to non-trivial operators enter non-trivially into the conformal perturbation theory. The conformal perturbation theory generally requires renormalisation, and Weyl-invariance can be affected by the choice of the renormalisation scheme. Specifically, the behaviour of the couplings $\l_{\cO}$ in~\eqref{eq:Seff_general_form} under Weyl transformations may be anomalous. A standard example of this is the Weyl anomaly which can be viewed as an anomalous transformation law for the coupling $\l_{1}$ to the identity operator.

If present, anomalous terms in these transformation laws must be polynomial in the derivatives of couplings and respect dimensional analysis. For example, suppose the chosen renormalisation scheme is such that under an infinitesimal Weyl transformation $g_{\mu\nu}\to (1+2\w+O(\w^2))g_{\mu\nu}$, the coupling $\l_{\cO_1}$ has to transform as, schematically,
\be
	\de_\w \l_{\cO_1}=N\w\l_{\cO_1}+\a \ptl^m\w\prod _i\l_{\cO_i}^{n_i}
\ee
in order for the partition function to remain Weyl-invariant. Here $N$ is the Weyl weight of the coupling, $n_i,m\geq 0$ are integers\footnote{It must be that $\sum_i n_i>1$ since such anomalies come from divergences which cannot arise at linear level in the couplings.}, $\a\neq 0$ is a dimensionless coefficient, and $\cO_i$ are some primary operators, possibly including $\cO_1$. The derivatives $\ptl$ can be actual derivatives acting on the various $\cO_i$ and $\w$, or instead $\ptl$ can represent various tensors such as the curvature tensor, containing an equivalent number of derivatives. Dimensional analysis then requires
\be
	p-\De_1=m+\sum_i n_i(p-\De_i),
\ee
where $\De_i$ is the scaling dimension of $\cO_i$. Note that this is a rational relation between the scaling dimensions of primary operators $\cO_1$ and $\cO_i$.

Many such relations exist in free theory, which may introduce subtleties when applying this formalism to weakly-coupled theories. For example, a marginally irrelevant operator would come with a coupling that transforms anomalously according to a corresponding beta function. In generic interacting theories, however, we only expect such relations when protected operators are present. In the rest of this section, we will briefly discuss one generic class of such anomalous terms, associated with the displacement operator $D_\mu$.

The displacement operator $D_\mu$ exists on any local defect, is valued in the normal bundle, and has a protected scaling dimension $\De=p+1$. The above relations allow anomalous terms of the form $\de \l_{\cO}\ni \ptl^n\l_{D}^n\l_{\cO}$. In particular, the coupling $\l_D$ itself could have an anomalous transformation law.

Therefore, the problem of writing down a Weyl-invariant effective action seems to generically require a determination of such anomalous terms arising from $\l_D$. Fortunately, we have already implicitly solved this problem in section~\ref{sec:displacement}. Indeed, a choice of the renormalisation scheme is nothing more than a definition of the partition function as a function of its couplings at non-linear level. In section~\ref{sec:displacement} we showed that there exists a Weyl-invariant definition of the displacement coupling $\l_D$, to all orders in $\l_D$, and therefore there are no anomalous terms proportional to $\l_D$ in the corresponding renormalisation scheme. While it might not be obvious how to implement this scheme in practice, our arguments do show that such Weyl-invariant schemes exist in principle, to the extent to which Weyl-covariant partition functions can be defined for an arbitrary shape and position of $\cD_\S$. 

Note that above ``Weyl-invariant'' refers only to the transformation law for $\l_D$ and to anomalous terms of the form $\de \l_{\cO}\ni \ptl^n\l_{D}^n\l_{\cO}$. In general, terms of the form $\de \l_{1}\ni \ptl^{n+p}\l_{D}^n$ responsible for defect Weyl anomalies are still allowed.

\paragraph{Displacement coupling and the location of $\cD_\S$} We define the displacement operator on a defect $\cD$ by the requirement that
\be
	\cD[e^{-\int d^pz \sqrt{\g}v^\mu D_\mu}]
\ee
is equivalent to $\cD$ displaced by the vector field $v^\mu$ as described in section~\ref{sec:displacement}. Displacement operator is a local operator of dimension $\De=p+1$ that can appear in $S_\text{eff}$ as any other irrelevant operator.

There is a trade-off between the choice of the insertion point for $\cD_\S$ and the coupling to the displacement operator in $S_\text{eff}$. Indeed, by definition, adding a displacement coupling
\be
	S_\text{eff}\ni\int d^pz \sqrt{\hat \g} \l^\mu_{\hat D} \hat D_\mu
\ee
is completely equivalent to deforming $\cD_{\S}$ by the vector field $\l^\mu_{\hat D}$. This means that we are free to choose any configuration for $\cD_\S$ as long as it tends to the common limit of $\cD_1$ and $\cD_2$ as $L\to 0$. Differences between possible choices can be compensated by the coupling $\l^\mu_D$. In this paper, we use the convention where the position of $\cD_\S$ coincides with that of $\cD_1$. Although not particularly symmetric, this choice simplifies the form of the effective couplings.\footnote{This is similar to the choice one makes when writing down the OPE of two local operators. It is common to choose $\cO_1(x)\cO_2(y)\sim \sum_{k}\cO_k(y)$, but in principle other choices for the coordinates of $\cO_k$ in the right-hand side are possible.}

The displacement coupling has the derivative expansion
\be
	\l_{\hat D}^\mu = a_{D,0}v^\mu + a_{D,1}\hat \nabla_t^\perp v^\mu+\cdots,
\ee
where the $1$-derivative term is only possible for $p=1$ when one can define the unit tangent vector $t^\mu$. The higher-derivative terms contribute at $O(L^3)$ to the action.

Whenever there is a permutation symmetry between $\cD_1$ and $\cD_2$, it is possible to argue that $a_{D,0}=\thalf$. In other words, the natural position for $\cD_\S$ is halfway between $\cD_1$ and $\cD_2$.

\subsection{Breaking of transverse rotations and relevant couplings}
\label{sec:transverse}

In order for the effective action~\eqref{eq:Seff_general_form} to describe the tail of an RG flow that terminates in $\cD_\S$, it is necessary that it does not contain any relevant operators. Since in a general defect fusion configuration we do not perform any fine-tuning, this normally requires that $\cD_\S$ must not have any relevant operators that can appear in~\eqref{eq:Seff_general_form}.

When discussing relevant operators, one usually focuses on scalars. However, in our case, due to the presence of $v^\mu$ and curvature invariants, it is necessary to also consider the defect operators that are charged under transverse rotations. For example, we can have couplings such as
\be
	v^\mu \hat\cO_{1,\mu}, \quad \hp^{\mu}_{ab}\hp^{\nu ab} \hat\cO_{2,\mu\nu},
\ee
where all the indices are transverse.
Taking into account the derivatives in the curvature invariants, the coupling with $\cO_1$ is relevant if $\De_1<p$, the coupling with $\cO_2$ is relevant if $\De_2<p-2$.

In the case of bulk RG flows, there is usually no need to consider such couplings since they are excluded by unitarity bounds. For example, in 3 dimensions the smallest possible dimension of a tensor operator is $2$ while the simplest curvature invariant is the Riemann tensor, which has 2 derivatives.\footnote{Couplings of Ricci scalar to relevant scalar operators still need to be taken into account even in the bulk.} In the defect case, the unitarity bounds do not depend on the transverse spin\footnote{For instance, consider $\ptl^n \f$-type operators on the trivial $p$-dimensional defect in $d$-dimensional free scalar theory, where all $n$ derivatives are transverse and $p=d-2$. Their dimension is $\De=n+p/2$, which shows that for any fixed transverse spin, $p-\De=p/2-n$ can be arbitrarily large if $p$ is large enough.} and therefore this logic does not apply.

If $p=1$, i.e.\ when we consider line defects, the only relevant couplings of this kind are
\be\label{eq:p=1relevant}
	v^\mu\hat \cO_{\mu},\quad  v^{\mu}v^\nu \hat \cO_{\mu\nu},\cdots
\ee
This implies that $\cD_\S$ should not have any (otherwise neutral) relevant operators which are transverse traceless-symmetric tensors.

Starting at least with $p=3$ surface defects, curvature couplings such as  $\hp^\mu_{ab}\hp^{\nu ab} \hat\cO_{2,\mu\nu}$ or $\hp^\mu_{ab}\hp^{\nu ab} v^\r\cO_{3,\mu\nu\r}$ can appear. In generic configurations, this prohibits $\cD_\S$ from having relevant operators with transverse spin in certain representations. In special configurations, such as for flat defects, the curvature couplings can vanish, which opens the possibility that the fusion defect $\cD_\S$ can be different for different fusion geometries. For example, one can imagine a situation where for flat defects $\cD_1$ and $\cD_2$ the IR defect $\cD_\S$ has a sufficiently relevant transverse mixed-symmetry tensor $\cO_{3,\mu\nu\r}$, which cannot couple to a constant $v^\mu$ alone. When we deform the defects away from the flat configuration, the curvature coupling $\hp^\mu_{ab}\hp^{\nu ab} v^\r\cO_{3,\mu\nu\r}$ will turn on and trigger an RG flow to a new IR defect. In other words, fine-tuning of the fusion RG flow might in some cases be achieved by considering defects of special shape.

Related to the issue of relevant operators with transverse spin is possibility that $\cD_{\S}$ itself breaks the transverse rotations $\SO(q)$ down to a  proper subgroup $H$. Indeed, in the picture where we view the product $\cD_1\cD_2$ as the UV definition of a non-conformal defect that flows to $\cD_{\S}$ in the IR, we have to contend with the fact that this UV definition preserves fewer symmetries than the individual defects $\cD_1$ or $\cD_2$. As in the discussion above, this is manifested by presence of $v^\mu$ and of various curvature invariants.

If $\cD_\S$ does break transverse symmetries down to $H$, this means that in order to compute partition functions with $\cD_\S$ inserted, one needs to specify an element $n(z)\in \SO(q)/H$ at every point of the defect. In the simplest case, when $H=\SO(q-1)$ and $q>2$, we have $\SO(q)/H=S^{q-1}$ and $n$ can be viewed as a unit vector (with respect to $g_{\mu\nu}$). The only possible value for $n$ in a fusion process is
\be
	n=\pm \ell^{-1}v+O(L).
\ee
The choice of sign here should be viewed as a Wilson coefficient that depends on the particular pair of defects $\cD_1$ and $\cD_2$. It can be changed by altering the meaning of $n$ to $-n$ for $\cD_\S$, but this will change the sign in all fusions in which $\cD_\S$ appears (so relative signs are meaningful). 

For completeness, we give the form of the first subleading terms for $p=1$ in dimensions $d>3$. In general, 
\be	
	n&=\pm e^{m_{\mu\nu}r^\mu v^\nu}\ell^{-1}v=\pm\ell^{-1}\p{v^\mu+r^\mu-(r\.v)v^\mu+\cdots},
\ee
where $(m_{\r\s})^{\mu}{}_\nu=\de^\mu_\r \hat g_{\s\nu}-\de^\mu_\s \hat g_{\r\nu}$ is a rotation generator and $r^\mu$ is a Weyl weight-0 vector in the normal bundle, defined modulo $v^\mu$. The leading order derivative expansion for $r$ is
\be
	r^\mu = a_{r,1}\hat\nabla^\perp_t v^\mu+ a_{r,2}(\hat\nabla^\perp_t)^2 v^\mu+\cdots \mod v^\mu,
\ee
where $t^a$ is a unit (in $\hat g$) vector tangent to the line defect, aligned with the orientation. Note that we can't use the Schouten tensor (equivalently, the Ricci tensor) due to~\eqref{eq:P=0}, and $Rv^\mu=0\mod v^\mu$. We also used~\eqref{eq:II=0}.

Recall that there is a tradeoff between the value of $n$ and the coupling to the tilt operators in $S_\text{eff}$. When writing the above formula, we assume the convention in which the tilt operators do not appear in $S_\text{eff}$. Alternatively, and this is the convention that we adopt more broadly in this paper, we can set $n=\pm \ell^{-1}v$ \textit{exactly}, and compensate for this by adding the coupling $r^\mu$ to the tilt operator corresponding to broken transverse rotations. A similar tradeoff is present for any exactly marginal deformation.

In the case $d=3$ and $p=1$ the complication is that there is not in general a canonical identification between the elements of $\SO(2)/\SO(1)$ and the unit normal vectors, even up to a sign.\footnote{If the defect preserves transverse parity, then we have to work with $\mathrm O(2)/\mathrm O(1)$ and a canonical identification up to a sign is again possible.} In practice this means that we can write
\be
	n=R_\theta (\ell^{-1}v),
\ee
where $R_\theta$ is a rotation by angle $\theta$ in the normal plane, and we have the derivative expansion
\be
	\theta = a_{\theta,0}+\cdots,
\ee
where $a_{\theta,0}$ is a Wilson coefficient. Similarly to the sign of $n$, we can adjust $a_{\theta,0}$ by changing the meaning of $n$ to $R_{\vf}(n)$ for $\cD_\S$, but this will affect all fusions in which $\cD_\S$ appears, so relative values of $a_{\theta,0}$ are meaningful.

Breaking to $H$ other than $\SO(q-1)$ is not possible if we consider flat parallel defects, as this configuration will explicitly preserve $\SO(q-1)$. So, such $H$ can be only achieved in the situation of the kind mentioned earlier in this section, when the IR defect $\cD_\S$ is different for generic and flat configurations. Note that in this scenario, the dimension $p$ of the defect plays a role. For example, for $p=1$ no relevant curvature couplings can turn on as we move from flat to curved defects. Therefore, we expect that $H$ smaller than $\SO(q-1)$ cannot be realised for line defects. It is an interesting question whether scenarios of this kind can be realised for higher-dimensional defects.

\section{Cusps}
\label{sec:Wilson}

In this section we first discuss the general properties of a cusp formed by two line defects with a small opening angle $\a$, and then focus on the specific example of supersymmetric Wilson lines in $\cN=4$ SYM.

\subsection{General structure of scaling dimensions at a cusp}
\label{sec:GcuspNP}

\begin{figure}[t]
	\centering
	\scalebox{1}{\begin{tikzpicture}
	\begin{pgfonlayer}{nodelayer}
		\node [style=none] (0) at (-2, 0) {};
		\node [style=none] (1) at (3, 0) {};
		\node [style=none] (2) at (2, 3) {};
		\node [style=none] (3) at (0, 1.5) {};
		\node [style=none] (4) at (0.5, 0) {};
		\node [style=none] (5) at (-0.25, 2) {$\mathcal{D}_2$};
		\node [style=none] (6) at (-1, 0.35) {$\alpha$};
		\node [style=none] (7) at (1, -0.5) {$\mathcal{D}_2$};
		\node [style=none] (8) at (-1.25, 0) {};
		\node [style=none] (9) at (-1.325, 0.5) {};
	\end{pgfonlayer}
	\begin{pgfonlayer}{edgelayer}
		\draw [style=new edge style 14] (3.center) to (2.center);
		\draw [style=new edge style 0] (3.center) to (0.center);
		\draw [style=new edge style 1] (0.center) to (4.center);
		\draw [style=new edge style 8] (4.center) to (1.center);
		\draw [bend left, looseness=0.75] (9.center) to (8.center);
	\end{pgfonlayer}
\end{tikzpicture}}
	\caption{A cusped junction of $\cD_1$ and $\cD_2$}
	\label{fig:cusp}
\end{figure}
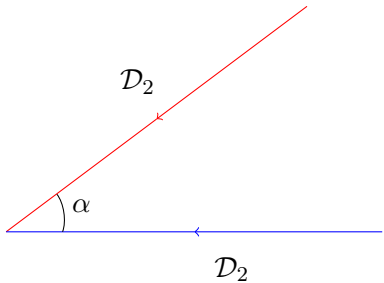

We consider line defects $\cD_1$ and $\cD_2$ and assume that they can be joined at a cusp as in figure~\ref{fig:cusp}. Using the standard arguments, operator-state correspondence implies that the eigenstates of the dilatation operator in the Hilbert space of states on $S^{d-1}$ punctured by $\cD_1$ and $\cD_2$ are in one-to-one correspondence with local operators at the cusp. In particular, one expects infinitely-many choices for local operators that can be present at the cusp.\footnote{The use of ``local operator'' is a bit confusing since it makes it sound as if it were possible to not insert anything at the cusp. Instead of ``local operators at the cusp'', it would perhaps be more useful to talk about different cusped junctions between line defects. Nevertheless, we prefer to use the term ``local operator''`.}${}^,$\footnote{Note that the Hilbert spaces with different angles $\a$ between $\cD_1$ and $\cD_2$ are all isomorphic via a conformal transformation to the Hilbert space with $\a=\pi$. What really depends on $\a$ is the dilatation operator and its spectrum.}

One usually distinguishes the local operator with the lowest scaling dimension. This scaling dimension is denoted $\G_\text{cusp}$ and the corresponding local operator is often described as the ``cusp with no insertions''. In $\cN=4$ SYM literature it is customary to write $\G_\text{cusp}$ for fundamental Wilson lines as the function of $\f=\pi-\a$.

Via the exponential map, the configuration in figure~\ref{fig:cusp} is conformally equivalent to the cylinder $\R\x S^{d-1}$, where $\cD_1$ and $\cD_2$ are inserted along $\R$ at fixed points on $S^{d-1}$ separated by an angle $\a$. The cusp is mapped to the infinite past. Let $H$ be the Hamiltonian that generates the time translations along $\R$. This Hamiltonian is simply the dilatation operator in figure~\ref{fig:cusp} and thus its spectrum is given by the scaling dimensions of cusp local operators  (we set the radius of $S^{d-1}$ to 1). In particular, $\G_\text{cusp}$ is the ground state energy.

In the limit $\a\to 0$ the defects $\cD_1$ and $\cD_2$ on $\R\x S^{d-1}$ fuse. Let $\cD_\S$ be the corresponding IR defect. Recall from~\eqref{eq:Happrox} that the spectrum of $H$ can be approximated as
\be\label{eq:Happrox2}
H\overset{\text{spectrum}}{=}H_\S+\l_{\mathbf{1}}(\a)\mathbf{1}+\sum_{\cO}\l_\cO(\a)\cO,
\ee
where the sum over $\cO$ is over all primary operators on $\cD_\S$ and $H_\S$ is the Hamiltonian when the insertions of $\cD_1$ and $\cD_2$ are replaced by $\cD_\S$. Any possible index contractions are implicit. 

The coefficients $\l_\cO(\a)$ behave as
\be
\l_\cO(\a)\sim \a^{\De_\cO-1}.
\ee
Note that by the assumption that $\cD_\S$ is the IR fixed point, the sum in~\eqref{eq:Happrox2} includes only irrelevant operators, so their couplings vanish as $\a\to 0$.\footnote{\label{footnote:marginal}We might have exactly marginal operators with $\De_\cO=1$. For them we use the convention described in section~\ref{sec:Seff}, which makes their effective couplings vanish as at least $O(\a)$.}

Using~\eqref{eq:Happrox2} we can in principle predict the small-$\a$ behaviour of the ground-state energy $\G_\text{cusp}$ (or of any other energy level) to any order in $\a$ using Rayleigh–Schr\"odinger perturbation theory. The simplest contribution comes from the identity operator, which shifts all energies by $\l_\mathbf{1}(\a)$. In particular, the ground state energy $\G_\text{cusp}$ is given by
\be\label{eq:cuspgeneralsplit}
\G_\text{cusp}=E_{0,\S}+\l_\mathbf{1}(\a)+\de\G_\text{cusp},
\ee
where $E_{0,\S}$ is the ground-state energy of $H_\S$ and $\de \G_\text{cusp}$ is the contribution from non-identity $\cO$ in~\eqref{eq:Happrox2}. The latter contribution vanishes in the limit $\a\to 0$.

The coupling $\l_\mathbf{1}(\a)$ has an expansion in powers of $\a$. In appendix~\ref{app:lines} we classify 2-derivative contributions to $\l_\mathbf{1}$, see~\eqref{1d-4plusd-2nd}. In the conformally-flat case of $\R\x S^3$ the result is, up to two derivatives,
\be\label{eq:SeffIdentityLocal}
\l_\mathbf{1} = &\int dz\sqrt{\hga}\p{
	-a_0+a_{2,1}\hat\nabla_t^\perp v\. \hat\nabla_t^\perp v+a_{2,2}\hat R+ia_{2,3}\hat P_{tv}+\cdots
},
\ee
where $\hat P_{\mu\nu}$ is the Schouten tensor and $t$ is a unit (in the fusion metric) vector field on the line defect.  The terms multiplying $a_{2,1}$ and $a_{2,3}$ vanish. The derivative $\hat \nabla^\perp_t v$ vanishes since the separation between the defects is time-independent. The component $\hat P_{tv}$ of the Schouten tensor vanishes due to the transverse rotational symmetry of the fusion metric.  This symmetry will be present whenever the IR defect is a straight line along $\R$. This is discussed in more detail in section~\ref{sec:fusionmetricprops} and section~\ref{sec:exampleII} where we also show that for this configuration $\hat R=-(d-1)\a^2$, see equation~\eqref{eq:RicciCylinder}.

This shows that
\be\label{eq:cuspIdGeneral}
\l_\mathbf{1}(\a)=-\frac{a_0}{\a}-(d-1)a_{2,2}\a+\cdots.
\ee
In fact, it is easy to see that $\l_\mathbf{1}(\a)$ has an expansion in odd powers of $\a$. This is because it is given by $\a^{-1}$ times local invariants which are all constructed from the derivatives of the fusion metric as neither $v^\mu$ nor the embedding function of $\cD_1$ have interesting derivatives. Each metric derivative effectively gives a power of $\a$. As discussed in section~\ref{sec:exampleII} around~\eqref{eq:ell(x)forCylinder}, in this setting the fusion metric has only even derivatives on $\cD_1$.

Turning now to $\de\G_\text{cusp}$, it is relatively easy to give a formal expression for it. Equation (B.18a) of~\cite{Hogervorst:2021spa} conveniently encodes the all-order perturbation theory for the ground state energy. Using it, we find
\be\label{eq:all-order-Gcusp}
	\de\G_\text{cusp}=
	&\sum_{\cO}\l_\cO(\a)\<0|\cO|0\>-\sum_{\cO_1\cO_2}\l_{\cO_1}(\a)\l_{\cO_2}(\a)\int_0^\oo d\tau\<0|\cO_2(\tau)\cO_1(0)|0\>_\text{conn}\nn\\
	&+\sum_{\cO_1\cO_2\cO_3}\l_{\cO_1}(\a)\l_{\cO_2}(\a)\l_{\cO_3}(\a)\int_0^\oo d^2\tau \<0|\cO_3(\tau_1+\tau_2)\cO_2(\tau_1)\cO_1(0)|0\>_\text{conn}\nn\\
	&-\cdots,
\ee
where $\tau$ is the global time on $\R\x S^{d-1}$. Here, the only dependence on $\a$ is through the couplings $\l_\cO(\a)$. Note that the state $|0\>$ corresponds to an operator on which $\cD_\S$ ends, which is why $\<0|\cO|0\>$ is generally non-zero and the two-point functions $\<0|\cO_2(\tau)\cO_1(0)|0\>$ are not diagonal. If $\cD_\S$ is the trivial defect, the first two orders simplify,
\be\label{eq:all-order-Gcusp_triv}
\de\G_\text{cusp}=&-\sum_\cO \l_\cO^2(\a)\int_0^\oo d\tau\<0|\cO(\tau)\cO(0)|0\>_\text{conn}\nn\\
&+\sum_{\cO_1\cO_2\cO_3}\l_{\cO_1}(\a)\l_{\cO_2}(\a)\l_{\cO_3}(\a)\int_0^\oo d^2\tau \<0|\cO_3(\tau_1+\tau_2)\cO_2(\tau_1)\cO_1(0)|0\>_\text{conn}\nn\\
&-\cdots,
\ee
where we used that $\<0|\cO|0\>=0$ for non-identity operators $\cO$, and that that the two-point functions are diagonal in this case. Since we are working on the cylinder, the correlation functions in~\eqref{eq:all-order-Gcusp} and~\eqref{eq:all-order-Gcusp_triv} are not pure powers and decay exponentially at large times (if they were pure powers, the renormalised integrals would vanish). Finally, note that these expressions contain connected correlators.

One immediate consequence of this is that the small-$\a$ expansion of $\de\G_\text{cusp}$ contains the following powers of $\a$,
\be
	\de\G_\text{cusp}\ni \a^{\De_1+\cdots+\De_n-n+2k},
\ee
where $n\geq 1$ and $k\geq 0$ are integers, while $\De_i$ run over the scaling dimension of the primary operators in~\eqref{eq:Happrox2}. In particular, $\de \G_\text{cusp}=O(\a^{\De_\text{min}-1})$, where $\De_\text{min}$ is the smallest scaling dimension of an operator appearing in~\eqref{eq:Happrox2}.${}^{\ref{footnote:marginal}}$. When $\cD_\S$ is trivial, this is modified in the obvious way and we find that $\de \G_\text{cusp} = O(\a^{2\De_\text{min}-2})$.

The above discussion can be straightforwardly generalised to excited states, although~\eqref{eq:all-order-Gcusp} needs to be modified in that case~\cite{Hogervorst:2021spa}. One important difference is that one-point functions generically do not vanish in excited states even for trivial $\cD_\S$.

\subsection{Supersymmetric Wilson lines in $\cN=4$ SYM}
In this section we briefly examine the fusion of fundamental and anti-fundamental supersymmetric Wilson lines~\cite{Maldacena:1998im} in $\cN=4$ SYM, focusing on the planar theory.

A (locally) supersymmetric Wilson line along a curve $\cD$ is defined by
\be
	Pe^{i \int_\cD A+\int_\cD ds (n\.\f)},
\ee
where $\f$ is the fundamental scalar and $n\in \R^6$ is a unit vector in the vector representation of $\SO(6)$ R-symmetry (which is therefore explicitly broken to $\SO(5)$ by the Wilson line). The path-ordered exponential is computed in the fundamental representation of the $\SU(N)$ gauge group. We will denote the resulting defect by $\cD_n$.

We consider the fusion of $\cD_1 = \cD_{n_1}$ with $\cD_2 = \bar \cD_{n_2}$ and we use $\theta$ to denote the angle between $n_1$ and $n_2$. We expect that the resulting IR defect $\cD_\S$ is the trivial defect. Informally, the pair of quarks represented by the Wilson lines can form a singlet or an adjoint state, and the former is more energetically favorable.\footnote{The potential in the adjoint representation vanishes in the planar theory~\cite{Pineda:2007kz}, while the singlet potential is non-zero and attractive.} More formally, the line $\cD_\S$ can end and we will show that the spectrum of local operators it can end on coincides with the bulk spectrum. In other words, we will describe the spectrum of the Hamiltonian $H_\S$ on $\R\x S^{3}$ when there is a single insertion of $\cD_\S$ along the time direction~$\R$, and we shall see that it agrees with the spectrum of the bulk dilatation operator.

To this end, we consider the configuration discussed in section~\ref{sec:GcuspNP} where the straight lines $\cD_1$ and $\cD_2$ meet at a cusp with an opening angle $\a=\pi-\f$. We also retain the notation from section~\ref{sec:GcuspNP}. Due to~\eqref{eq:Happrox2},~\eqref{eq:cuspgeneralsplit} and~\eqref{eq:cuspIdGeneral}, the spectrum of $H_\S$ is obtained from the spectrum of $H$  in the limit $\a\to 0$ after subtracting the singular part of ground state energy $E_0=\G_\text{cusp}$. We will see in various limits below that $\G_\text{cusp}=-\frac{a_0}{\a}+O(\a)$, and we expect this to hold non-perturbatively.\footnote{Nikolay Gromov has informed us that unpublished numerical data for $\G_\text{cusp}(\a)$ at $g=\sqrt{\l}/4\pi=3/8$ and $\theta=0$ obtained from QSC in~\cite{Gromov:2015dfa} shows that the coefficient of $\a^0$ in $\G_\text{cusp}$ is $0\pm10^{-4}$. By comparison, $a_0\approx 0.80$ and the coefficient of $\a$ is $\approx 0.23$.\label{footnote:constanttermis0}} Therefore, we can simply subtract the full $\G_\text{cusp}$.

We can classify the cusp operators into single-trace and multi-trace. The single-trace operators are obtained from insertions of fundamental fields in the Wilson line trace, while multi-trace operators may have additional traces. An example of a single-trace operator is
\be\label{eq:singletrace}
\Tr\{\cdots Pe^{i \int_{\cD_1} A+\int_{\cD_1} ds (n_1\.\f)}\f_i Pe^{i \int_{\cD_2} A+\int_{\cD_2} ds (n_2\.\f)}\cdots \},
\ee
while an example of a multi-trace operator is
\be\label{eq:multitrace}
\Tr(\f_i \f_i)\Tr\{\cdots Pe^{i \int_{\cD_1} A+\int_{\cD_1} ds (n_1\.\f)}Pe^{i \int_{\cD_2} A+\int_{\cD_2} ds (n_2\.\f)}\cdots \},
\ee
where $\Tr(\f_i \f_i)$ is inserted at the cusp.  Note that in the planar theory the scaling dimension of~\eqref{eq:multitrace} is just $\G_\text{cusp}+\De_K=E_0+\De_K$ where $\De_K$ is the scaling dimension of the bulk Konishi operator $\Tr(\f_i \f_i)$. Therefore, the spectrum of $H_\S$ contains $\De_K$. In a similar way, it contains the scaling dimension of any bulk single- or multi-trace operator. On the other hand, we expect that the excited single-trace states such as~\eqref{eq:singletrace} have energies which behave as $E\sim -\frac{a'_0}{\a}$ with $a'_0<a_0$.\footnote{Note that this is a non-degeneracy assumption on $a_0$. If there were an exact degeneracy for generic couplings, it would also be present in the ladder limit. We will see in section~\ref{sec:ladder} that in the ladder limit $a_0$ is non-degenerate.} Therefore, $E-E_0\to +\oo$  as $\a\to 0$ for states which involve such single-trace factors, and only the multi-trace operators similar to~\eqref{eq:multitrace} survive in the spectrum of $H_\S$. This shows that the spectrum of $H_\S$ agrees with the spectrum of the bulk dilatation operator and supports our claim that $\cD_\S$ is trivial.

If we focus on a finite set of lowest-energy states of $H$, then the above argument shows that for sufficiently small $\a$ their energies are given exactly by
\be\label{eq:Hexact}
	H\overset{\text{low-E spectrum}}{=}\G_\text{cusp}+H_\S.
\ee
In particular, the $\a$-dependence of $H$ energy levels is state-independent. This statement should have non-trivial implications for the couplings $\l_\cO(\a)$ entering in~\eqref{eq:Happrox2}. We leave this question for the future. Here we will focus on the contribution $\l_\textbf{1}(\a)$ of the identity operator. Since all non-trivial bulk local operators in $\cN=4$ SYM have dimension at least 2, it follows that, at least, $\G_\text{cusp}=\l_{\textbf{1}}(\a)+O(\a^2)$.

We therefore find for the cusp anomalous dimension, taking into account~\eqref{eq:cuspIdGeneral},
\be\label{eq:Gcusp_prediction}
	\G_\text{cusp}(\theta,\f=\pi-\a)=-\frac{a_0(\theta)}{\a}-3a_{2,2}(\theta)\a+O(\a^2),
\ee
where $a_{2,2}$ is the Wilson coefficient appearing in~\eqref{eq:SeffIdentityLocal}. The function $\G_\text{cusp}(\theta,\f)$ can in principle be computed using integrability~\cite{Gromov:2015dfa}, and therefore the Wilson coefficients $a_0$, $a_{2,2}$,  as well as other coefficients determining subleading terms in $\G_\text{cusp}$ are accessible through integrability techniques. The coefficient $a_0$ has been studied using Quantum Spectral Curve (QSC) in~\cite{Gromov:2016rrp}. It would be interesting to find a QSC description for the subleading coefficients such as $a_{2,2}$. Numerical data from QSC supports the absence of $\a^0$ term, see footnote~\ref{footnote:constanttermis0}. In the following two subsections we address a simpler problem and compute $a_{2,2}$ in the ladder limit and in the strong coupling limit.

\subsection{Ladder limit}
\label{sec:ladder}

The first limit we consider is the limit $e^{i\theta}\to \oo$ and 't Hooft coupling $\l\to 0$ with
\be
	\hat \l = \frac{\l e^{i\theta}}{4}
\ee
held fixed. As discussed in~\cite{Correa:2012nk}, in this limit only the ladder diagrams contribute to $\G_\text{cusp}$, which can then be determined through the ground state energy of the Schr\"odinger equation~\cite{Correa:2012nk}
\be\label{eq:Sch1}
	-\psi''(x)-\frac{\hat \l}{8\pi^2}\frac{1}{\cosh x-\cos\a}\psi(x)=-\frac{\G_\text{cusp}^2}{4}\psi(x),
\ee
where $x\in \R$.

We are interested in the limit $\a\to 0$. In this limit, the potential becomes singular at $x=0$ and $\G_\text{cusp}$ behaves according to~\eqref{eq:Gcusp_prediction}. To study this limit, we write $x=\a y$ and 
\be\label{eq:GammaOmega}
	\G_\text{cusp}=-\frac{\Omega}{\a}.
\ee
Equation~\eqref{eq:Sch1} becomes
\be
	-\psi''(y)-\frac{\hat \l}{8\pi^2}\frac{\a^2}{\cosh \a y-\cos\a}\psi(y)=-\frac{\Omega^2}{4}\psi(y).
\ee
The potential now has a regular expansion at small $\a$,
\be\label{eq:Uexpansion}
	-\frac{\hat \l}{8\pi^2}\frac{\a^2}{\cosh \a y-\cos\a}=U_0(y)+\a^2 U_2(y)+\cdots,
\ee
where
\be
	U_0(y) = -\frac{\hat\l}{4\pi^2}\frac{1}{1+y^2},\quad U_2(y)=\tfrac{1}{6}U_0(y)+\frac{\hat\l}{48\pi^2}.
\ee
Let $-\Omega_0^2(\hat \l)/4$ be the ground state energy of the leading-order Schr\"odinger equation
\be\label{eq:Sch2}
-\psi''(y)-\frac{\hat \l}{4\pi^2}\frac{1}{1+y^2}\psi(y)=-\frac{\Omega_0^2(\hat \l)}{4}\psi(y).
\ee
Equation~\eqref{eq:Uexpansion} then implies that
\be
	-\Omega^2/4=-\Omega_0^2\p{\hat \l(1+\tfrac{\a^2}{6})}/4+\frac{\hat\l\a^2}{48\pi^2}+O(\a^4).
\ee
In other words,
\be
	\Omega=\Omega_0(\hat\l)-\frac{\tfrac{1}{4\pi^2}-\Omega_0(\hat\l)\ptl_{\hat\l}\Omega_0(\hat\l)}{6\Omega_0(\hat\l)/\hat\l}\a^2+O(\a^4)
\ee

Using~\eqref{eq:GammaOmega} and~\eqref{eq:Gcusp_prediction} this gives for the Wilson coefficients
\be
	a_0=\Omega_0(\hat\l),\quad a_{2,2}=\frac{\Omega_0(\hat\l)\ptl_{\hat\l}\Omega_0(\hat\l)-\tfrac{1}{4\pi^2}}{18\Omega_0(\hat\l)/\hat\l}.
\ee
The weak-coupling expansion of $\Omega_0(\hat\l)$ was computed in~\cite{Correa:2012nk} to $O(\hat\l^3)$. For simplicity, we reproduce the first two orders,
\be
	a_0=\Omega_0(\hat\l)=\frac{\hat\l}{4\pi}+\frac{\hat\l^2}{8\pi^3}\p{\log\tfrac{\hat\l}{2\pi}+\g_E-1}+O(\hat\l^3).
\ee
This yields
\be
	a_{2,2}=-\frac{1}{18\pi}+\frac{\hat\l}{36\pi^3}\p{\log\tfrac{\hat\l}{2\pi}+\g_E-1+\tfrac{\pi^2}{2}}+O(\hat\l^2).
\ee

It may seem surprising that $a_{2,2}$ starts at $O(1)$ while at fixed $\a$ the expansion of $\G_\text{cusp}$ starts at $O(\hat\l)$. This, as well as the presence of the logarithms of the coupling, is due to the fact that the small-$\a$ expansion of $\G_\text{cusp}$ does not commute with the small-$\hat \l$ (more generally, small $\l$) expansion~\cite{Correa:2012nk,Pineda:2007kz}. In particular, the expansion~\eqref{eq:Gcusp_prediction} is only valid for angles $\a\ll \hat\l$. We briefly discuss the origin of this in section~\ref{sec:simplicity} from the fusion point of view.

Finally, we note that the ground state of~\eqref{eq:Sch2} is non-degenerate and therefore the first excited energy of~\eqref{eq:Sch1} scales as $-a_0'/\a$ with $a_0'<a_0$. These energies correspond to other single-trace insertions at the cusp.\footnote{Equation~\eqref{eq:Sch1} describes a subset of single-trace insertions at the cusp (as bound states or resonances, see~\cite{Cavaglia:2018lxi}). The remaining single-trace states have protected scaling dimensions in the ladder limit, i.e.\ for them $a'_0=0$. Still, $a'_0<a_0$ is satisfied. We thank Nikolay Gromov for discussions on this point.} This supports the claim that we made earlier in this section.

\subsection{Strong coupling}

It is also possible to compute $a_0$ and $a_{2,2}$ at strong coupling using the results of~\cite{Drukker:2011za}. In particular,~\cite{Drukker:2011za} gives an explicit expression for the leading $O(\sqrt{\l})$ term in $\G_\text{cusp}$ for generic $\a$ and $\f$ in their appendix B, and they also extract the leading coefficient $a_0$. By a straightforward extension of their analysis we find
\be
	a_0(\theta) &= \frac{2\sqrt{\l}}{k\sqrt{1-k^2}\pi}\p{E-(1-k^2)K}^2+O(\l^0),\nn\\
	a_{2,2}(\theta) &= -\frac{\sqrt{\l}}{36k\sqrt{1-k^2}\pi}\frac{(1-2k^2)^2E^2-(1-k^2)(3k^4-4k^2+2)EK+(1-k^2)^2K^2}
	{
		\p{E-(1-k^2)K}\p{(1-2k^2)E-(1-k^2)K}
	}+O(\l^0),
\ee 
where $K=K(k^2)$ and $E=E(k^2)$ are complete elliptic integrals of the first and second kind, respectively. The elliptic modulus $k$ is related to $\theta$ via
\be
	\theta=2\sqrt{1-2k^2}K(k^2).
\ee
In the special case of $\theta=0$ we obtain
\be
	a_0(0) = \frac{4\pi^2\sqrt\l}{\G(\tfrac{1}{4})^4}+O(\l^0),\quad a_{2,2}(0)=-\frac{\sqrt\l \p{24\pi^2-\G(\tfrac{1}{4})^4}}{576\pi^3}+O(\l^0).
\ee
We note in passing that $a_0(0)>0$ and is the global maximum of $a_0(\theta)$ as required by theorems~\ref{thm:oppositesattract} and~\ref{thm:cauchyschwarz}.

\section{Local operator one-point function asymptotics}
\label{sec:operatorasymptotics}

In this section we study the implications of conformal defect fusion for the OPE of the defect two-point function $\<\cD_1\cD_2\>$. Recall that each of the two defects $\cD_1$ and $\cD_2$ can be expanded in a basis of local operators~\cite{Gadde:2016fbj}, which yields a convergent expansion of the schematic form
\be\label{eq:primaryOPE}
\<\cD_1\cD_2\> = \sum_\cO  C_{\cD_1\cO}C_{\cD_2\cO^\dagger} G_{\cO}.
\ee
Here, the sum is over the primary operators $\cO$, $C_{\cD_i\cO}$ is the one-point function of $\cO$ in the presence of $\cD_i$, and $G_{\cO}$ is a conformal block which depends only on the cross-ratios describing the configuration of $\cD_i$ and on the quantum numbers of $\cO$. Compatibility of this OPE with the fusion~\eqref{eq:mainEFT} leads to relations between the Wilson coefficients in the effective action and the one-point functions $C_{\cD_1\cO}$ of primary operators. This is the relation that we will explore.

We will focus on the case $\cD_1\simeq \bar \cD_2$ in which case $C_{\cD_1\cO}C_{\cD_2\cO^\dagger}=|C_{\cD_1\cO}|^2\geq 0$. Furthermore, we will simplify the problem by studying a simpler expansion in which the descendants and the primaries are treated independently.

\subsection{Defect two-point function from OPE expansion}

We consider the generic configuration of two dimension-$p$ conformal defects that we previously studied as an example in section~\ref{sec:displacement}. Specifically, we take $\cD_1$ to be a $p$-dimensional spherical defect of radius $1$ centred at $0$ and lying in the subspace spanned by the coordinate directions $1,\cdots,p+1$. The position of the defect $\cD_2$ is obtained by the action of
\be\label{eq:general_rotation_OPE_section}
e^{\log r \bd-\sum_{i} \theta_i\bm_{i,p+i+1}}
\ee
where the sum is over $i=1,\cdots,m=\min\{p+1, d-p-1\}$. The total number of parameters $r,\theta_1,\cdots, \theta_m$ is $m+1=\min\{d+2-q,q\}$. This is the same as the number of cross-ratios for a pair of $p$-dimensional defects~\cite{Gadde:2016fbj}, and thus the configurations of the above form cover at least an open neighbourhood of the fusion limit $r=1,\theta_i=0$ in the cross-ratio space.

When $\cD_2$ is conjugate to $\cD_1$, we can write the two-point correlation function as
\be
\<\cD_1\cD_2\>=\<\bar\cD_1|e^{\log r D-\sum_{i} \theta_iM_{i,p+i+1}}|\bar\cD_1\>=\sum_n \<\bar\cD_1|n\>\<n|e^{\log r D-\sum_{i} \theta_iM_{i,p+i+1}}|\bar\cD_1\>,
\ee
where all states are in the Hilbert space of the unit sphere, and the sum over $n$ is a sum over an orthonormal basis of states. Since $D$ and $M_{i,p+i+1}$ with $i=1,\cdots,m$ are mutually commuting, we can assume that they are diagonalised in our basis,
\be
D|n\> &= \De_n|n\>,\\
M_{i,p+i+1} |n\> &= iJ_{i,n}|n\>.
\ee
The expression for the two-point function then becomes
\be
\<\cD_1\cD_2\>=\sum_n |\<\bar\cD_1|n\>|^2 r^{\De_n}e^{-i\sum_i J_{i,n}\theta_i}.
\ee
Notice that $\<\bar\cD_1|n\>$ is equal to the one-point function of the local operator corresponding to the state $|n\>$ in the presence of the defect $\cD_1$. 

We can rewrite the above equation equivalently as
\be\label{eq:full_two_pt_OPE}
\<\cD_1\cD_2\>=\int d\De dJ_1 \cdots dJ_m \r(\De,J_1,\cdots, J_m) r^{\De}e^{-i\sum_i J_{i}\theta_i},
\ee
where
\be
\r(\De,J_1,\cdots, J_m) = \sum_n |\<\bar\cD_1|n\>|^2 \de(\De-\De_n)\de(J_1-J_{1,n})\cdots \de(J_m-J_{m,n})
\ee
is a non-negative density of one-point functions.

The density $\r(\De,J_1,\cdots,J_m)$ receives contributions from both primaries and descendants. It is possible to recover the density of one-point functions of primary operators from $\r$ by analyzing the series expansion of the conformal block $G_\cO$. In this context, it is important to note that only the primary operators in spin representations described by at most $m$-row Young diagrams can have non-zero one-point functions with a $p$-dimensional defect~\cite{Lauria:2018klo}. Therefore, $\r(\De,J_1,\cdots,J_m)$ has the correct number of variables to discern all non-zero one-point functions of primaries.

We will focus on the density $\r(\De,J_1,\cdots,J_m)$. We will see that, within the precision that we work at, the density of primary one-point functions has the same form.

\subsection{Defect two-point function from fusion}

The limit $r\to 1$ and $\theta_i\to 0$ corresponds to the fusion of the defects $\cD_1$ and $\cD_2$, allowing us to obtain an approximation to the two-point function $\<\cD_1\cD_2\>$ using~\eqref{eq:mainEFT}. Specifically, we have the following equality in the $L\to 0$ limit
\be
\<\cD_1\cD_2\> \sim \<\cD_\S[e^{-S_\text{eff}}]\>
\ee
with $r=1+O(L), \theta_i=O(L)$. Note that the right-hand side should be understood as an asymptotic series in the obvious way. In particular, this does not give an exact prediction for the two-point function at any finite value of $L$.

Let us split $S_\text{eff}=S^{\mathbf{1}}_\text{eff}+S^\text{rest}_\text{eff}$, where $S^{\mathbf{1}}_\text{eff}$ contains the contribution of the identity operator, and $S^\text{rest}_\text{eff}$ contains the contributions from all the other operators, which are all irrelevant.\footnote{See section~\ref{sec:Seff} for a more precise discussion, in particular relating to marginal operators.} We then have
\be
\<\cD_\S[e^{-S_\text{eff}}]\>=e^{-S^{\mathbf{1}}_\text{eff}}\<\cD_\S[e^{-S_\text{eff}^\text{rest}}]\>=e^{-S^{\mathbf{1}}_\text{eff}}\p{\<\cD_\S\>+o(1)}.
\ee
Here the $o(1)$ terms go to 0 as $L\to 0$ and are given by conformal perturbation theory in terms of integrated correlation functions of irrelevant operators on $\cD_\S$. For example, the leading contribution is
\be
\half \int d^pz_1 \sqrt{\hat \g} d^pz_2 \sqrt{\hat \g}\l_{\hat \cO}(z_1)\l_{\hat\cO}(z_2) \<\cD_\S[\hat \cO(z_1)\hat \cO(z_2)]\>=O(L^{2\De_\cO-2p}),
\ee 
where $\cO$ is the leading irrelevant operator on $\cD_\S$ (assuming it is Hermitian and unique).

We will only study the implications of the leading cosmological constant term~\eqref{cosmo-const2} in $S_\text{eff}^\mathbf{1}$. That is, we will work with the leading approximation
\be\label{eq:1pt_fusion_approx}
\<\cD_1\cD_2\> &=\exp\left\{-S^{(0)}_\text{eff}+O(L^{-p+1})\right\}
\p{\<\cD_\S\>+O(L^{2\De_\cO-2p})}\nn\\
&=\exp\left\{a_0 \int d^p z\sqrt{\g}\ell^{-p}(z)+O(L^{-p+1})\right\}
\p{\<\cD_\S\>+O(L^{2\De_\cO-2p})}.
\ee

It remains to evaluate the leading term~\eqref{cosmo-const2} in the fusion limit in our kinematics. The relevant scale function $\ell(z)$ was computed in section~\ref{sec:exampleI}, see~\eqref{eq:length-scale}. Let us parameterise the fusion limit as
\begin{align}
	r = e^{-L}, \hspace{1cm}\theta_i= L\Omega_i,
	\label{rescalings}
\end{align}
where $L\to 0$ is now a concrete parameter defined by the above equations. We obtain 
\begin{align}
	S_{\text{eff}}^{(0)} &= -a_0 \int d^p z\sqrt{\g}~\ell^{-p}(z)= -\frac{a_0}{L^p} \int d^p z \sqrt{\g}\frac{1}{\left(1+\sum_{i=1}^m\Omega_{i}^2 x_i^2\right)^{p/2}},
\end{align}
where $x_i$ are the bulk coordinates of the point $z$ on the defect $\md_1$.  In the simplest case, $\Omega_i=0$ and we find
\be\label{eq:Seff0simple}
S_{\text{eff}}^{(0)} &= -a_0 \int d^p z\sqrt{\g}~\ell^{-p}(z)= -\frac{a_0 \vol S^p}{L^p},
\ee
where $\vol S^p$ is the volume of unit $p$-sphere, i.e.\ $\vol S^p=\frac{2\pi^{(p+1)/2}}{\G\left(\frac{p+1}2\right)}$.

For non-zero $\Omega_i$ the integration over $d^pz$ seems to be hard to carry out in full generality, so we focus on the case of $p=1$. In this case $m=\min\{p+1,d-p-1\}=2$ for $d\geq 4$ and $m=1$ in $d=3$. In general, the integral to perform is 
\begin{align}
	S_{\text{eff}}^{(0)} &= -\frac{a_0}{L} \int_0^{2\pi} d\phi \frac{1}{\sqrt{1+\Omega_1^2 \cos^2\phi + \Omega_2^2\sin^2\phi}}= -\frac{4a_0}{L\sqrt{1+\Omega_2^2}}K\left(\frac{-\Omega_1^2+\Omega_2^2}{1+\Omega_2^2}\right),
	\label{EFT-spin-1d-4d}
\end{align}
where $K(z)$ is the complete elliptic integral of the first kind. Note that the right-hand side is symmetric in $\Omega_1,\Omega_2$ despite the appearance.\footnote{This follows from the identity $K(z)=\tfrac{1}{\sqrt{1-z}}K(\tfrac{z}{z-1})$.} In $d=3$ we have to set $\Omega_2=0$ in the above equation.

\subsection{Spinless one-point function density for general defects}

The consistency of the OPE~\eqref{eq:full_two_pt_OPE} and the fusion~\eqref{eq:1pt_fusion_approx} results for the defect two-point function leads to the equation
\be\label{eq:crossing}
\int d\De dJ_1 \cdots dJ_m \r(\De,J_1,\cdots, J_m) r^{\De}e^{-i\sum_i J_{i}\theta_i}\sim\exp\left\{-S^{(0)}_\text{eff}+O(L^{-p+1})\right\}.
\ee
We will now study the implications of this equation, starting with the case $\Omega_i=0$ in~\eqref{rescalings}.

In this case $\theta_i=0$ and the left-hand side of~\eqref{eq:crossing} can be rewritten as
\be
\int d\De \r(\De) r^{\De}
\ee
where 
\be
\r(\De) = \int dJ_1 \cdots dJ_m \r(\De,J_1,\cdots, J_m)
\ee
is the total density of defect one-point functions at scaling dimension $\De$. The right-hand side of~\eqref{eq:crossing} can be evaluated using~\eqref{eq:Seff0simple}, giving
\be
\int d\De \r(\De) r^{\De}=\exp\left\{\frac{a_0 \vol S^p}{(1-r)^p}+O((1-r)^{-p+1})\right\}.
\ee

It is well-known~\cite{Cardy:1986ie, Pappadopulo:2012jk} that such relations constrain the large-$\De$ behaviour of $\r(\De)$. The simple way to extract the density $\rho(\De)$ from the above expression is using an inverse Laplace transform,
\begin{align}
	\r(\De) =\frac1{2\pi i} \int_{y_0-i\infty}^{y_0+i\infty}\,dy \exp\left\{\frac{a_0 \vol S^p}{(1-e^{-y})^p}+O(y^{-p+1})\right\}~e^{\De y},
	\label{density}
\end{align}
where $y_0$ has to be chosen sufficiently large real part so that all the singularities of the integrand are to the left of the integration contour. An approximation to $\r(\De)$ can be obtained under the assumption that for $\De\gg 1$ the integral is dominated by a saddle point at small $y$ where we have an approximation for the integrand.

Working under this assumption, we find the leading-order saddle-point equation at large $\De$ (recall that $a_0\geq 0$)
\begin{align}
	\frac{\ptl}{\ptl y}\frac{a_0 \vol S^p}{y^p} + \De =0
	\hspace{.3cm}\implies\hspace{.3cm}
	y_*\approx \left( \frac{p a_0\vol S^p}{\De}\right)^{\frac1{p+1}},
\end{align}
which yields for the leading one-point function density
\begin{align}\label{eq:spinlessresult}
	\rho(\De)\sim \exp\left[(1+p)\left(a_0\vol S^p\right)^{\frac1{p+1}}\left(\frac{\De}{p}\right)^{\frac{p}{p+1}} \right].
\end{align}
This can be compared with the asymptotic density of states in a $d$-dimensional CFT,
\be
\r_\text{states}(\De) \sim \exp\left[h \De^{\frac{d-1}{d}} \right]
\ee
where $h>0$ can be expressed in terms of free energy density~\cite{Benjamin:2023qsc}. For $p<d-1$ we have $\De^{\frac{p}{p+1}}\ll \De^{\frac{d-1}{d}}$, and thus on average the one-point functions of local operators in the presence of codimension-$q$ defect with $q>1$ have to be very small.

Equation~\eqref{eq:spinlessresult} gives the one-point function density for all operators including descendants. The density of primaries has the same leading term~\eqref{eq:spinlessresult} since we expect the series coefficients in the $r$-expansion of the conformal blocks $G_\cO$ appearing in~\eqref{eq:primaryOPE} to grow only polynomially with degree.\footnote{Note however that we did not prove this in full generality.}

The result~\eqref{eq:spinlessresult} was derived under some assumptions on the behaviour of the inverse Laplace transform. This approach, although not fully justified, can be taken further to compute subleading corrections to~\eqref{eq:spinlessresult} by a more careful evaluation of the saddle point integral and taking into account subleading terms in the fusion effective action.
Alternatively, results of this kind can be obtained rigorously using various Tauberian theorems~\cite{Pappadopulo:2012jk,Qiao:2017xif, Mukhametzhanov:2018zja, Mukhametzhanov:2019pzy}, although the subleading terms are hard to control and depend on the coarse-graining prescription~\cite{Mukhametzhanov:2019pzy}.\footnote{Coarse-graining is necessary to interpret~\eqref{eq:spinlessresult} since the left-hand side is a sum of delta-functions.}

Ignoring these subtleties, let us examine the first subleading term in~\eqref{eq:spinlessresult}. It can come from the saddle-point expansion of the integral or from the subleading terms in the identity coupling in $S_\text{eff}$. The latter start at two-derivative order (the one-derivative term~\eqref{eq:line3doneder} vanishes in these kinematics). Therefore, they are $O(1)$ or smaller for $p\leq 2$. On the other hand, the saddle-point integral will in general produce $\log \De$ terms in the exponent. We thus find the leading corrections for $p=1,2$,
\be
	\rho_{p=1}(\De)&\sim \De^{-3/4}\exp\left[2\sqrt{2\pi a_0}\De^{1/2} \right],\nn\\
	\rho_{p=2}(\De)&\sim \De^{-2/3}\exp\left[3\left(\pi a_0\right)^{1/3}\De^{2/3} \right].
\ee
Further subleading corrections can be straightforwardly incorporated. For $p>2$ the leading corrections come from the 2-derivative terms in the identity coupling.

\subsection{Spin-dependent one-point function density for line defects}

We now consider the more general situation when $\Omega_i\neq 0$ and therefore the expansion~\eqref{eq:crossing} captures information also about the $J_i$-dependence of $\r$. For simplicity, we will focus on line defects ($p=1$). We will assume that $d\geq 4$ (so that $m=2$) and discuss the modifications in the case $d=3$ in the end of this section. 

Using~\eqref{EFT-spin-1d-4d} we obtain the specialised version of~\eqref{eq:crossing}
\be
&\int d\De dJ_1 dJ_2 \r(\De,J_1,J_2) r^{\De}e^{-iJ_1\theta_1-iJ_2\theta_2}=\exp\left\{-S_\text{eff}^{(0)}+O(L^{-p+1})\right\}\nn\\
&\quad=\exp\left\{
\frac{4a_0}{L\sqrt{1+\Omega_2^2}}K\left(\frac{-\Omega_1^2+\Omega_2^2}{1+\Omega_2^2}\right)
+O(L^{-p+1})\right\},
\ee
where $r,\theta_i$ are related to $L,\Omega_i$ by~\eqref{rescalings}. The density $\r(\De,J_1,J_2)$ can be now obtained using inverse Fourier-Laplace transform
\be
\r(\De,J_1,J_2)=\frac1{(2\pi)^3 i} \int d\theta_1 d\theta_2 \int_{y_0-i\infty}^{y_0+i\infty}dy \exp\left\{-S^{(0)}_\text{eff}+O(y^{-p+1})\right\}e^{\De y+i J_1\theta_1+iJ_2\theta_2}.
\ee
We can again obtain a prediction for $\r(\De,J_1,J_2)$ at large $\De$ under the assumption that the above integrals can be computed using a saddle point approximation with a saddle at small $y$ and $\theta_i$.

Introducing the function
\be\label{eq:Fdefn}
F(\Omega_1,\Omega_2) = 
\frac{1}{\sqrt{1+\Omega_2^2}}K\left(\frac{-\Omega_1^2+\Omega_2^2}{1+\Omega_2^2}\right),
\ee
the saddle-point equations become (recall $\theta_i=L\Omega_i\approx y \Omega_i$)
\be	
-\frac{4a_0}{y^2}F(\Omega_1,\Omega_2)+\De+iJ_1 \Omega_1+iJ_2 \Omega_2=0,\quad \frac{4a_0}{y}\frac{\ptl F(\Omega_1,\Omega_2)}{\ptl \Omega_i} + i y J_i=0.
\ee
In other words,
\be
&\qquad\quad y= \sqrt{\frac{4a_0F(\Omega_1,\Omega_2)}{\De+iJ_1 \Omega_1+iJ_2 \Omega_2}},\\
&\frac{\ptl\log F(\Omega_1,\Omega_2)}{\ptl \Omega_i}=-\frac{iJ_i}{\De+iJ_1 \Omega_1+iJ_2 \Omega_2}.\label{eq:O1O2eqn}
\ee
The leading one-point function density is
\be\label{eq:resultDJJ}
\r(\De,J_1,J_2) \sim \exp\left[4\sqrt{a_0(\De+iJ_1 \Omega_1+iJ_2 \Omega_2) F(\Omega_1,\Omega_2)}\right],
\ee
where $\Omega_1$ and $\Omega_2$ are determined from~\eqref{eq:O1O2eqn} with $F$ defined in~\eqref{eq:Fdefn}. The solution for $\Omega_i$ is purely imaginary, and for $J_i\ll \De$ can be expanded as
\be
-i\Omega_1=2j_1 -3j_1(j_1-j_2)(j_1+j_2)+\frac{3}{2}j_1(j_1-j_2)(j_1+j_2)(5j_1^2-j_2^2)+O(j^6_{1,2}),
\ee
where $j_i=J_i/\De$. The expression for $\Omega_2$ is obtained by swapping $1$ and $2$ above. The corresponding expansion for the one-point function density is
\be\label{eq:smallj}
&\log \r(\De,J_1,J_2)\nn\\
&\quad \sim \sqrt{8\pi a_0\De}\p{1-\thalf(j_1^2+j_2^2)-\tfrac{3}{2}j_1^2j_2^2-\tfrac{1}{4}(j_1^2+j_2^2)(j_1^4+5j_1^2j_2^2+j_2^4)+O(j_{1,2}^8)}.
\ee

The above discussion applies to line defects in $d\geq 4$. The case $d=3$ can be obtained by setting $J_2=\Omega_2=0$. In particular, $\Omega=\Omega_1$ is determined by $J=J_1$ via the specialised version of~\eqref{eq:O1O2eqn}
\be\label{eq:O1eqn}
&\frac{\ptl \log K(-\Omega^2)}{\ptl \Omega}=-\frac{iJ}{\De+iJ\Omega}.
\ee 
In this case it is easy to check that the solution $\Omega(j)$ maps $j\in [-1,1]$ to $-i\Omega\in [-1,1]$ monotonically, with $\Omega(\pm 1)=\pm i$.
The leading one-point function density is
\be
\r(\De,J) \sim \exp\left[4\sqrt{a_0(\De+iJ \Omega) K(-\Omega^2)}\right].
\ee
The expansion in small $j=J/\De$ is obtained by setting $j_1=j, j_2=0$ in~\eqref{eq:smallj},
\be
&\log \r(\De,J)\sim \sqrt{8\pi a_0\De}\p{1-\thalf j^2-\tfrac{1}{4}j^6+O(j_{1,2}^8)}.
\ee

\section{Anomalies}
\label{sec:anomalies}

In this section we discuss how presence of Weyl anomalies affects the construction of the effective action $S_\text{eff}$. We show that methods of section~\ref{sec:confgeom} allow a general construction of Weyl-anomaly matching terms.

Note that conformal defects might also have associated gravitational or other anomalies. To simplify the discussion, we assume that there no anomalies are present other than Weyl anomalies. In general, matching of all anomalies has to be carefully taken into account.

\subsection{Weyl anomalies}
\label{sec:weyl}

Generally speaking, the partition functions of a conformal field theory are not necessarily conformally-invariant, but can transform anomalously under Weyl transformation,
\be\label{eq:WeylAnomaly}
	\cZ(e^{2\w}g)=e^{\cA(g,\w)}\cZ(g).
\ee
Here $\cZ(g)$ denotes the partition function (possibly with local operators and defects inserted), and $\cA(g,\w)$ is the Weyl anomaly. The Weyl anomaly has to be a local functional, and is usually specified to the leading order in $\w$. For example, in a 4d bulk CFT, with no operator insertions in the partition function,
\be
	\cA(g,\w)=\int d^4x \sqrt{g}\,\omega\,(-a E_4+ c C_{\mu\nu\rho\alpha}C^{\mu\nu\rho\alpha}+ i\tl c \varepsilon^{\mu\nu\rho\alpha}R_{\mu\nu\gamma\beta}R^{\gamma\beta}{}_{\rho\alpha})+O(\w^2)
\ee
for some anomaly coefficients $a,c,\tl c$ (see~\cite{Duff:1993wm}). When defects are inserted into the partition function, the anomaly splits into the bulk and defect contributions. For example, if we have defects $\cD_1\cdots \cD_n$ inserted, then
\be
	\cA(g,\w)=\cA_{\text{bulk}}(g,\w)+\sum_{i=1}^n \cA_{\cD_i}(g,\w),
\ee
where each $\cA_{\cD_i}$ is a local functional on the defect $\cD_i$. See~\cite{Chalabi:2021jud} for a review of defect Weyl anomalies.

In this section we will address two problems. The first problem is that in the fusion identity~\eqref{eq:mainEFT}
\be\label{eq:EFTweyl}
	\cD_1\cD_2 \sim \cD_\Sigma [e^{-S_\text{eff}}]
\ee
we have different defects on the two sides, and therefore different Weyl anomalies. If $S_\text{eff}$ is Weyl-invariant, the two sides will not in general transform in the same way under Weyl transformations. This means that $S_\text{eff}$ needs to be modified in order to match the Weyl anomalies. We explicitly construct the necessary anomaly-matching terms in section~\ref{sec:anomalymatching}.

The second problem is that, as discussed in section~\ref{sec:transverse}, we in general need to consider transverse-symmetry breaking defects $\cD_\S$. We are not aware of a classification of Weyl anomalies for such defects, and therefore in sections~\ref{sec:anomalyline} and~\ref{sec:anomalysurface} we classify Weyl anomalies for line and surface defects that break transverse rotations to $\SO(q-1)$.

\subsection{Anomaly-matching}
\label{sec:anomalymatching}

First, note that under a Weyl transformation, a partition function involving the left-hand side of~\eqref{eq:EFTweyl} transforms as
\be
	\cZ(e^{2\w}g, \cD_1\cD_2)=e^{\cA_{\text{bulk}}(g,\w)+\cA_{\cD_1}(g,\w)+\cA_{\cD_2}(g,\w)}\cZ(g,\cD_1\cD_2).
\ee
On the other hand, a partition function with the right-hand side transforms as
\be
\cZ(e^{2\w}g, \cD_\S[e^{-S_\text{eff}(e^{2\w}g)}])=e^{\cA_{\text{bulk}}(g,\w)+\cA_{\cD_\S}(g,\w)+S_\text{eff}(g)-S_\text{eff}(e^{2\w}g)}\cZ(g,\cD_\S[e^{-S_\text{eff}(g)}]).
\ee
Here, we explicitly keep the dependence of the effective action on the metric $g$ and we also assumed that $S_\text{eff}(g)-S_\text{eff}(e^{2\w}g)$ only contains the identity operator contribution and thus can be taken out of the partition function. An important comment is in order regarding $\cA_{\cD_\S}(g,\w)$. This anomaly is to be computed for $\cD_\S$, displaced according to the displacement operator coupling as discussed in section~\ref{sec:displacment_and_schemes}.\footnote{There are two ways of seeing this. On the one hand, adding a displacement operator coupling is by definition equivalent to displacing the defect. Therefore, the partition function with the displacement operator coupling turned on and the partition function with the displaced $\cD_\S$ should behave in the same way under Weyl transformations. On the other hand, if we use conformal perturbation theory in the displacement coupling, Weyl anomaly contributions will come from divergences associated with the identity operator appearing in repeated OPEs of $D_\mu$. Consistency of these two points of view leads to relations between correlation functions of $D_\mu$ and defect Weyl anomaly coefficients~\cite{Herzog:2017xha,Herzog:2017kkj}.} Other subtleties in the evaluation of $\cA_{\cD_\S}(g,\w)$ might arise if $S_\text{eff}$ contains operators with special scaling dimensions, see section~\ref{sec:displacment_and_schemes} for a discussion of this.

The Weyl anomaly will be matched if
\be\label{eq:requirement}
	\cA_{\cD_1}(g,\w)+\cA_{\cD_2}(g,\w)-\cA_{\cD_\S}(g,\w)=S_\text{eff}(g)-S_\text{eff}(e^{2\w}g).
\ee

To proceed, we note that the transformation law~\eqref{eq:WeylAnomaly} implies the following addition law for $\cA(g,\w)$,
\be\label{eq:WeylComposition}
	\cA(e^{2\w_2}g,\w_1)=\cA(g,\w_1+\w_2)-\cA(g,\w_2).
\ee
Consider now the term
\be
	\cB(g)=\cA_{\cD_1}(g,-\log \ell(g)),
\ee
where we keep explicitly the dependence of $\ell$ on $g$. From~\eqref{eq:WeylComposition},  this transforms as
\be
	\cB(e^{2\w}g)=\cA_{\cD_1}(e^{2\w}g,-\w-\log \ell(g))=\cB(g)-\cA_{\cD_1}(g,\w).
\ee
Therefore, if we write
\be
	S_\text{eff}=\cA_{\cD_1}(g,-\log \ell)+\cA_{\cD_2}(g,-\log \ell)-\cA_{\cD_\S}(g,-\log \ell)+S^{0}_\text{eff},
\ee
then~\eqref{eq:requirement} implies that $S^{0}_\text{eff}(g)$ is Weyl-invariant and we can use the discussion of the previous sections to construct it. The terms 
\be\label{eq:matching terms}
\cA_{\cD_1}(g,-\log \ell)+\cA_{\cD_2}(g,-\log \ell)-\cA_{\cD_\S}(g,-\log \ell)
\ee
are therefore the necessary Weyl-anomaly matching terms.

Note that in~\eqref{eq:matching terms} different terms evaluate $-\log \ell$ on a different submanifolds -- the Weyl anomaly of $\cD_1$ depends on the Weyl factor on $\cD_1$ and so on. It is therefore crucial for this construction to use the bulk-extended version of $\ell$ that was constructed in section~\ref{sec:dilaton}. When evaluated in terms of data on $\cD_\S$ (as all other terms in the effective action),~\eqref{eq:matching terms} becomes an infinite derivative expansion, suppressed by powers of $L$.

Finally, let us briefly discuss how one can reconstruct $\cA(g,\w)$ from the its leading term in $\w$, which is the term that is usually specified. Suppose
\be
	\cA(g,\w)=\cA^{(1)}(g,\w)+O(\w^2),
\ee
where $\cA^{(1)}(g,\w)$ is linear in $\w$. First, we note that for any Weyl transformation $\tl \w$,
\be
	\cA^{(1)}(e^{2\tl\w}g,\w)=\cA^{(1)}(g,\w)+\de\cA^{(1)}(g,\w;\ptl \tl \w),
\ee
where $\de\cA^{(1)}(g,\w;\ptl \tl \w)$ is a finite-order local polynomial in $\tl\w$, and furthermore $\tl\w$ only enters through its derivatives. This is because $\cA^{(1)}(g,\w)$ is scale-invariant and $\de\cA^{(1)}$ must vanish for constant $\tl\w$. But then $\de\cA^{(1)}$ only depends on derivatives of $\tl\w$, and only finitely many derivatives can enter by dimensional analysis. For example, for a line defect at most $\ptl \tl\w$ can appear and therefore $\de\cA^{(1)}(g,\w;\ptl \tl \w)$ is linear in $\ptl \tl\w$. On the other hand, for a surface defect we can have $\ptl\tl\w$, $\ptl^2\tl\w$, and $(\ptl\tl\w)^2$. 

It remains to note that the composition law~\eqref{eq:WeylComposition} implies that
\be
	\ptl_t \cA(g, t\w)=\cA^{(1)}(e^{2t\w}g,\w)=\cA^{(1)}(g,\w)+\de\cA^{(1)}(g,\w;t\ptl \w).
\ee
This implies
\be
	\cA(g, \w)=\cA^{(1)}(g,\w)+\int_0^1 dt \de\cA^{(1)}(g,\w;t\ptl \w).
\ee
Since the integrand in the right-hand side is an explicit polynomial in $t$ that is easily computable from $\cA^{(1)}$, this expression allows one to compute $\cA(g,\w)$.

\subsection{Line defects}
\label{sec:anomalyline}

In this section we classify Weyl anomalies for line defects $\cD$ that break transverse rotations down to $\SO(q-1)$. We find a non-trivial anomaly in $d=3$ but no anomalies in $d\geq 4$.

 We follow the standard strategy (see, for example,~\cite{Chalabi:2021jud}), where we first the most general local ansatz for $\cA^{(1)}_\cD$, impose Wess-Zumino consistency conditions, and finally isolate the scheme-invariant part of the anomaly. We first consider $d=3$ and generalise to $d\geq 4$ later.

Recall from section~\ref{sec:transverse} that the breaking of transverse rotations is parameterised by a unit normal vector field $n^\mu$ of scaling dimension $0$. This vector field allows us to construct new terms in the ansatz for $\cA^{(1)}_\cD$, making the classification problem different from the case of transverse rotation-preserving defects.

For convenience, we introduce a unit vector field $\tau$ tangent to the defect. Note that in this section we work with the physical metric $g_{\mu\nu}$. Since $\tau$ is a unit vector, it also has dimension $0$. Using $\tau$ and $n$, we can construct a unit vector $m$ orthogonal to both of them,
\begin{align}
	m^\mu=\varepsilon^{\mu}{}_{\nu\rho} n^\nu \tau^\r,
\end{align}

The most general form of the Weyl anomaly is
\begin{align}
	\cA^{(1)}_\cD(g,\w)=\int \,dz \sqrt{\gamma} f,
\end{align}
where $f$ is a local expression of mass dimension 1. We first construct the most general ansatz for $f$ using the standard curvature invariants, $n,\tau, m$, and covariant derivatives. Only linearly independent contributions to $\cA^{(1)}_\cD$ should be included. In particular, in $f$ we should not include total derivatives. This leads to the following ansatz,
\begin{align}
	\cA^{(1)}_\cD(g,\w)=&\int d z \sqrt{\gamma} \w\p{c_1 n_\mu \II^\mu+c_2 m_\mu \II^\mu+ic_3 \tau^a m_\mu\nabla^\perp_a n^\mu}\nn\\
	&+\int d z \sqrt{\gamma} \ptl_\mu \w\p{c_4n^\mu+c_5m^\mu}.
\end{align}

The Weyl variation of $\cA^{(1)}_\cD(g,\w_1)$ is given by
\begin{align}
	\delta_{\w_2}\cA^{(1)}_\cD(g,\w_1)=-\int d z \sqrt{\gamma} \p{c_1 n^\mu \w_1 \partial_\mu \w_2+c_2 m^\mu \w_1 \partial_\mu \w_2}.
\end{align}
Wess-Zumino consistency condition requires $\delta_{\w_1}\cA^{(1)}_\cD(g,\w_2)=\delta_{\w_2}\cA^{(1)}_\cD(g,\w_1)$, which implies $c_1=c_2=0$. Therefore, the most general form for the anomaly allowed by Wess-Zumino consistency condition is
\begin{align}
	\cA^{(1)}_\cD(g,\w)=\int d z \sqrt{\gamma} \p{ic_3\w\tau^a m_\mu{\nabla}^\perp_a n^\mu+\partial_\mu\w(c_4n^\mu+c_5m^\mu)}.
\end{align}
The coefficients $c_4$ and $c_5$ can be set to $0$ using local counter-terms $n\.\II$ and $m\.\II$. Therefore, the final form of the anomaly is
\begin{align}
	\cA^{(1)}_\cD(g,\w)&=i\int d z \sqrt{\gamma} c_3\w\tau^a m_\mu{\nabla}^\perp_a n^\mu.
\end{align}
Note that for a constant $\w$ this reduces to $2\pi i c_3 I(n)$ evaluated in the physical metric, where $I(n)$ is discussed in section~\ref{sec:subleading}.

The only essential modification in $d>3$ is that the terms that involve $m^\mu$ do not exist there. The result is that no defect Weyl anomaly is possible in that case.

\subsection{Surface defects in $d=4$}
\label{sec:anomalysurface}

We now consider surface defects in $d=4$. We once again assume that this defect breaks rotational symmetry along the defect, and that this breaking can be parameterised by a normal vector field $n^\mu$. As the defect is now two-dimensional, we can choose an orthonormal basis in its tangent space at every point, parameterised by vector fields $\tau^a$ and $\sigma^a$. This choice of basis is however arbitrary and unphysical, and therefore the terms of the Weyl anomaly should not depend on this choice. The easiest way to impose such a requirement, is to use complex coordinates
\begin{align}
	\z^a=\frac{\tau^a+i\sigma^a}{\sqrt2},\qquad\bar{\z}^a=\frac{\tau^a-i\sigma^a}{\sqrt2}
\end{align}
for the tangent bundle of the defect. Rotations of these basis vectors map $\z^a\to e^{i\theta}\z^a$, $\bar \z^a\to e^{-i\theta}\bar \z^a$, and so we can easily see how terms transform under these rotations.

As before, we also introduce a final basis vector normal to the defect,
\begin{align}
	m^\mu=\varepsilon^\mu{}_{\nu\rho\sigma} n^\nu \tau^\rho s^\sigma,
\end{align}
and we then can use
\begin{align}
	\chi^\mu=\frac{m^\mu+in^\mu}{\sqrt2},\qquad\bar{\chi}^\mu=\frac{m^\mu-in^\mu}{\sqrt2}
\end{align}
as complex coordinates on the defect. Note that $m$ is invariant under rotations of $\tau,\s$.  Now in each tangent space we have a set of coordinates defined by $\z,\bar\z,\chi,\bar\chi$, and we can use these coordinates to describe the components of various tensors, e.g.\ $\II^\chi_{\z\z}$.

We then proceed as in the case of the line defect. As compared with the last section, we shall give fewer details of the calculations involved in this case. This is because the methods are almost identical, but the list of terms (given in appendix~\ref{app:ansatz}) we have to consider is much longer. This list was once again obtained by writing down a list of all possible terms with two derivatives, and removing any terms that are not linearly independent. In order to remove such linear dependence, the Gauss, Codazzi and Ricci identities have to be considered. 

After applying Wess-Zumino consistency conditions taking local counterterms into account, we find that the scheme-independent part of the defect Weyl anomaly is
\be
	\cA^{(1)}_{\cD}(g,\w)=&\int d^2z \sqrt{\gamma}\w\bigg(
	d_1\bar{\chi}_\mu (\nabla^\perp)^2 \chi^{\mu}
	+\bar d_1\chi_\mu (\nabla^\perp)^2 \bar{\chi}^{\mu}
	+d_2\mathring\II_{\chi\z\z}\mathring\II_{\chi\bar{\z}\bar{\z}}
	+\bar d_2\mathring\II_{\bar{\chi}\z\z}\mathring\II_{\bar{\chi}\bar{\z}\bar{\z}}\nn\\
	&\hspace{2.2cm}
	+e_1\mathring\II_{\chi\z\z}\mathring\II_{\bar{\chi}\bar{\z}\bar{\z}}
	+\bar e_1\mathring\II_{\bar{\chi}\z\z}\mathring\II_{\chi\bar{\z}\bar{\z}}
	+e_2C_{\chi\bar{\chi}\chi\bar{\chi}}
	+ie_3C_{\chi\bar{\chi}\z\bar{\z}}+c_\cD R_\g \bigg)\nn\\
	&+i\int d^2z \sqrt{\gamma}d_5(\II_\chi \partial_\chi \w-\II_{\bar{\chi}} \partial_{\bar{\chi}}\w),
\ee
where $R_\g$ is the Ricci scalar of the defect metric. The coefficients $e_2,e_3,c_\cD,d_5$, are real.\footnote{Note that complex conjugation plus orientation reversal exchanges $w$ and $\bar w$ and does nothing to $\z$ and $\bar \z$.} The terms with coefficients $c_\cD$ and $e_i$ do not break transverse rotations and are already known \cite{Chalabi:2021jud, Graham:1999pm, Schwimmer:2008yh, Cvitan:2015ana, Jensen:2018rxu}, whereas the terms with coefficients $d_i$ do break transverse rotations, and are new. One point to note is that the coefficients $d_i$ and $e_i$ can generically be position-dependent, but by Wess-Zumino consistency conditions, $c_\cD$ must be position-independent and thus independent of marginal couplings. All of the terms are B-type (i.e.\ they transform trivially under Weyl transformations) except from $c_\cD R_\g$, which is A-type.

\section{Fusion and OPE}
\label{sec:fusionandOPE}

In this section we discuss the analogies and differences between the fusion of conformal defects and the OPE of local operators. Formally, a local operator can be viewed as a 0-dimensional defect, which allows many questions to be translated between the two situations. As everywhere in the paper, in this section ``conformal defect'' refers to a conformal defect with $p>0$, unless explicitly stated otherwise.

We first discuss the differences in how one labels conformal defects and primary operators. To a primary local operator $\cO$ one associates a set of quantum numbers, which include its scaling dimension, spin representation, and representations under various global symmetry groups. By contrast, no such assignment is made in the case of conformal defects.

Consider first the spin and global symmetry representations. These describe the action of various symmetry groups on the vector space spanned by different components of the local operator $\cO$. In other words, a given component of $\cO$ transforms non-trivially under these symmetries. This of course can also happen for conformal defects, which can break global symmetries and transverse rotations. A crucial difference is, however, that the local operators form a vector space, and therefore organise into a linear representation of the symmetry. This representation can then be decomposed into irreducible components, which is what allows us to label local operators by irreps. By contrast, as discussed in section~\ref{sec:simplicity}, locality requirements mean that conformal defects do not form a natural vector space, and therefore no such decomposition is possible.\footnote{In this sense the ``spinning conformal defects'' of~\cite{Guha:2018snh, Kobayashi:2018okw} are necessarily non-local and are not the ones considered in this paper.}

The scaling dimensions of local operators also fall under the above discussion since they determine the representation of the dilatation subgroup. However, in this case a different point of view is possible. Specifically, one can interpret the scaling dimension of a local operator as the Weyl anomaly for a 0-dimensional conformal defect. For higher-dimensional defects the classification of Weyl anomalies is different~\cite{Chalabi:2021jud}, and it is therefore not surprising that there isn't a direct analogue of scaling dimensions for all conformal defects.

From this point of view, the leading-order term in the local operator product expansion can be seen as a Weyl anomaly-matching term in the effective action. For example, consider the contribution of a scalar local operator $\cO$ of dimension $\De$ to the OPE of two scalar operators $\f$ of dimension $\De_\f$,
\be
	\f(x)\f(0)\ni |x|^{\De-2\De_\f}\cO(0)+\cdots=e^{(\De-2\De_\f)\log|x|}\cO(0)+\cdots.
\ee
This can be written using a notation similar to the one we used for conformal defects,
\be\label{eq:OPEleading}
	\f(v)\f(0)=e^{(\De-2\De_\f)\log \ell}\cO(0)+\cdots.
\ee
Here, we parameterise the position of one of the operators using a tangent vector $v$ at $0$, and $\ell^2 = g_{\mu\nu}v^\mu v^\nu$. Since on a 0-dimensional defect integration is trivial, this can be viewed the leading-order analogue of equation~\eqref{eq:matching terms} for conformal defects.  From section~\ref{sec:anomalymatching} we know that this leading-order term fails to completely match the anomaly, since $\f(v)$ and $\f(0)$ transform with the Weyl factor evaluated at different points.

In fact, it impossible to complete~\eqref{eq:OPEleading} into an equation that matches the Weyl anomaly to all orders in $L$ without including some analogue of non-trivial operators in $S_\text{eff}$. Indeed, if it were possible, the OPE $\f(v)\f(0)$ would be conformally-invariant even with only the primary $\cO(0)$ is included, but it is well-known that inclusion of descendants is generally necessary. The anomaly-matching terms constructed in section~\ref{sec:anomalies} do not exist here because the differential equation~\eqref{eq:surface_evolution} and the invariant~\eqref{eq:qdefn} are not defined for $p=0$.\footnote{For the same reason, the conformally-invariant definition of $v$ in section~\ref{sec:displacement} does not work for $0$-dimensional defects. In fact it is not hard to show that no conformally-invariant definition of displacement vector $v$ is possible for $p=0$: the group of conformal transformations that preserve two points in $\R^d$ is too large -- it can rotate and rescale any tangent vector $v$ without moving the insertion points of the local operators. Therefore, no tangent vector can parameterise the relative position of two points in a conformally-invariant way.\label{footnote:nop=0v}}

To fix this problem, we should include the descendants of $\cO$. It is natural to treat these descendants as being analogous to insertions of a displacement operator on a 0-dimensional defect $\cD$,
\be
	\<\ptl_{\mu_1}\cdots \ptl_{\mu_n}\cO\cdots \>\sim \<\cD[D_{\mu_1}\cdots D_{\mu_n}]\cdots\>.
\ee
The $n$-point correlation function of $D_\mu$ are subject to scheme ambiguities which affect its definition at coincident points (c.f.\ section~\ref{sec:displacment_and_schemes}). In a generic situation, the possible modifications are given by terms including derivatives of delta-functions and lower-point correlation functions of $D_{\mu}$. In the case of 0-dimensional defects, the only possible configuration of $n$ points is a coincident point configuration, which in terms of $\cO$ means that we can correct the $n$\textsuperscript{th} order derivatives of $\cO$ by its lower-order derivatives and interpret this as a scheme change. For instance, we have the schemes
\be
	\<\cD[D_{\mu_1}\cdots D_{\mu_n}]\cdots\>&=\<\ptl_{\mu_1}\cdots \ptl_{\mu_n}\cO\cdots \>,\\
	\<\cD[D_{\mu_1}\cdots D_{\mu_n}]\cdots\>&=\<\nabla_{\mu_1}\cdots \nabla_{\mu_n}\cO\cdots \>.
\ee
They differ by subleading derivatives of $\cO$ times derivatives of the metric, and only the latter scheme is generally-covariant. As mentioned in footnote~\ref{footnote:nop=0v}, no Weyl-invariant schemes exist in this case.

For conformal defects, we would normally include the displacement operator into the effective action as
\be
	\cD[e^{-\int d^pz \sqrt{\g}\l^\mu D_\mu}],
\ee
where locality requires us to use the exponential of a local integral. This is necessary in order to obtain proper factorisation properties for path integrals. In the case of $0$-dimensional defects no such requirement exists (a 0-dimensional defect cannot be cut in half) and we can therefore modify the partition functions in a more general way,
\be
	\cD[1+\l^\mu_1 D_\mu + \thalf \l^{\mu\nu}_2D_\mu D_\nu + \cdots]\sim \cO+\l^\mu_1 \nabla_\mu \cO + \thalf \l^{\mu\nu}_2\nabla_\mu \nabla_\nu\cO + \cdots,
\ee
where the coefficients $\l_1,\l_2,\cdots$ are all independent. They are constrained by Weyl invariance (which in flat space fixes them completely), but not by a requirement to form an exponential series.

To summarise, we find that there are two key differences between the fusion of local operators and of general conformal defects. Firstly, the conformal geometry of a pair of conformal defects is, in a sense, simpler than that for a pair of local operators, in the sense that the Weyl-invariant parameterisation of section~\ref{sec:displacement} and the scale function of section~\ref{sec:dilaton} exist. Due to this, Weyl-invariance appears to impose fewer constraints on fusion of conformal defects than on local operators. Secondly, the locality constraints are very different in the two cases, giving more constraints in the case of conformal defects. These two effects imply that the two operations have quite different properties despite naively being very similar.

\section{Discussion}
\label{sec:discussion}

In this paper we studied effective field theory for fusion of conformal defects. We have described how such an effective field theory can be written down to any order in derivative expansion, fully taking into account the diffeomorphism- and Weyl-invariance constraints. We have also explained how the terms in the effective action can be evaluated for setups with simple kinematics. Finally, we have studied simple applications of this effective field theory: to the cusp anomalous dimensions, and to the asymptotics of bulk one-point functions.

As we explained, it is possible that transverse-symmetry breaking conformal defects might arise as the result of the fusion. We found in section~\ref{sec:anomalies} that such line defects in $d=3$ can have non-trivial Weyl anomalies, and that new Weyl anomaly terms arise for surface defects in $d=4$.  It would be interesting to find and study examples of such symmetry-breaking defects.

Furthermore, it would be interesting to extend the study of cusp anomalous dimensions from section~\ref{sec:Wilson}. For example, it is important to determine whether any non-identity operators contribute to the planar $\G_\text{cusp}$ and if yes, how is it compatible with~\eqref{eq:Hexact}. Computation of subleading Wilson coefficients in the effective action using integrability techniques is another open direction.

The geometric methods that we developed in section~\ref{sec:confgeom} allow an easy extension to more exotic fusion setups. For example, one can consider fusion of defects of different dimensions. Alternatively, defects of the same dimension can approach each other in more general configurations, so that in the limit the have a lower-dimensional intersection. We believe these questions can be straightforwardly addressed using our approach from section~\ref{sec:confgeom}. On the other hand, limits such as the small-radius limit of a cylindrical surface defect would require development of new tools.

Finally, an important question is to understand whether in the setup of section~\ref{sec:operatorasymptotics} defect fusion can be replaced by a convergent expansion of some sort. Doing so would be important for enabling the application of numerical bootstrap techniques to conformal defects. Fusion effective theory gives only an asymptotic expansion. While being sufficient for the derivation of asymptotic relations of section~\ref{sec:operatorasymptotics}, it falls short of providing a quantitative handle on the low-energy part of the spectrum in the crossed channel.

\section*{Acknowledgments}

We would like to thank Sean Curry, Nadav Drukker, Nikolay Gromov, Christopher Herzog, Zohar Komargodski and David Simmons-Duffin for discussions. We especially thank Nikolay Gromov for many helpful comments on the draft and for providing us with numerical data. The work of PK was supported by the UK Engineering and Physical Sciences Research council grant number EP/X042618/1; and the Science Technology \& Facilities council grant number ST/X000753/1. RS is supported by the Royal Society-Newton International Fellowship NIF/R1/221054-Royal Society.

\appendix

\section{Geometry and conventions}
\label{app:geometry}

Given an embedding $X:N\to M$ of a defect, we can construct a map $TN\to TM$ as $e_a^\mu\equiv\partial_a X^\mu$. We shall use $\nabla$ as notation for the Levi-Civita connections on both $M$ and $N$, and we shall use $\nabla^\perp$ for the connection in the normal bundle on $N$.

When there is no risk of ambiguity we shall often commit an abuse of notation, and give defect tensors bulk indices using $e^\mu_a$, for example defining $t^\mu\equiv e^\mu_a t^a$. We can additionally define the second fundamental form of the defect $\II^\mu_{ab}\equiv\nabla_a e^\mu_b$, and as this is symmetric in $a$ and $b$, it can be split into a trace part $\II^\mu\equiv\II^\mu_{ab}\g^{ab}$, and a traceless symmetric part $\mathring\II^\mu_{ab}\equiv\II^\mu_{ab}-\frac{1}{p}\II^\mu \g_{ab}$. 

As always, given the Levi-Civita connection on either $M$ or $N$, we can define a Riemann tensor $R^\rho{}_{\sigma\mu\nu}$, a Ricci tensor $R_{\mu\nu}$, and a Ricci scalar $R$. We shall additionally need the bulk Schouten tensor
\be
	P_{\mu\nu}\equiv\frac{1}{d-2}\p{R_{\mu\nu}-\frac{1}{2(d-1)}Rg_{\mu\nu}},\label{eq:schoutendef}
\ee
the bulk Weyl tensor
\be\label{eq:RiemannDecomposition}
	C_{\mu\nu\rho\sigma}\equiv R_{\mu\nu\rho\sigma}-P_{\mu\rho}g_{\nu\sigma}+P_{\nu\rho}g_{\mu\sigma}-P_{\nu\sigma}g_{\mu\rho}+P_{\mu\sigma}g_{\nu\rho}.
\ee

We define the Weyl weight $n$ of a quantity $A$ by the requirement that under a Weyl rescaling $g_{\mu\nu}\to e^{2\w}g_{\mu\nu}$ the quantity $A$ transforms as $A\to e^{n\w}A$. Note that if $A$ is a tensor object with $i$ contravariant and $j$ covariant indices, $A=A^{\mu_1\mu_2\dots \mu_i}_{\nu_1\nu_2\dots \nu_j}$, then the relation between its Weyl weight $n$ and what is usually meant by the scaling dimension $\De$ is
\be\label{eq:dimweylrelation}
	\De=-n+j-i.
\ee
The shift between $\De$ and $-n$ appears because the scaling dimension determines the behaviour under conformal transformations, which are coordinate transformations and act on the tensor indices. For example, the metric $g_{\mu\nu}$ has Weyl weight $n=2$, but its scaling dimension is $\De=-2+2=0$. It is often advantageous to work with scaling dimensions because they remain invariant when the indices are raised, lowered or contracted. Relatedly, the primary operators transform under a Weyl rescaling $g_{\mu\nu}\to e^{2\w}g_{\mu\nu}$ as
\be
	\delta_\w\mathcal{O}^{\mu_1\mu_2\dots \mu_i}_{\nu_1\nu_2\dots \nu_j}(x)=(-\Delta+j-i)\w(x) \mathcal{O}^{\mu_1\mu_2\dots \mu_i}_{\nu_1\nu_2\dots \nu_j}(x).
\ee

\subsection{Relations between curvature invariants}

When one considers a submanifold of a Riemannian manifold, in addition to the bulk Riemann tensor $R_{\mu\nu}{}^\r{}_\s$ it is also possible to define the intrinsic submanifold Riemann tensor $\check{R}_{ab}{}^c{}_d$ as well as the curvature tensor for the normal connection $R^\perp_{ab}{}^{\mu}{}_\nu$. Together with the second fundamental form $\II^\mu_{ab}$, these objects are not independent and are related by the Gauss, Codazzi-Mainardi, and Ricci equations. These equations (and their derivatives) are the only non-trivial equations between these curvature invariants. See~\cite{SpivakIV} for details.

The Gauss equation states
\be\label{eq:gauss}
	R_{ab}{}^c{}_d&=\check{R}_{ab}{}^c{}_d-2\II_{\mu[a}{}^c\II^\mu_{b]d}.
\ee
The Codazzi-Mainardi equation is
\be\label{eq:codazzi-mainardi}
	R_{ab}{}^\mu{}_c=\nabla_a \II^\mu_{bc}-\nabla_b \II^\mu_{ac}.\quad (\mu\text{ normal})
\ee
Finally, the Ricci equation reads
\be\label{eq:ricci}
	R_{ab}{}^\r{}_\s=R_{ab}^\perp{}^\r{}_\s-\II^\mu_{ac}\II_{\nu b}{}^c+\II^\mu_{bc}\II_{\nu a}{}^c.\quad(\r,\s\text{ normal})
\ee

The Ricci equation~\eqref{eq:ricci} essentially means that we do not have to consider $R^\perp$ as a separate invariant since it can be expressed in terms of $R$ and $\II$. The Codazzi-Mainardi equation~\eqref{eq:codazzi-mainardi} are mostly irrelevant for this paper since we do not ever have to consider covariant derivatives of $\II$. On the other hand, the Gauss equation~\eqref{eq:gauss} does reduce the number of independent two-derivative terms in certain situations.

It is only the fully-contracted Gauss equation that can appear in our analysis. This gives
\be
	R_{ab}{}^{ab}&=\check{R}-\II_{\mu}\II^\mu+\II_{\mu b}{}^a\II^\mu_{a}{}^b.
\ee
Notice that the left-hand side has the bulk Riemann tensor contracted only along the defect directions. Rearranging the sums, we can rewrite this as~\cite{Chalabi:2021jud}
\be
	R&=\check{R}-\II_{\mu}\II^\mu+\II_{\mu ab}\II^{\mu ab}+2R^j_j-R_{jk}{}^{jk},
\ee
where the contractions over $j,k$ are taken in the normal directions.  

We will also need the following special case of this identity. Assuming that normal components of Schouten tensor vanish as in~\eqref{eq:P=0}, and that $\hat\II^\mu=0$ as in~\eqref{eq:II=0} we then find in the fusion metric (using also~\eqref{eq:RiemannDecomposition})
\be
	\frac{d-1-q}{d-1}\hat R&=\hat{\check{R}}+\hat \II_{\mu ab}\hat \II^{\mu ab}-C_{jk}{}^{jk}.
\ee
For codimension $q=1$ the contraction of the Weyl tensor vanishes by anti-symmetry and we get
\be\label{eq:codim1gaussfusion}
	\frac{d-2}{d-1}\hat R&=\hat {\check{R}}+\hat\II_{\mu ab}\hat \II^{\mu ab}.
\ee
For codimension $q=2$ the normal contraction of the Weyl tensor is equal to
\be
	C_{jk}{}^{jk}=2C_{vuvu},
\ee
where $\{v,u\}$ is any orthonormal basis in the normal bundle. Therefore, for $q=2$ we find
\be\label{eq:codim2gaussfusion}
	\frac{d-3}{d-1}\hat R&=\hat {\check{R}}+\hat\II_{\mu ab}\hat \II^{\mu ab}+2C_{vuvu}.
\ee

\section{Conformal circle equation}

\label{app:CCequation}

In this appendix we rewrite the conformal circle in a form needed in section~\ref{sec:dilaton}. The conformal circle equation is
\be
\nabla_t a^\mu = \frac{3 u\.a}{u^2}a^\mu-\frac{3a^2}{2u^2}u^\mu+u^2u^\nu P^\mu_\nu-2P_{\a\b}u^\a u^\b U^\mu.
\ee
Define 
\be
	f&=\frac{u\.a}{u^2},\quad h = \frac{a^2}{u^2}.
\ee
We calculate
\be
	\nabla_t f &= h-2f^2+\frac{u\.\nabla_t a}{u^2}=-\tfrac{1}{2}h+f^2-u^\mu u^\nu P_{\mu\nu},\\
	\nabla_t h &= -2fh+2\frac{a\.\nabla_t a}{u^2}=fh+2a^\mu u^\nu P_{\mu\nu}-4P_{\a\b}u^\a u^\b f.
\ee
Altogether, we can write this as a system
\be
	\nabla_t \g^\mu &= u^\mu,\\
	\nabla_t u^\mu &= a^\mu,\\
	\nabla_t a^\mu &= 3fa^\mu-\tfrac{3}{2}hu^\mu+u^2u^\nu P^\mu_\nu-2P_{\a\b}u^\a u^\b u^\mu,\\
	\nabla_t f &=-\tfrac{1}{2}h+f^2-u^\mu u^\nu P_{\mu\nu},\\
	\nabla_t h &=fh+2a^\mu u^\nu P_{\mu\nu}-4P_{\a\b}u^\a u^\b f,
\ee
where the right-hand sides are now smooth functions of $\g,u,a,f,h$.

\section{Two-derivative identity operator couplings}
\label{app:two-derivatives}

In this section we classify two-derivative terms in the identity operator coupling. We first consider line defects, and then the codimension-1 and codimension-2 defects which are \textit{not} line defects.

\subsection{Line defects}
\label{app:lines}
Line defects have no intrinsic curvature invariants, and~\eqref{eq:II=0} implies that no extrinsic curvature invariants are available either (since the second fundamental form only has the trace component). Therefore, in building the effective action we can only use bulk curvature invariants and the derivatives of $v^\mu$. For bulk curvatures, it is convenient to decompose the Riemann tensor into Weyl tensor and Schouten tensor~\eqref{eq:schoutendef}: Weyl tensor is Weyl-invariant, while Schouten tensor is constrained by~\eqref{eq:P=0}. Note that the trace of $\hat P$ is proportional to the Ricci scalar.

\paragraph{$d=3$ bulk}
Recall that in section~\ref{sec:subleading} we defined $u^\mu = \hat \ve^\mu{}_{\r\s}v^\mu t^\s$ where $t^\mu$ is the unit vector (for fusion metric) tangent to the line defect. 

We have only two ways to form 2-derivative invariants. The first is to use bulk curvature tensors, of which we only have the Schouten tensor $P$ available (Weyl tensor vanishes in $d=3$). Due to~\eqref{eq:P=0} the only non-zero components are
\be
	\hat P_{tt},\quad \hat P_{tv},\quad \hat P_{tu},
\ee
of which $\hat P_{tt}$ can be expressed in terms of the Ricci scalar $\hat R$ due to $\hat P_{vv}=\hat P_{uu}=0$.

The second option is to take covariant derivatives of $v$. Up to total derivatives, the only two-derivative term is
\be
	\hat\nabla_t^\perp v\. \hat\nabla_t^\perp v=(u\.\nabla_t^\perp v)^2.
\ee

Overall, the 2-derivative contribution to the identity operator part of the effective action for line defects in $d=3$ is
\be
S^{(2)}_\text{eff} = &\int dz\sqrt{\hga}\p{
	a_{2,1}\hat\nabla_t^\perp v\. \hat\nabla_t^\perp v+a_{2,2}\hat R+ia_{2,3}\hat P_{tv}+ia_{2,4}\hat P_{tu}
}.
\label{1d-3d-2nd}
\ee
The Schouten tensor in the above formula can be replaced with the Ricci tensor at the cost of rescaling the Wilson coefficients.

\paragraph{$d\geq 4$ bulk}
Compared to the $d=3$ case the only difference comes from the treatment of bulk curvature invariants. Firstly, for Schouten tensor, the only available components are $\hat P_{tt}$ and $\hat P_{vt}$, but $\hat P_{tt}$ can again be expressed in terms of $\hat R$. 

Secondly, Weyl tensor is now non-zero. If $d>4$ then the only way we can set indices is $C_{vtvt}$ due to the anti-symmetry properties. In $d=4$ we can also form $\tl C_{vtvt}$ where
\be\label{eq:WeylDualTensor}
	\tl C_{\mu\nu\s\r}=\thalf\hat \e_{\mu\nu}{}^{\l\kappa}C_{\lambda\kappa\s\r}.
\ee

Overall, the 2-derivative contribution to the identity operator part of the effective action for line defects in $d\geq 4$ is
\be
S^{(2)}_\text{eff} = &\int dz\sqrt{\hga}\p{
	a_{2,1}\hat\nabla_t^\perp v\. \hat\nabla_t^\perp v+a_{2,2}\hat R+ia_{2,3}\hat P_{tv}+a_{2,4}C_{vtvt}+ia_{2,5}\tl C_{vtvt}
},
\label{1d-4plusd-2nd}
\ee
where the $\tl C_{vtvt}$ has to be omitted for $d>4$. Note that the terms containing the Weyl tensor are zero when the physical metric is conformally-flat. Again, the Schouten tensor in the above formula can be replaced with the Ricci tensor at the cost of rescaling the corresponding Wilson coefficient.

\subsection{Codimension $q=1$ defects ($p>1$)}

In the case of $q=1$ the vector field $v$ is simply the unit normal to $\cD_1$ and thus its covariant derivatives vanish. If the defects $\cD_1$ or $\cD_2$ carry a normal bundle orientation, it might be possible to define a normal unit vector field $n$ on $\cD_1$ using these orientations. In this case, we can form $n\.v=\pm 1$ and all Wilson coefficients may depend on this sign.

The independent components of the Schouten tensor that we can use at 2-derivative order are (taking into account~\eqref{eq:P=0})
\be
	\hat P_{va},\quad \hat P_{ab},
\ee
where as always $\hat P_{va}=v^\mu P_{\mu a}$ and $a,b$ are indices along the defect. However, we do not have any $0$-derivative object to contract the open indices with. For similar reasons, in $d\geq 4$ we cannot form any terms from the Weyl tensor $C$ or (for $d=4$) its dual $\tl C$ defined in~\eqref{eq:WeylDualTensor}, or by using the first order covariant derivatives of $\hp^\mu_{ab}$.

The two-derivative terms therefore have to be constructed from the bulk and defect Ricci scalars $\hat R$ and $\hat{\check{R}}$ or from two copies of $\hp^\mu_{ab}$. Keeping in mind~\eqref{eq:codim1gaussfusion}, we then find the following 2-derivative contribution to the identity operator part of the effective action for codimension $q=1$ defects with $p>1$,
\be
S^{(2)}_\text{eff}=\int d^pz\sqrt{\hga}\p{a_{2,1} \hat R + a_{2,2}\hat{\check{R}}}.
\ee
Note that when $p=2$, the intrinsic Ricci scalar $\hat{\check R}$ is (locally) a total derivative and the corresponding term captures only the topological data about the defect.

\subsection{Codimension $q=2$ defects ($p>1$)}
When $q=2$ we can define a normal vector $u^\mu=\hat\e^{\mu}{}_{\nu}v^\nu$ using the normal bundle Levi-Civita tensor $\hat \e^{\mu}{}_\nu$. The vectors $v$ and $u$ now span the normal bundle. Compared to $q=1$ this allows us to define more terms using the second fundamental form,
\be
	(v\.\hp_{ab})(v\.\hp^{ab}), \quad (u\.\hp_{ab})(u\.\hp^{ab}), \quad (u\.\hp_{ab})(v\.\hp^{ab}), \quad (u\.\hp_{ab})(v\.\hp^{a}{}_c)\hat \e^{bc},
\ee
where in the last term $\hat \e^{bc}$ is the defect Levi-Civita tensor which is available only if $d=4$ and $p=2$. Equation~\eqref{eq:codim2gaussfusion} allows us to eliminate $(u\.\hp_{ab})(u\.\hp^{ab})$ using
\be
	(v\.\hp_{ab})(u\.\hp^{ab})+(v\.\hp_{ab})(u\.\hp^{ab})=\hp_{\mu ab}\hp^{\mu ab}.
\ee

Furthermore, terms involving Weyl tensor are now available. In $d=4$ we can find two independent components (any other suitable components can be expressed in terms of these two)
\be
	C_{vuvu},\quad \tl C_{vuvu},
\ee
where $\tl C$ is defined in~\eqref{eq:WeylDualTensor}. For $d>4$ only $C_{vuvu}$ is available.

Finally, derivatives of $v$ are now nonzero. Since $\hat g_{\mu\nu} v^\mu v^\nu=1$, the only non-trivial components are $u\.\hat \nabla_a^\perp v$ and the only possible term (up to total derivatives) is
\be
	(u\.\hat \nabla_a^\perp v)(u\.\hat \nabla^{\perp a} v)=(\hat \nabla_a^\perp v\.\hat \nabla^{\perp a} v).
 \ee
 
 Overall we find the following 2-derivative contribution to the identity operator part of the effective action for codimension $q=2$ defects with $p>1$,
 \be
 S^{(2)}_\text{eff}=\int dz\sqrt{\hga}\Big(&a_{2,1} (v\.\hp_{ab})(v\.\hp^{ab})+a_{2,2}(u\.\hp_{ab})(v\.\hp^{ab})+ia_{2,3}(u\.\hp_{ab})(v\.\hp^{a}{}_c)\hat \e^{bc}\nn\\
 &+a_{2,4}(\hat \nabla_a^\perp v\.\hat \nabla^{\perp a} v)+ a_{2,5} \hat R + a_{2,6}\hat{\check{R}}+a_{2,7}	C_{vuvu}+ia_{2,8}\tl C_{vuvu}\Big).
 \ee
 The terms involving $\tl C_{vuvu}$ and $(u\.\hp_{ab})(v\.\hp^{a}{}_c)\hat \e^{bc}$ are to be omitted for $d> 4$. Note that when $p=2$, the intrinsic Ricci scalar $\hat{\check R}$ is (locally) a total derivative and the corresponding term captures only the topological data about the defect.

\section{Ansatz for the Weyl anomaly of a surface defect in $d=4$}
\label{app:ansatz}
The initial ansatz for the defect contribution to the Weyl anomaly of a surface defect in a $d=4$ bulk, as used in section~\ref{sec:anomalysurface} is given by:
\be
	\cA^{(1)}_{\cD}(g,\w)=&\int d^2z \sqrt{\gamma}\sum_{i=1}^{22} f_i F_i(\w),\label{eq:WeylAnsatz}
\ee
where:
\begin{multicols}{2}
	\begin{itemize}
		\item $F_1(\w)=\II_\chi^2\w$,
		\item $F_2(\w)=\II_{\bar{\chi}}^2\w$,
		\item $F_3(\w)=\II_\chi\II_{\bar{\chi}}\w$,
		\item $F_4(\w)=\mathring\II_{\chi\z\z}\mathring\II_{\chi\bar\zeta\bar\zeta}\w$,
		\item $F_5(\w)=\mathring\II_{\chi\z\z}\mathring\II_{\bar\chi\bar\zeta\bar\zeta}\w$,
		\item $F_6(\w)=\mathring\II_{\bar\chi\z\z}\mathring\II_{\chi\bar\zeta\bar\zeta}\w$,
		\item $F_7(\w)=\mathring\II_{\bar\chi\z\z}\mathring\II_{\bar\chi\bar\zeta\bar\zeta}\w$,
		\item $F_8(\w)=\check{R}\w$,
		\item $F_9(\w)=R_{\chi\chi}\w$,
		\item $F_{10}(\w)=R_{\chi\chi}\w$,
		\item $F_{11}(\w)=R_{\bar\chi\bar\chi}\w$,
		\item $F_{12}(\w)=C_{\chi\bar\chi\chi\bar\chi}\w$,
		\item $F_{13}(\w)=C_{\chi\bar\chi\zeta\bar\zeta}\w$,
		\item $F_{14}(\w)=\chi_\mu \bar{g}^{ab}\nabla_a\nabla^\perp_b \chi^{\mu}\w$,
		\item $F_{15}(\w)=\bar\chi_\mu \bar{g}^{ab}\nabla_a\nabla^\perp_b \chi^{\mu}\w$,
		\item $F_{16}(\w)=\II_\chi \ptl_\chi \w$,
		\item $F_{17}(\w)=\II_\chi \ptl_{\bar\chi} \w$,
		\item $F_{18}(\w)=\II_{\bar\chi} \ptl_\chi \w$,
		\item $F_{19}(\w)=\II_{\bar\chi} \ptl_{\bar\chi} \w$,
		\item $F_{20}(\w)=\chi^{\mu}\chi^{\nu}\nabla_{\mu}\partial_{\nu}\w$,
		\item $F_{21}(\w)=\chi^{\mu}\bar{\chi}^{\nu}\nabla_{\mu}\partial_{\nu}\w$,
		\item $F_{22}(\w)=\bar{\chi}^{\mu}\bar{\chi}^{\nu}\nabla_{\mu}\partial_{\nu}\w$.
	\end{itemize}
\end{multicols}

\newpage
\bibliographystyle{JHEP}
\bibliography{refs}

\end{document}